\newcommand{\red}{\textcolor{red}}
\newcommand{\be}{\begin{equation}}
\newcommand{\ee}{\end{equation}}
\newcommand{\bea}{\begin{eqnarray}}
\newcommand{\eea}{\end{eqnarray}}
\newcommand{\bean}{\begin{eqnarray*}}
\newcommand{\eean}{\end{eqnarray*}}
\theoremstyle{plain}
\newtheorem{theorem}{Theorem}
\newtheorem{lem}[theorem]{Lemma}
\theoremstyle{definition}
\newtheorem{defn}[theorem]{Definition}
\begin{document}
\title{A solution space for a system of null-state partial differential equations II}

\date{\today}

\author{Steven M. Flores}
\email{steven.flores@helsinki.fi} 
\affiliation{Department of Mathematics \& Statistics, University of New Hampshire, Durham, New Hampshire, 03824,\\
and \\
Department of Mathematics \& Statistics, University of Helsinki, P.O. Box 68, 00014, Finland}

\author{Peter Kleban}
\email{kleban@maine.edu} 
\affiliation{LASST and Department of Physics \& Astronomy, University of Maine, Orono, Maine, 04469-5708, USA}

\begin{abstract}  

This article is the second of four that completely and rigorously characterize a solution space $\mathcal{S}_N$ for a homogeneous system of $2N+3$ linear partial differential equations (PDEs) in $2N$ variables that arises in conformal field theory (CFT) and multiple Schramm-L\"owner evolution (SLE$_\kappa$). The system comprises $2N$ null-state equations and three conformal Ward identities which govern CFT correlation functions of $2N$ one-leg boundary operators.  In the first article \cite{florkleb}, we use methods of analysis and linear algebra to prove that $\dim\mathcal{S}_N\leq C_N$, with $C_N$ the $N$th Catalan number.  The analysis of that article is complete except for the proof of a lemma that it invokes.  The purpose of this article is to provide that proof.

The lemma states that if every interval among $(x_2,x_3),$ $(x_3,x_4),\ldots,(x_{2N-1},x_{2N})$ is a two-leg interval of $F\in\mathcal{S}_N$ (defined in \cite{florkleb}), then $F$ vanishes.  Proving this lemma by contradiction, we show that the existence of such a nonzero function implies the existence of a non-vanishing CFT two-point function involving primary operators with different conformal weights, an impossibility.  This proof (which is rigorous in spite of our occasional reference to CFT) involves two different types of estimates, those that give the asymptotic behavior of $F$ as the length of one interval vanishes, and those that give this behavior as the lengths of two intervals vanish simultaneously.  We derive these estimates by using Green functions to rewrite certain null-state PDEs as integral equations, combining other null-state PDEs to obtain Schauder interior estimates, and then repeatedly integrating the integral equations with these estimates until we obtain optimal bounds.  Estimates in which two interval lengths vanish simultaneously divide into two cases: two adjacent intervals and two non-adjacent intervals.  The analysis of the latter case is similar to that for one vanishing interval length.  In contrast, the analysis of the former case is more complicated, involving a Green function that contains the Jacobi heat kernel as its essential ingredient.  \end{abstract}

\keywords{conformal field theory, Schramm-L\"{o}wner evolution, Jacobi heat kernel}
\maketitle

\section{Introduction}\label{intro}

This article follows the analysis begun in \cite{florkleb} and continued in \cite{florkleb3,florkleb4}.  In this introduction, we state the problem under consideration and summarize the results from \cite{florkleb}.  The introduction \red{I} and appendix \red{A} of \cite{florkleb} explain the origin of this problem in conformal field theory (CFT) \cite{bpz,fms,henkel}, its relation to multiple Schramm-L\"owner evolution (SLE$_\kappa$) \cite{bbk, dub2, graham, kl, sakai}, and its application \cite{dots,gruz,rgbw,bbk,bauber,bpz,c3,c1} to critical lattice models \cite{bax, grim, wu, fk, stan} and random walks \cite{law1, schrsheff, weintru, zcs, madraslade}.

The goal of this article and \cite{florkleb,florkleb3,florkleb4} is to completely and rigorously determine a certain solution space $\mathcal{S}_N$ of the system of $2N$ null-state partial differential equations (PDEs) from CFT,
\be\label{nullstate}\Bigg[\frac{\kappa}{4}\partial_j^2+\sum_{k\neq j}^{2N}\left(\frac{\partial_k}{x_k-x_j}-\frac{(6-\kappa)/2\kappa}{(x_k-x_j)^2}\right)\Bigg]F(\boldsymbol{x})=0,\quad j\in\{1,2,\ldots,2N\},\ee
and three conformal Ward identities from CFT,
\be\label{wardid}\sum_{k=1}^{2N}\partial_kF(\boldsymbol{x})=0,\quad \sum_{k=1}^{2N}\left[x_k\partial_k+\frac{(6-\kappa)}{2\kappa}\right]F(\boldsymbol{x})=0,\quad \sum_{k=1}^{2N}\left[x_k^2\partial_k+\frac{(6-\kappa)x_k}{\kappa}\right]F(\boldsymbol{x})=0,\ee
with $\boldsymbol{x}:=(x_1,x_2,\ldots,x_{2N})$ and $\kappa\in(0,8)$.  The solution space $\mathcal{S}_N$ of interest comprises all (classical) solutions $F:\Omega_0\rightarrow\mathbb{R}$, where
\be\label{components}\Omega_0^M:=\{\boldsymbol{x}\in\mathbb{R}^M\,|\,x_1<x_2<\ldots< x_{M-1}< x_M\},\quad\Omega_0:=\Omega_0^{2N},\ee
such that for each $F\in\mathcal{S}_N$, there exist some positive constants $C$ and $p$ (which we may choose to be as large as needed) such that with $M=2N$,
\be\label{powerlaw} |F(\boldsymbol{x})|\leq C\prod_{i<j}^M|x_j-x_i|^{\mu_{ij}(p)},\quad\text{with}\quad\mu_{ij}(p):=\begin{cases}-p, & |x_j-x_i|<1 \\ +p, & |x_j-x_i|\geq1\end{cases}\quad\text{for all $\boldsymbol{x}\in\Omega_0.$}\ee
(We use this bound to prove lemma \red{3} in \cite{florkleb}, and we use it to prove lemmas \ref{firstlimitlem} and \ref{farlem}--\ref{closelem} of this article below.)  Restricting our attention to $\kappa\in(0,8)$, our goals are as follows:
\begin{enumerate}
\item\label{item1} Rigorously prove that every element of $\mathcal{S}_N$ is a real-valued Coulomb gas solution.  (See definition \red{1} of \cite{florkleb3} and \cite{df1, df2, dub}.)
\item\label{item2} Rigorously prove that $\dim\mathcal{S}_N=C_N$, with $C_N$ the $N$th Catalan number.
\item\label{item3} Argue that $\mathcal{S}_N$ has a basis consisting of $C_N$ \emph{connectivity weights} (physical quantities defined in the introduction \red{I} of \cite{florkleb}) and find formulas for all of the connectivity weights.
\end{enumerate}

In \cite{florkleb}, we used certain elements of the dual space $\mathcal{S}_N^*$ to prove that $\dim\mathcal{S}_N\leq C_N$, and in \cite{florkleb3}, we use these linear functionals again to complete goals \ref{item1}--\ref{item3}.  Hence, the work of both articles relies on the veracity of lemma \red{14} of \cite{florkleb}, and the purpose of this article is to prove this lemma, renamed lemma \ref{alltwoleglem} in this article below.

We briefly recall the methodology used to prove a weaker statement of item \ref{item1}, namely that $\dim\mathcal{S}_N\leq C_N$, in \cite{florkleb}.  To construct the elements of $\mathcal{S}_N^*$, we prove in \cite{florkleb} that for all $F\in\mathcal{S}_N$ and all $i\in\{2,3,\ldots,2N\}$, the limit
\be\label{lim}\bar{\ell}_1F(x_1,x_2,\ldots,x_{i-1},x_{i+1},\ldots,x_{2N})\,\,\,:=\lim_{x_i\rightarrow x_{i-1}}(x_i-x_{i-1})^{2\theta_1}F(\boldsymbol{x}),\quad\theta_1:=\frac{6-\kappa}{2\kappa}\ee
exists, is independent of $x_{i-1}$, and (after implicitly taking the trivial limit $x_{i-1}\rightarrow x_{i-2}$) is an element of $\mathcal{S}_{N-1}$.  (Another type of limit $\underline{\ell}_1$ fixes $x_{2N}=-x_1=R$ and sends $R\rightarrow\infty$ with the same consequences, and we denote either as $\ell_1$.)  (In \cite{florkleb, florkleb3, florkleb4}, we write $x_{i+1}\rightarrow x_i$, but only in this article, we reduce the index by one for convenience and write $x_{i-1}\rightarrow x_i$ instead.)  Following $\ell_1$, we apply $N-1$ more such limits $\ell_2$, $\ell_3,\ldots,\ell_N$ sequentially to (\ref{lim}), sending $F$ to an element of $\mathcal{S}_0:=\mathbb{R}$.

There are many ways that we may order a sequence of these limits, and in \cite{florkleb}, we list the conditions necessary to avoid various inconsistencies such as having the limit $\bar{\ell}_j$ that sends $x_{i_{2j}}\rightarrow x_{i_{2j-1}}$ precede the limit $\bar{\ell}_k$ that sends $x_{i_{2k}}\rightarrow x_{i_{2k-1}}$ if $x_{i_{2j-1}}<x_{i_{2k-1}}<x_{i_{2k}}<x_{i_{2j}}$.  We call the linear functional $\mathscr{L}:\mathcal{S}_N\rightarrow\mathbb{R}$ with $\mathscr{L}:=\ell_{j_N}\ell_{j_{N-1}}\dotsm\ell_{j_2}\ell_{j_1}$ and with the limits ordered to fulfill these conditions an \emph{allowable sequence of limits}.  Because it is linear, an allowable sequence of limits is an element of the dual space $\mathcal{S}_N^*$.  

In \cite{florkleb}, we further prove that two allowable sequences $\mathscr{L}$ and $\mathscr{L}'$, which bring together the same pairs of coordinates in different orders, have $\mathscr{L}'F=\mathscr{L}F$ for all $F\in\mathcal{S}_N$.  This fact establishes an equivalence relation among the allowable sequences of limits that partitioned them into $C_N$ equivalence classes $[\mathscr{L}_1]$, $[\mathscr{L}_2],\ldots,[\mathscr{L}_{C_N}]$, again with $C_N$ the $N$th Catalan number.

We conclude our analysis in \cite{florkleb} by proving that the linear mapping $v:\mathcal{S}_N\rightarrow\mathbb{R}^{C_N}$ with $v(F)_\varsigma:=[\mathscr{L}_\varsigma]F$ for each $\varsigma\in\{1,2,\ldots,C_N\}$ is injective and therefore $\dim\mathcal{S}_N \leq C_N$, assuming lemma \ref{alltwoleglem} below (denoted lemma \red{14} in \cite{florkleb}).  Lemma \ref{alltwoleglem} states that if the limit (\ref{lim}) equals zero for (i.e., $(x_{i-1},x_i)$ is a \emph{two-leg interval} of) some $F\in\mathcal{S}_N$ and every $i\in\{3,4,\ldots,2N\}$, then $F$ is zero.  In \cite{florkleb3}, we use the mapping $v$ once more to achieve goals \ref{item1} and \ref{item2} stated above.  Hence, what remains to achieve goals \ref{item1} and \ref{item2} is to prove lemma \ref{alltwoleglem}, so that is the purpose of this article.

\subsection{Methodology}\label{Methodology}

In this section, we outline our method for proving lemma \ref{alltwoleglem}, stated below.  The application of CFT to the study of statistical lattice models and loop models at the critical point motivates our method of proof, and we give a very brief survey of this application in appendix \red{A} of \cite{florkleb}.

The intuition behind our proof of the conjecture goes as follows.  We let $\psi_s(x)$ with $s\in\mathbb{Z}^+$ (resp.\ $\psi_0(x)=\mathbf{1}$) denote the $s$-leg boundary operator at $x\in\mathbb{R}$ (resp.\ identity operator) (see appendix \red{A} of \cite{florkleb}), and we denote its conformal weight by 
\be\label{bdysleg}
\theta_s=\frac{s(2s+4-\kappa)}{2\kappa},\quad s\in\mathbb{Z}^+\cup\{0\}.
\ee
Now, supposing that $F\in\mathcal{S}_N\setminus\{0\}$, we wish to study its behavior as $x_{2N}\rightarrow x_{2N-1}$, then $x_{2N-1}\rightarrow x_{2N-2}$, etc., down to $x_3\rightarrow x_2$.  To this end, we identify $F$, a solution of (\ref{nullstate}, \ref{wardid}), with the $2N$-point CFT correlation function
\be\label{firstcorrfunc}F(\boldsymbol{x})=\langle\psi_1(x_1)\psi_1(x_2)\dotsm\psi_1(x_{2N})\rangle,\ee
and we consider the operator product expansion (OPE) of the adjacent one-leg boundary operators in (\ref{firstcorrfunc}).  We recall the OPE of $\psi_1$ with $\psi_s$ as (with $C_{1s}^{s\pm1}$ and $\beta_{1s}^{s\pm1}$ the OPE constants associated with the $\psi_1\times\psi_s=\psi_{s\pm1}$ fusion channel)
\bea\psi_1(x)\psi_s(y)&\underset{y\rightarrow x}{\sim}&C_{1s}^{s-1}(y-x)^{-\theta_1-\theta_s+\theta_{s-1}}\psi_{s-1}(x)+C_{1s}^{s-1}\beta_{1s}^{s-1}(y-x)^{-\theta_1-\theta_s+\theta_{s-1}+1}\partial\psi_{s-1}(x)+\dotsm\nonumber\\
&+&C_{1s}^{s+1}(y-x)^{-\theta_1-\theta_s+\theta_{s+1}}\psi_{s+1}(x)+C_{1s}^{s+1}\beta_{1s}^{s+1}(y-x)^{-\theta_1-\theta_s+\theta_{s+1}+1}\partial\psi_{s+1}(x)+\dotsm.\eea
Now, if all of $(x_2,x_3),$ $(x_3,x_4),\ldots,(x_{2N-2},x_{2N-1})$, and $(x_{2N-1},x_{2N})$ are two-leg intervals of $F$, then we expect that only the $\psi_{2N-1}$ conformal family ultimately remains in the OPE (figure \ref{Combine}).  Hence, only the two-point function
\be\label{2pt}F_{2N-1}(x_1,x_2):=C_{11}^2C_{12}^3\dotsm C_{1,2N-2}^{2N-1}\langle\psi_1(x_1)\psi_{2N-1}(x_2)\rangle\ee
remains at leading order, and it cannot be zero by construction.  On the other hand, (\ref{2pt}) must vanish because the conformal weights of the primary operators within it are different.  From this contradiction, we conclude that $F=0$.

Now we describe our method in more formal terms.  Starting with (\ref{firstcorrfunc}), the OPE of $\psi_1(x_{2N})$ with $\psi_1(x_{2N-1})$ may only contain either the conformal family for $\psi_0$ (the identity) or $\psi_2$ or both conformal families, and the limit (\ref{lim})
\be\label{10lim}\lim_{x_{2N}\rightarrow x_{2N-1}}(x_{2N}-x_{2N-1})^{\theta_1+\theta_1-\theta_0}F(\boldsymbol{x}),\quad\boldsymbol{x}\in\Omega_0^{2N}\ee
partially determines this OPE content.  Indeed, if (\ref{10lim}) does not (resp.\ does) vanish, then the mentioned OPE does (resp.\ does not) contain the identity family.  In the case that (\ref{10lim}) vanishes and $F$ is not zero, then CFT anticipates  
\begin{multline}\label{112ope}\overbrace{\langle\psi_1(x_1)\psi_1(x_2)\dotsm\psi_1(x_{2N-2})\psi_1(x_{2N-1})\psi_1(x_{2N})\rangle}^{F}\\
\underset{x_{2N}\rightarrow x_{2N-1}}{\sim}(x_{2N}-x_{2N-1})^{-\theta_1-\theta_1+\theta_2}\underbrace{C_{11}^2\langle\psi_1(x_1)\psi_1(x_2)\dotsm\psi_1(x_{2N-2})\psi_2(x_{2N-1})\rangle}_{F_2}.\end{multline}
We isolate  $F_2$, shown above, by taking the limit (here, $\pi_i$ is a projection map that removes the $i$th coordinate from $\boldsymbol{x}$, see definition \red{2} of \cite{florkleb})
\be\label{12lim}(F_2\circ\pi_{2N})(\boldsymbol{x})\,\,\,:=\lim_{x_{2N}\rightarrow x_{2N-1}}(x_{2N}-x_{2N-1})^{\theta_1+\theta_1-\theta_2}F(\boldsymbol{x}),\quad\boldsymbol{x}\in\Omega_0^{2N},\ee 
and because we have identified it with the CFT correlation function on the right side of (\ref{112ope}), we anticipate that $F_2$  satisfies the associated null-state PDEs with $j\in\{1,2,\ldots,2N-2\}$,
\be\label{firstnullstatemod}\Bigg[\frac{\kappa}{4}\partial_j^2+\sum_{k\neq j}^{2N-2}\left(\frac{\partial_k}{x_k-x_j}-\frac{\theta_1}{(x_k-x_j)^2}\right)+\frac{\partial_{2N-1}}{x_{2N-1}-x_j}-\frac{\theta_2}{(x_{2N-1}-x_j)^2}\Bigg](F_2\circ\pi_{2N})(\boldsymbol{x})=0,\ee
and conformal Ward identities,
\be\label{firstwardidmod}\begin{gathered}\sum_{k=1}^{2N-1}\partial_k(F_2\circ\pi_{2N})(\boldsymbol{x})=0,\quad
\Bigg[\sum_{k=1}^{2N-2}(x_k\partial_k+\theta_1)+x_{2N-1}\partial_{2N-1}+\theta_2\Bigg](F_2\circ\pi_{2N})(\boldsymbol{x})=0,\\
\Bigg[\sum_{k=1}^{2N-2}(x_k^2\partial_k+2\theta_1x_k)+x_{2N-1}^2\partial_{2N-1}+2\theta_2x_{2N-1}\Bigg](F_2\circ\pi_{2N})(\boldsymbol{x})=0.\end{gathered}\ee  
We derive these anticipated properties of $F\in\mathcal{S}_N\setminus\{0\}$ in the proof of lemmas \ref{secondlimitlem} and \ref{pdelem} below by showing that if the limit (\ref{10lim}) vanishes, then the limit $F_2$ (\ref{12lim}) exists, is not zero, and satisfies (\ref{firstnullstatemod}, \ref{firstwardidmod}).  

Now we suppose that for some $F\in\mathcal{S}_N\setminus\{0\}$, we may repeat this analysis to sequentially generate a collection of functions $\{F_1:=F, F_2, F_3,\ldots\}$.  To construct $F_{s+1}:\Omega_0^{2N-s}\rightarrow\mathbb{R}$ from $F_s:\Omega_0^{2N-s+1}\rightarrow\mathbb{R}$, we first show that the limit
\be\label{ss-1lim}\lim_{\substack{x_{2N-s+1}\\\quad\rightarrow x_{2N-s}}}(x_{2N-s+1}-x_{2N-s})^{\theta_1+\theta_s-\theta_{s-1}}F_s(\boldsymbol{x}),\quad\boldsymbol{x}\in\Omega_0^{2N-s+1}\ee
exists.  Next, to anticipate the asymptotic behavior of $F_s$ as $x_{2N-s+1}\rightarrow x_{2N-s}$, we identify $F_s$ with the $2N$-point CFT correlation function
\be\label{corrfunc}F_s(\boldsymbol{x})=C_{11}^2C_{12}^3\dotsm C_{1,s-1}^s\langle\psi_1(x_1)\psi_1(x_2)\dotsm\psi_1(x_{2N-s})\psi_s(x_{2N-s+1})\rangle,\quad\boldsymbol{x}\in\Omega_0^{2N-s+1}\ee
and consider the OPE of $\psi_s(x_{2N-s+1})$ with $\psi_1(x_{2N-s})$.  This OPE may only contain either the conformal family for $\psi_{s-1}$ or $\psi_{s+1}$ or both conformal families.  If (\ref{ss-1lim}) does vanish, then CFT anticipates the behavior
\begin{multline}\label{1ss+1ope}\overbrace{C_{11}^2C_{12}^3\dotsm C_{1,s-1}^s\langle\psi_1(x_1)\psi_1(x_2)\dotsm\psi_1(x_{2N-s-1})\psi_1(x_{2N-s})\psi_s(x_{2N-s+1})\rangle}^{F_s}\\
\underset{\substack{x_{2N-s+1}\\\qquad\rightarrow x_{2N-s}}}{\sim}(x_{2N-s+1}-x_{2N-s})^{-\theta_1-\theta_s+\theta_{s+1}}\underbrace{C_{11}^2C_{12}^3\dotsm C_{1,s}^{s+1}\langle\psi_1(x_1)\psi_1(x_2)\dotsm\psi_1(x_{2N-s-1})\psi_{s+1}(x_{2N-s})\rangle}_{F_{s+1}}.\end{multline}
That is, the OPE for $\psi_s(x_{2N-s+1})$ with $\psi_1(x_{2N-s})$ does not contain the $\psi_{s-1}$ family if (\ref{ss-1lim}) vanishes.  Equation (\ref{1ss+1ope}) relates $F_s$ to the function $F_{s+1}$, which we wish to obtain from it, and we isolate the latter by taking the limit
\be\label{ss+1lim}(F_{s+1}\circ\pi_{2N-s+1})(\boldsymbol{x})\,\,\,:=\lim_{\substack{x_{2N-s+1}\\\qquad\rightarrow x_{2N-s}}}(x_{2N-s+1}-x_{2N-s})^{\theta_1+\theta_s-\theta_{s+1}}F_s(\boldsymbol{x}),\quad\boldsymbol{x}\in\Omega_0^{2N-s+1}.\ee 
Because we have identified it with the CFT correlation function on the right side of (\ref{1ss+1ope}), we anticipate that $F_{s+1}$  satisfies the null-state PDEs with $j\in\{1,2,\ldots,2N-s\}$,
\be\label{nullstatemods}\Bigg[\frac{\kappa}{4}\partial_j^2+\sum_{k\neq j}^{2N-s-1}\left(\frac{\partial_k}{x_k-x_j}-\frac{\theta_1}{(x_k-x_j)^2}\right)+\frac{\partial_{2N-s}}{x_{2N-s}-x_j}-\frac{\theta_{s+1}}{(x_{2N-s}-x_j)^2}\Bigg](F_{s+1}\circ\pi_{2N-s+1})(\boldsymbol{x})=0,\ee
and conformal Ward identities,
\be\label{wardidmods}\begin{gathered}\sum_{k=1}^{2N-s}\partial_k(F_{s+1}\circ\pi_{2N-s+1})(\boldsymbol{x})=0,\quad
\Bigg[\sum_{k=1}^{2N-s-1}(x_k\partial_k+\theta_1)+x_{2N-s}\partial_{2N-s}+\theta_{s+1}\Bigg](F_{s+1}\circ\pi_{2N-s+1})(\boldsymbol{x})=0,\\
\Bigg[\sum_{k=1}^{2N-s-1}(x_k^2\partial_k+2\theta_1x_k)+x_{2N-s}^2\partial_{2N-s}+2\theta_{s+1}x_{2N-s}\Bigg](F_{s+1}\circ\pi_{2N-s+1})(\boldsymbol{x})=0,\end{gathered}\ee
governing that correlation function.  We derive these anticipated properties of $F\in\mathcal{S}_N$ in the proof of lemma \ref{secondlimitlem} by showing that if the limit (\ref{ss-1lim}) vanishes, then the limit $F_{s+1}$ (\ref{ss+1lim}) exists, is not zero, and satisfies (\ref{nullstatemods}, \ref{wardidmods}). 

Now we seek a sufficient condition for $F\in\mathcal{S}_N\setminus\{0\}$ that guarantees this construction of $F_s$ for each $s\in\{2,3,\ldots,2N-1\}$.  For every such $s$, the condition that $(x_{2N-j+1},x_{2N-j})$ is a two-leg interval of $F$ for each $j\in\{1,2,\ldots,s-1\}$, i.e.,
\be\label{jlim}\lim_{\substack{x_{2N-j+1}\\\qquad\rightarrow x_{2N-j}}}(x_{2N-j+1}-x_{2N-j})^{\theta_1+\theta_1-\theta_0}F(\boldsymbol{x})=0,\quad\boldsymbol{x}\in\Omega_0^{2N},\quad \text{for all $j\in\{1,2,\ldots,s-1\}$}\ee
seems sufficient for this reason.  If, for example, $(x_{2N-2},x_{2N-1})$ and $(x_{2N-1},x_{2N})$ are two-leg intervals of $F$, then the OPE for $\psi_1(x_{2N})\times\psi_1(x_{2N-1})\times\psi_1(x_{2N-2})$ should contain only the $\psi_3$ conformal family.  If true, then we expect
\be\label{2claim}\begin{cases} (x_{2N\hphantom{-0}}-x_{2N-1})^{\theta_1+\theta_1-\theta_0}F(\boldsymbol{x})& \xrightarrow[\substack{x_{2N\hphantom{-0}}\\\qquad\rightarrow x_{2N-1}}]{}0 \\ (x_{2N-1}-x_{2N-2})^{\theta_1+\theta_1-\theta_0}F(\boldsymbol{x})& \xrightarrow[\substack{x_{2N-1}\\\qquad\rightarrow x_{2N-2}}]{}0\end{cases}\quad\Longrightarrow\quad(x_{2N-1}-x_{2N-2})^{\theta_1+\theta_2-\theta_1}F_2(\boldsymbol{x}) \xrightarrow[\substack{x_{2N-1}\\\qquad\rightarrow x_{2N-2}}]{}0, \ee
and we may construct $F_3$.  Similarly, if each $(x_{2N-j+1},x_{2N-j})$ with $j\in\{1,2,\ldots,s-1\}$ is a two-leg interval of $F$, then a multiple-SLE$_\kappa$ argument suggests that the OPE for the fusion induced by the limits $x_{2N}\rightarrow x_{2N-1}\rightarrow\ldots\rightarrow x_{2N-s+1}$,
\be\psi_1(x_{2N})\times\psi_1(x_{2N-2})\times\dotsm\times\psi_1(x_{2N-s+1}),\ee
should contain only the conformal family for $\psi_s$.  Indeed, no two multiple-SLE$_\kappa$ curves anchored to endpoints of these intervals interconnect (figure \ref{Joined}), so pulling these endpoints together anchors all $s$ curves to $x_{2N-s+1}$.  If true, then
\be\label{construction}\left\{\begin{array}{lll} (x_{2N\hphantom{-0}}-x_{2N-1})^{\theta_1+\theta_1-\theta_0}F(\boldsymbol{x})&\xrightarrow[\substack{x_{2N\hphantom{-0}}\\\qquad\rightarrow x_{2N-1}}]{}&0 \\ (x_{2N-1}-x_{2N-2})^{\theta_1+\theta_1-\theta_0}F(\boldsymbol{x})&\xrightarrow[\substack{x_{2N-1}\\\qquad\rightarrow x_{2N-2}}]{}&0 \\ 
\qquad\qquad\vdots & \\ \\
 (x_{2N-s+2}-x_{2N-s+1})^{\theta_1+\theta_1-\theta_0}F(\boldsymbol{x})&\xrightarrow[\substack{x_{2N-s+2}\\\qquad\rightarrow x_{2N-s+1}}]{}&0\end{array}\right.\quad\Longrightarrow\quad\begin{aligned}(x_{2N-s+2}-&x_{2N-s+1})^{\theta_1+\theta_{s-1}-\theta_{s-2}} \\ 
& \times F_{s-1}(\boldsymbol{x}) \xrightarrow[\substack{x_{2N-s+2}\\\qquad\rightarrow x_{2N-s+1}}]{}0,\end{aligned}\ee
and we may construct $F_s$.  Indeed, the first $k$ conditions from the top downward on the left side of (\ref{construction}) guarantee the construction of $F_{k+1}$, and we let $k$ go from one to $s-1$.  We explain this construction further in section \ref{twointervals} below.

\begin{figure}[t]
\centering
\includegraphics[scale=0.27]{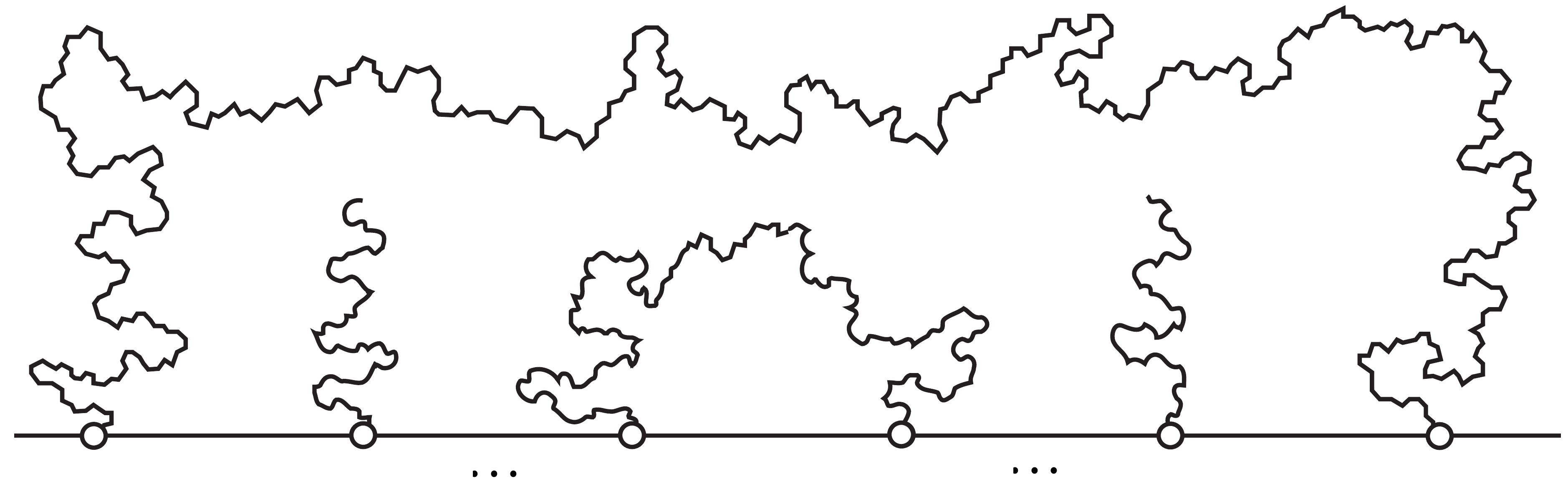}
\caption{If all of the above intervals are two-leg intervals and a multiple-SLE$_\kappa$ curve joins two of their collective endpoints, then at least one other multiple-SLE$_\kappa$ curve joins two endpoints of a single two-leg interval, an impossibility.  See figure \ref{Combine} next.}
\label{Joined}
\end{figure}

\begin{figure}[b]
\centering
\includegraphics[scale=0.27]{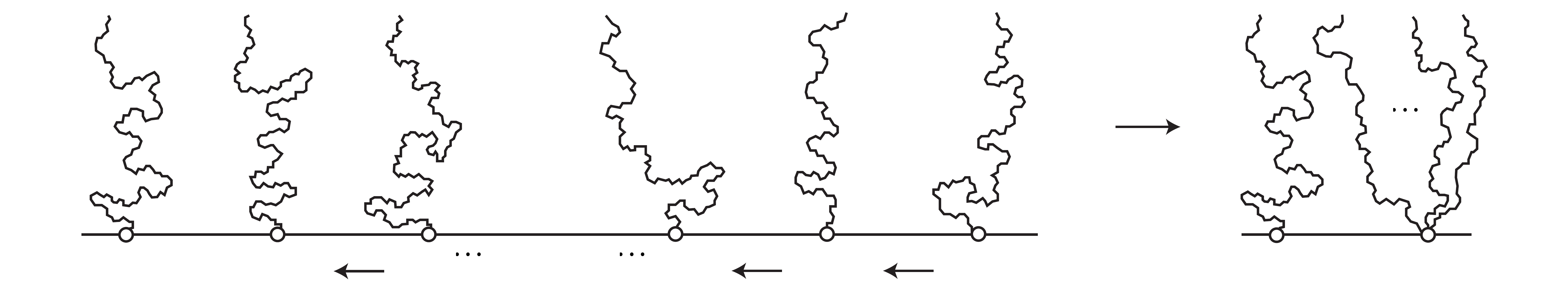}
\caption{If all of the above intervals are two-leg intervals, then no multiple-SLE$_\kappa$ curves may join any of their collective endpoints.  Thus, we may pull all of the rightmost curves together at their base, falsely implying that $\langle\psi_1(x_1)\psi_{2N-1}(x_2)\rangle$ is not zero.}
\label{Combine}
\end{figure}

Finally, if all of $(x_2,x_3),$ $(x_3,x_4),\ldots,(x_{2N-2},x_{2N-1})$, and $(x_{2N-1},x_{2N})$ are two-leg intervals of $F\in\mathcal{S}_N\setminus\{0\}$ and condition (\ref{construction}) is true for all $s\in\{1,2,\ldots,2N-1\}$, then we may construct the two-point function $F_{2N-1}$ of (\ref{2pt}) from $F$ (figure \ref{Combine}).  By construction, $F_{2N-1}$ is not zero but also satisfies the system (\ref{wardidmods}) with $s=2N-2$, whose only solution is zero.  This contradiction implies our main result:
\begin{restatable}{lem}{alltwoleglem}\label{alltwoleglem}  Suppose that $\kappa\in(0,8)$ and $F\in\mathcal{S}_N$ with $N>1$.  If all of $(x_2,x_3),$ $(x_3,x_4),\ldots,(x_{2N-2},x_{2N-1})$, and $(x_{2N-1},x_{2N})$  are two-leg intervals of $F$, then $F=0$.
\end{restatable}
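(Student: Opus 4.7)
The plan is to proceed by contradiction. Suppose there exists $F\in\mathcal{S}_N\setminus\{0\}$ such that every interval $(x_{i-1},x_i)$ with $i\in\{3,4,\ldots,2N\}$ is a two-leg interval of $F$. I would iteratively construct functions $F_s:\Omega_0^{2N-s+1}\to\mathbb{R}$ for $s=1,2,\ldots,2N-1$, starting from $F_1:=F$ and, at each step, defining
\[
(F_{s+1}\circ\pi_{2N-s+1})(\boldsymbol{x}):=\lim_{x_{2N-s+1}\to x_{2N-s}}(x_{2N-s+1}-x_{2N-s})^{\theta_1+\theta_s-\theta_{s+1}}F_s(\boldsymbol{x}),
\]
mirroring extraction of the $\psi_{s+1}$ channel in the OPE of $\psi_1(x_{2N-s})$ with the fictitious $\psi_s(x_{2N-s+1})$ represented by $F_s$. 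At the end, $F_{2N-1}$ is nonzero by construction but satisfies the reduced Ward identities (\ref{wardidmods}) with $s=2N-2$, which are exactly those governing a two-point function $\langle\psi_1(x_1)\psi_{2N-1}(x_2)\rangle$ of two primary operators with distinct conformal weights $\theta_1\neq\theta_{2N-1}$. Such a function is forced to vanish identically, and this contradiction yields $F=0$.

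Making this rigorous requires three ingredients at each level $s$. First, I need to show that the limit defining $F_{s+1}$ exists, which is the level-$s$ analogue of the limit $\bar\ell_1$ of (\ref{lim}) from \cite{florkleb}. The strategy is to use the $j=2N-s$ null-state PDE satisfied by $F_s$ to rewrite $F_s$ as the solution of a one-variable integral equation in $x_{2N-s+1}$ against a suitable Green function, combine the remaining null-state PDEs to obtain Schauder interior estimates for the derivatives appearing in that integral equation, and bootstrap the resulting integral inequality until one extracts the sharp boundary asymptotic. Second, I need to verify that $F_{s+1}$ inherits the reduced system (\ref{nullstatemods}, \ref{wardidmods}) with dimension $\theta_{s+1}$ at the merged coordinate. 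This is a controlled passage to the limit in the PDEs for $F_s$, and it works because the prefactor's exponent $\theta_1+\theta_s-\theta_{s+1}$ is calibrated exactly to promote the conformal weight at the surviving coordinate from $\theta_s$ to $\theta_{s+1}$.

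The third and most delicate step is to propagate the two-leg vanishing hypothesis from $F$ along the construction, so that the next iteration is licensed. Concretely, I need to verify the implication (\ref{construction}): the two-leg vanishing of $F$ on the first $s-1$ rightmost intervals must force
\[
(x_{2N-s+2}-x_{2N-s+1})^{\theta_1+\theta_{s-1}-\theta_{s-2}}F_{s-1}(\boldsymbol{x})\xrightarrow[x_{2N-s+2}\to x_{2N-s+1}]{}0,
\]
with the appropriate shifted exponent. This is honestly a joint-limit statement, with two intervals shrinking simultaneously, and it naturally splits into two regimes. When the two small intervals are non-adjacent, the two corresponding null-state PDEs nearly decouple and the one-interval estimates of the first ingredient above can be iterated in each small variable independently. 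When the two small intervals are adjacent, however, the problem becomes genuinely two-dimensional: combining the two relevant null-state PDEs on the triangular domain parametrized by the two small differences yields a singular elliptic operator of Jacobi type, whose Green function is built from the Jacobi heat kernel. The main obstacle of the proof is therefore to execute the analogous bootstrap in this Jacobi setting, using the sharp short-time asymptotics of the Jacobi heat kernel together with Schauder interior estimates from the second ingredient to obtain optimal bounds on $F$ in the doubly-small regime. Once these joint bounds are established, the vanishing condition propagates to each $F_s$ as needed, the inductive construction runs all the way to $s=2N-1$, and the contradiction with the forced vanishing of the two-point function completes the proof.
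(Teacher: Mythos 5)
Your proposal is correct and follows essentially the same route as the paper: the same iterative construction of $F_2,\ldots,F_{2N-1}$ via the $\Delta^+$-channel limits (the paper's lemmas \ref{firstlimitlem}--\ref{pdelem}, proved by Green-function integral equations, Schauder interior estimates, and bootstrapping), the same split of the joint-limit propagation step into non-adjacent and adjacent collapsing intervals (lemmas \ref{farlem} and \ref{firstcloselem}--\ref{closelem}, the latter via the Jacobi heat kernel and its sharp short-time estimates), and the same terminal contradiction that a nonzero two-point function would have to satisfy the Ward identities with unequal weights $\theta_1\neq\theta_{2N-1}$ and hence vanish.
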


Our proof of this lemma follows the strategy presented above, and much of it mirrors the proofs presented in \cite{florkleb} with minor adjustments.  However, its most important ingredient, justifying (\ref{construction}), is not analogous with anything presented in \cite{florkleb} because it requires us to study the behavior of the functions $F_1$, $F_2,\ldots, F_{2N-1}$ as we collapse two \emph{adjacent} intervals \emph{simultaneously}.  To obtain this estimate, we construct a Green function analogous with that (\ref{Jgreen}) used in the case in which we collapse one interval, except that the new Green function depends on four variables rather than two.  Equation (\ref{Greenfunc}) gives this Green function, which happily leads to the precise estimates for $F_1$, $F_2,\ldots, F_{2N-1}$ needed in order to prove (\ref{construction}) for all $s\in\{1,2,\ldots,2N-1\}$.  Interestingly, the essential part of this Green function is the Jacobi heat kernel \cite{nowsj1, nowsj2}.

Because our analysis is rigorous, none of our proofs actually relies on the interpretation of various functions as CFT correlation functions.  Instead, they only assume that such various functions satisfy certain systems of PDEs, the same PDEs that the correlation functions with which we identify them satisfy.

\subsection{Organization} 
This article is organized as follows.  Section \ref{prelim} establishes some preliminary estimates that we need for the proof of lemma \ref{alltwoleglem} as outlined in section \ref{Methodology}.  Section \ref{oneinterval} gives estimates for the behavior of solutions of the system (\ref{nullstatemods}, \ref{wardidmods}) as we collapse just one interval among $(x_1,x_2)$, $(x_2,x_3),\ldots,(x_{2N-s-1},x_{2N-s})$, and $(x_{2N-s},x_{2N-s+1})$.  The system (\ref{nullstatemods}, \ref{wardidmods}) with $s=0$ is identical to the original system (\ref{nullstate}, \ref{wardid}) at the heart of this article and its relatives \cite{florkleb, florkleb3, florkleb4}.  We show that the power-law behavior as $x_i\rightarrow x_{i-1}$ of solutions $F$ of either system is
\be\label{normalpowers}F(\boldsymbol{x})\underset{x_i\rightarrow x_{i-1}}{\sim}(x_i-x_{i-1})^{-\theta_1-\theta_1+\theta_0}(F_0\circ\pi_i)(\boldsymbol{x})\,\,+\,\,\dotsm\,\,+(x_i-x_{i-1})^{-\theta_1-\theta_1+\theta_2}(F_2\circ\pi_i)(\boldsymbol{x})\,\,+\,\,\dotsm\ee
for all $i\in\{2,3,\ldots,2N\}$ and some functions $F_0$ and $F_2$, in agreement with CFT predictions.  (We derive the first term on the right side of (\ref{normalpowers}) for functions $F$ satisfying (\ref{nullstate}, \ref{wardid}) in lemma \red{4} of \cite{florkleb}.)

Now, if $s>1$, then the system (\ref{nullstatemods}, \ref{wardidmods}) with $s$ replaced by $s-1$ differs from the original system (\ref{nullstate}, \ref{wardid}) because, unlike the latter, the rightmost point $x_{2N-s+1}$ of the former has conformal weight $\theta_s$ rather than $\theta_1$.  We show that, while (\ref{normalpowers}) still gives the power-law behavior of this former system's solutions $F_s$ as $x_i\rightarrow x_{i-1}$ for $i<2N-s+1$, the behavior of $F_s$ as $x_{2N-s+1}\rightarrow x_{2N-s}$ is
\begin{multline}\label{abnormalpowers}F_s(\boldsymbol{x})\underset{\substack{x_{2N-s+1}\\ \rightarrow x_{2N-s}}}{\sim}(x_{2N-s+1}-x_{2N-s})^{-\theta_1-\theta_s+\theta_{s-1}}(F_{s-1}\circ\pi_{2N-s+1})(\boldsymbol{x})\,\,+\,\,\dotsm\\
\,\,+\,\,(x_{2N-s+1}-x_{2N-s})^{-\theta_1-\theta_s+\theta_{s+1}}(F_{s+1}\circ\pi_{2N-s+1})(\boldsymbol{x})\,\,+\,\,\dotsm\end{multline}
for some functions $F_{s-1}$ and $F_{s+1}$, in agreement with CFT predictions.  We derive the first and second terms on the right side of (\ref{normalpowers}, \ref{abnormalpowers}) in the proofs of lemmas \ref{firstlimitlem} and \ref{secondlimitlem} respectively.  Furthermore, we prove in lemmas \ref{secondlimitlem} and \ref{pdelem} respectively that if $F_{s-1}$, appearing in the right side of (\ref{abnormalpowers}), is zero, then $F_{s+1}$ is nonzero and satisfies (\ref{nullstatemods}, \ref{wardidmods}).  

To obtain these results, we use a Green function (\ref{Jgreen}) with two variables to express the null-state PDE centered on one of the interval's endpoints as an integral equation.  This equation contains derivatives of $F$ with respect to variables not involved with the Green function.  Then, we use the other null-state PDEs and conformal Ward identities to construct an elliptic PDE, that implies Schauder interior estimates that bound these derivatives by $F$ itself. Repeated integration of the integral equation successively improves bounds on the growth of $F$ as the interval length vanishes, until we reach an optimal bound.  This summarizes the content of section \ref{oneinterval}.

Section \ref{twointervals} concerns estimates for the behavior of solutions of (\ref{nullstatemods}, \ref{wardidmods}) as we collapse two intervals simultaneously.  Here, there are two cases to consider, depending on whether the intervals are not (resp.\ are) adjacent to one another.  Lemma \ref{farlem} (resp.\ lemma \ref{closelem}) gives an estimate for the former (resp.\ latter) case.  The difference in the analysis between these two cases is considerable.  In the former, we employ the analysis of  lemmas \ref{firstlimitlem} and \ref{secondlimitlem} twice, once for each collapsing interval.   We use the Green function (\ref{Jgreen}) in two variables both times, with these variables  related to the length of the corresponding interval.  However, this strategy cannot be applied to adjacent collapsing intervals.  Indeed, this case  leads to a PDE in two variables associated with the lengths of the two adjacent intervals.  The corresponding Green function (\ref{Greenfunc}) therefore depends on four variables.  Interestingly, this Green function factors into power functions multiplying the Jacobi heat kernel \cite{nowsj1, nowsj2}, which we then use to obtain the desired estimates for $F$ in this last case.

In section \ref{theproof}, we use the mentioned estimates of section \ref{prelim} to complete the proof of lemma \ref{alltwoleglem} as described in section \ref{Methodology} above.  Our strategy employs the two-interval estimates of section \ref{twointervals} to justify the claims (\ref{2claim}, \ref{construction}) that allow us to construct the collection of functions $\{F_1,F_2,\ldots,F_{2N-1}\}$ if we assume that the lemma is false.

\section{Preliminary estimates}\label{prelim}

In this section, we establish some preliminary estimates that we need for the proof of lemma \ref{alltwoleglem} in section \ref{theproof}.  Below, we frequently use the KPZ formula \cite{gruz, kpz}, which relates conformal weights $ d$ of CFT primary operators (expressed in terms of the SLE parameter $\kappa$) in a flat metric with conformal weights $\Delta^+( d)$ in the fluctuating metrics of two-dimensional quantum gravity.
\begin{defn}\label{kpzdefn} For each $\kappa\in(0,8)$, we call the function $\Delta^\pm:\mathbb{R}^+\rightarrow\mathbb{R}$, with the formula
\be\label{kpz}\Delta^\pm( d):=\frac{\kappa-4\pm\sqrt{(\kappa-4)^2+16\kappa d}}{2\kappa},\ee
the \emph{$\pm$KPZ formula}.  Furthermore, we define the function $\bindnasrepma:\mathbb{R}^+\rightarrow\mathbb{R}^+$ with formula
\be\bindnasrepma( d):=\Delta^+( d)-\Delta^-( d).\ee
\end{defn}
\noindent
The $\pm$KPZ formula is also useful in CFT, thanks to the following lemma.
\begin{lem}\label{kpzlem} Suppose that $\kappa\in(0,8)$, and let $\theta_s$ be given by (\ref{bdysleg}) with $s\in\mathbb{Z}^+$.  Then 
\be \label{kpzeq}
\Delta^\pm(\theta_s)=-\theta_1-\theta_s+\theta_{s\pm1}=\begin{cases}2s/\kappa,&+ \\1-(2s+4)/\kappa,&-\end{cases}.\ee
\end{lem}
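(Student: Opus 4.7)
The plan is to verify the two equalities in (\ref{kpzeq}) by direct computation, proving them as independent algebraic identities and then observing that the results agree.

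First I would handle the middle-equals-right half. Using the explicit formula (\ref{bdysleg}), a brief calculation gives
\begin{equation*}
\theta_{s+1} - \theta_s = \frac{(s+1)(2s+6-\kappa) - s(2s+4-\kappa)}{2\kappa} = \frac{4s+6-\kappa}{2\kappa},
\end{equation*}
and similarly $\theta_{s-1} - \theta_s = (-4s-2+\kappa)/(2\kappa)$. Subtracting $\theta_1 = (6-\kappa)/(2\kappa)$ from each yields $-\theta_1 - \theta_s + \theta_{s+1} = 2s/\kappa$ and $-\theta_1 - \theta_s + \theta_{s-1} = 1 - (2s+4)/\kappa$, which is exactly the right-hand side of (\ref{kpzeq}).

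Next I would establish the left-equals-right half by substituting $\theta_s$ into (\ref{kpz}) and simplifying the discriminant. The key observation is that it is a perfect square:
\begin{equation*}
(\kappa-4)^2 + 16\kappa \theta_s \;=\; (\kappa-4)^2 + 8s(2s+4-\kappa) \;=\; \kappa^2 - 8(s+1)\kappa + 16(s+1)^2 \;=\; \bigl(\kappa - 4(s+1)\bigr)^2.
\end{equation*}
Since $\kappa \in (0,8)$ and $s \in \mathbb{Z}^+$, the quantity $4(s+1) - \kappa$ is strictly positive, so this is the principal square root. Inserting it back into (\ref{kpz}) gives $\Delta^+(\theta_s) = (\kappa - 4 + 4(s+1) - \kappa)/(2\kappa) = 2s/\kappa$ and $\Delta^-(\theta_s) = (\kappa - 4 - 4(s+1) + \kappa)/(2\kappa) = 1 - (2s+4)/\kappa$, matching the middle expression and completing the proof.

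The main (and only) subtlety is spotting the factorization of the KPZ discriminant as $(\kappa - 4(s+1))^2$; this is what makes $\Delta^\pm$ take the simple rational form claimed in the lemma. Everything else is straightforward bookkeeping, and one should just be mindful of the sign of $4(s+1) - \kappa$ when extracting the square root on the interval $\kappa \in (0,8)$.
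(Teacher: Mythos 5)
Your proof is correct and is exactly the ``straightforward algebra'' that the paper's one-line proof alludes to but does not write out; the key step of recognizing the KPZ discriminant as the perfect square $(\kappa-4(s+1))^2$, and checking its sign on $\kappa\in(0,8)$, is precisely what makes the computation go through.
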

\begin{proof} One may prove the lemma with straightforward algebra.\end{proof}
\noindent
Lemma \ref{kpzlem} implies that $\Delta^-(\theta_s)$ and $\Delta^+(\theta_s)$ are respectively the indical powers for the $\psi_{s-1}$ and $\psi_{s+1}$ conformal families that appear in the OPE of $\psi_1$ with $\psi_s$.  Also, they are respectively the powers of the first and second terms in (\ref{abnormalpowers}).  

With the $\pm$KPZ formula, we may write the generalization of (\ref{abnormalpowers}) predicted by CFT \cite{bpz, fms, henkel}.  If $F(\boldsymbol{x})$ is a CFT correlation function with primary operators at $x_{i-1}$ and $x_i$ with respective conformal weights $\theta_1$ and $ d$, then these two primary operators fuse in the limit $x_i\rightarrow x_{i-1}$, and their OPE contains two conformal families whose leading primary operators have respective conformal weights $ d^-=\theta_1+ d+\Delta^-( d)$ and $ d^+=\theta_1+ d+\Delta^+( d)$.  Thus, we have
\be\label{ababnormalpowers}F(\boldsymbol{x})\underset{x_i\rightarrow x_{i-1}}{\sim}(x_i-x_{i-1})^{\Delta^-( d)}
(F_{ d^-}\circ\pi_i)(\boldsymbol{x})\,\,+\,\,\dotsm\,\,+\,\,(x_i-x_{i-1})^{\Delta^+( d)}(F_{ d^+}\circ\pi_i)(\boldsymbol{x})\,\,+\,\,\dotsm\ee
for some functions $F_{ d^\pm}$.  We recover (\ref{abnormalpowers}) from (\ref{ababnormalpowers}) after setting $ d=\theta_s$ and $i=2N-s+1$ and replacing $F$ by $F_s$.

\subsection{Estimates involving one interval}\label{oneinterval}

In this section, we investigate the behavior of solutions $F$ of the system (\ref{nullstatemods}, \ref{wardidmods}), with $\theta_{s+1}$ replaced by an (almost) arbitrary conformal weight $h$ and $2N-s$ replaced by $M$, as we collapse just one interval among $(x_1,x_2)$, $(x_2,x_3),\ldots,(x_{M-2},x_{M-1})$, and $(x_{M-1},x_M)$.  Although the rightmost point $x_M$ bears the anomalous conformal weight $h$ in this system, we allow exactly one of $x_1$, $x_2,\ldots,x_M$ to bear this weight, denoting its index by $\iota\in\{1,2,\ldots,M\}$.

In lemma \ref{firstlimitlem}, we isolate the first term that appears on the right side of (\ref{ababnormalpowers}) as we collapse the interval $(x_{i-1},x_i)$.  The coefficient of this term is the limit (\ref{hlim-}) appearing in the statement of this lemma, and we 
 identify this limit with the limit (\ref{ss-1lim}) after we set $i=\iota=M$, $M=2N-s+1$, and $h=\theta_s$.  This point is relevant because we need the latter limit to exist in order to execute the methodology proposed in section \ref{Methodology}.  The statement and proof of lemma \ref{firstlimitlem} is almost identical to those of lemmas \red{3} and \red{4} of \cite{florkleb}, but the differences, though slight, are great enough to warrant a separate proof, which we provide below (with some steps replaced by references to identical steps in \cite{florkleb} for brevity).  

\begin{lem}\label{firstlimitlem}Suppose that $\kappa\in(0,8)$ and $M>2$, and for some $i\in\{2,3,\ldots,M\}$, let 
\be\label{xdelta}\boldsymbol{x}_\delta:=(x_1,x_2,\ldots,x_{i-1},x_{i-1}+\delta,x_{i+1},\ldots,x_M)\in\Omega_0^M.\ee
If $F:\Omega_0^M\rightarrow\mathbb{R}$ satisfies these conditions,
\begin{enumerate}
\item\label{cond1} $F$ satisfies the growth bound (\ref{powerlaw}) for some positive constants $C$ and $p$, and
\item\label{cond2} there is an $\iota\in\{1,2,\ldots,M\}$  and an $h>-(\kappa-4)^2/16\kappa$ such that $F$ satisfies the null-state PDE centered on $x_j$
\be\label{hpde}\Bigg[\frac{\kappa}{4}\partial_j^2\,\,\,+\sum_{k\neq j,\iota}^M\left(\frac{\partial_k}{x_k-x_j}-\frac{\theta_1}{(x_k-x_j)^2}\right)+\frac{\partial_\iota}{x_\iota-x_j}-\frac{h}{(x_\iota-x_j)^2}\Bigg]F(\boldsymbol{x})=0\ee
for each $j\in\{1,2,\ldots,\iota-1,\iota+1,\ldots, M\}$ and the conformal Ward identities 
\be\label{wardidh}\begin{gathered}\sum_{k=1}^M\partial_kF(\boldsymbol{x})=0,\qquad\Bigg[\sum_{k\neq \iota}^M(x_k\partial_k+\theta_1)+x_\iota\partial_\iota+h\Bigg]F(\boldsymbol{x})=0,\\
\Bigg[\sum_{k\neq \iota}^M(x_k^2\partial_k+2\theta_1x_k)+x_\iota^2\partial_\iota+2hx_\iota\Bigg]F(\boldsymbol{x})=0,\end{gathered}\ee
\end{enumerate}
then the limits
\be\label{hlim-}\begin{gathered}(F_{d^-}\circ\pi_i)(\boldsymbol{x}):=\lim_{\delta\downarrow0}\delta^{-\Delta^-(d)}F(\boldsymbol{x}_\delta),\qquad\lim_{\delta\downarrow0}\delta^{-\Delta^-(d)}\partial_\iota F(\boldsymbol{x}_\delta),\qquad d:=\begin{cases} \theta_1,& i\not\in\{\iota,\iota+1\} \\ h, & i\in\{\iota,\iota+1\}\end{cases}\\
\lim_{\delta\downarrow0}\delta^{-\Delta^-(d)}\partial_jF(\boldsymbol{x}_\delta),\qquad\lim_{\delta\downarrow0}\delta^{-\Delta^-(d)}\partial_j^2F(\boldsymbol{x}_\delta),\qquad j\not\in\begin{cases}\{i,\iota\},& i\not\in\{\iota,\iota+1\} \\ \{i-1,i\}, & i\in\{\iota,\iota+1\}\end{cases}\end{gathered}\ee
exist and are approached uniformly over every compact subset of $\pi_i(\Omega_0^M)$.
\end{lem}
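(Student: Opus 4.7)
The plan is to adapt the strategy of lemmas 3 and 4 of \cite{florkleb} to the presence of the anomalous weight $h$ at $x_\iota$. The goal is to extract the $\Delta^-(d)$ branch of the Frobenius expansion of $F$ as $\delta := x_i-x_{i-1}\downarrow 0$, showing that $\delta^{-\Delta^-(d)}F(\boldsymbol{x}_\delta)$ converges uniformly on every compact subset of $\pi_i(\Omega_0^M)$. The argument proceeds in four stages: reduce to a one-variable equation in $\delta$, invert it with a Green function, bound the inhomogeneity via Schauder interior estimates, and iterate to an optimal growth bound.

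First I would derive a reduced equation in $\delta$. At most one of the indices $i-1, i$ equals $\iota$, so I may select the null-state PDE (\ref{hpde}) centered at whichever of $x_{i-1}, x_i$ is not $x_\iota$, then use the first Ward identity in (\ref{wardidh}) to re-express the derivative at the partner endpoint in terms of the others. This produces an equation of the form
\begin{equation*}
\frac{\kappa}{4}\partial_\delta^2 F(\boldsymbol{x}_\delta) + \frac{A}{\delta}\partial_\delta F(\boldsymbol{x}_\delta) + \frac{B}{\delta^2}F(\boldsymbol{x}_\delta) = R(\boldsymbol{x}_\delta),
\end{equation*}
with constants $A, B$ chosen so that the indical equation at $\delta=0$ has roots $\Delta^\pm(d)$ for the value of $d$ specified in (\ref{hlim-}), and with $R$ a linear combination of $F$ and its partial derivatives in the coordinates other than $x_{i-1}, x_i$, whose coefficients are uniformly bounded on compact subsets of $\pi_i(\Omega_0^M)$.

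Next I would invert the $\delta$-operator on the left side using the Green function (\ref{Jgreen}) employed in \cite{florkleb}, converting the reduced PDE into an integral equation. To control $R$, I would combine the remaining null-state PDEs and Ward identities on a compact neighborhood with $x_{i-1}, x_i$ held fixed into a uniformly elliptic system in the other $M-2$ coordinates; classical Schauder interior estimates then bound each first and second partial derivative of $F$ appearing in $R$ by $\sup|F|$ on a slightly larger compact set, with constants independent of $\delta$. Starting from the crude growth bound (\ref{powerlaw}) and substituting into the integral equation, the exponent in the upper bound on $|F(\boldsymbol{x}_\delta)|$ improves by a fixed positive amount per pass (determined by the gap between $\Delta^-(d)$ and the alternative Frobenius exponent), so it saturates at $\Delta^-(d)$ after finitely many iterations. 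The integral representation then yields the existence of the limit in (\ref{hlim-}), uniformly on compacta; the limits of the listed partial derivatives follow by differentiating the integral representation in $\delta$ and in the remaining coordinates and reapplying the Schauder bounds.

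The main obstacle is the case split $i \in \{\iota, \iota+1\}$ versus $i \notin \{\iota, \iota+1\}$. In the former case, the collapsing pair contains the anomalous point, so the relevant indical exponents are $\Delta^\pm(h)$ rather than $\Delta^\pm(\theta_1)$, and consequently $A$, $B$, and the Green function depend on $h$ instead of on $\theta_1$. The reality condition $h > -(\kappa-4)^2/16\kappa$ is precisely what ensures that $\Delta^\pm(h)$ are real and that $\bindnasrepma(h) > 0$, guaranteeing that the two Frobenius branches are distinguishable and that the bootstrap unambiguously selects the $\Delta^-(h)$ branch. Beyond this case-dependent input, the proof parallels that of lemmas 3 and 4 of \cite{florkleb}, with the role of $\theta_1$ at $x_\iota$ taken over by $h$ throughout.
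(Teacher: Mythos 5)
Your proposal is correct and follows essentially the same route as the paper's proof: center a null-state PDE on the non-anomalous endpoint of the collapsing interval, cast it as an Euler ODE in $\delta$ with indicial roots $\Delta^\pm(d)$ plus a source $R$, invert with the causal Green function (\ref{Jgreen}), construct a strictly elliptic PDE from the remaining null-state PDEs and all three Ward identities to get Schauder interior bounds on the derivatives appearing in $R$, and bootstrap from the a priori bound (\ref{powerlaw}) to the $\Delta^-(d)$ power law and to uniform convergence on compacta; the paper carries this out via the reparametrization $(x_{i-1},x_i)\mapsto(x,\delta)$ rather than an explicit Ward-identity substitution, but these are equivalent (indeed $\partial_x=-\sum_{k\neq i-1,i}\partial_k$), and the same three-way case split $i\not\in\{\iota,\iota+1\}$, $i=\iota+1$, $i=\iota$ appears in both. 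The only cosmetic imprecision is that the ambient elliptic system after the Ward-identity reductions lives in $M-3$ coordinates (not $M-2$) in the case $i\not\in\{\iota,\iota+1\}$, since $x$, $y$, and $\delta$ are all demoted to parameters there.
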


\begin{proof} Throughout, we let $\mathcal{K}$ be an arbitrary compact subset of $\pi_i(\Omega_0^M)$.  Below, we divide the proof into proofs for three cases: case \ref{firstlimitlempart1} with $i\not\in\{\iota,\iota+1\}$, case \ref{firstlimitlempart2} with $i=\iota+1$, and case \ref{firstlimitlempart3} with $i=\iota$. The proofs for cases \ref{firstlimitlempart2} and \ref{firstlimitlempart3} are almost identical to that for case \ref{firstlimitlempart1}, so our exposition for these latter cases focuses on their differences with the former.
\begin{enumerate}[leftmargin=*]
\item\label{firstlimitlempart1} We suppose that $i\not\in\{\iota,\iota+1\}$.  For later convenience, we let $x:=x_{i-1}$ and $y:=x_\iota$, we relabel the coordinates in $\{x_j\}_{j\neq i-1,i,\iota}$ as $\{\xi_1,\xi_2,\ldots,\xi_{M-3}\}$ in increasing order, and we let $\boldsymbol{\xi}:=(\xi_1,\xi_2,\ldots,\xi_{M-3})$.  Also, we define
\be\label{tilde} {\rm F}(\boldsymbol{\xi};x,\delta;y):=F\left(\begin{aligned}\xi_1,\xi_2,\ldots,\xi_{i-2},x,x&+\delta,\xi_{i-1},\ldots\\
&\ldots,\xi_{\iota-3},y,\xi_{\iota-2},\ldots,\xi_{M-3}\end{aligned}\right),\quad H(\boldsymbol{\xi};x,\delta;y):=\delta^{-\Delta^-(\theta_1)} {\rm F}(\boldsymbol{\xi};x,\delta;y)\ee
if $x<y$, and we define ${\rm F}$ and $H$ similarly if $y<x$.

First, we bound the growth of the supremum of ${\rm F}(\boldsymbol{\xi};x,\delta;y)$ over $\mathcal{K}$ as $\delta\downarrow0$.  Following the proof of lemma \red{3} in \cite{florkleb}, we write the null-state PDE (\ref{nullstate}) centered on $x_i$ as $\mathcal{L}[ {\rm F}]=\mathcal{M}[ {\rm F}]$, where $\mathcal{L}$ is the Euler differential operator
\be\label{i+11}\mathcal{L}[{\rm F}](\boldsymbol{\xi};x,\delta;y):=\left[\frac{\kappa}{4}\partial_\delta^2+\frac{\partial_\delta}{\delta}-\frac{\theta_1}{\delta^2}\right]{\rm F}(\boldsymbol{\xi};x,\delta;y)\ee
with characteristic exponents $\Delta^-(\theta_1)=1-6/\kappa$ and $\Delta^+(\theta_1)=2/\kappa$.  $\mathcal{L}[{\rm F}](\boldsymbol{\xi};x,\delta;y)$ seems to comprise the largest terms of this PDE as $\delta\downarrow0$.  Also, $\mathcal{M}$ is the following differential operator with derivatives in all variables except $\delta$:
\be\label{i+12}\mathcal{M}[{\rm F}](\boldsymbol{\xi};x,\delta;y):=\Bigg[\frac{\partial_x}{\delta}+\frac{h}{(y-x-\delta)^2}-\frac{\partial_y}{y-x-\delta}+\sum_{j=1}^{M-3}\left(\frac{\theta_1}{(\xi_j-x-\delta)^2}-\frac{\partial_j}{\xi_j-x-\delta}\right)\Bigg]{\rm F}(\boldsymbol{\xi};x,\delta;y).\ee
Next, we use an appropriate causal Green function \cite{florkleb} to invert $\mathcal{L}$ and write $\mathcal{L}[ {\rm F}]=\mathcal{M}[ {\rm F}]$ as an integral equation.  The proof of lemma \red{3} in \cite{florkleb} presents the analysis for this process, and in terms of $H$, this integral equation is
\be\label{tildeH} H(\boldsymbol{\xi};x,\delta;y)=H(\boldsymbol{\xi};x,b;y)-\frac{\kappa}{4}J(\delta,b)\partial_b H(\boldsymbol{\xi};x,b;y)+\sideset{}{_\delta^b}\int J(\delta,\eta)\mathcal{M}[H](\boldsymbol{\xi};x,\eta;y)\,{\rm d}\eta\ee
(compare with (\red{58}) in \cite{florkleb}), where $b$ is small enough so the denominators of (\ref{i+12}) are bounded away from zero on $0<\delta<b$.  Here, $J$ is the Green function (recall that $\bindnasrepma:=\Delta^+-\Delta^-$, so $\bindnasrepma(\theta_1)=8/\kappa-1$)
\be\label{Jgreen} J(\delta,\eta):=\frac{4/\kappa}{\bindnasrepma(\theta_1)}\eta\,\Theta(\eta-\delta)\Bigg[1-\left(\frac{\delta}{\eta}\right)^{\bindnasrepma(\theta_1)}\Bigg],\ee
with $\Theta$ the Heaviside step function.  We choose bounded open sets $ U_0,$ $ U_1,\ldots, U_m$, with $m:=\lceil q\rceil$ and $q:=p+\Delta^-(\theta_1)$ (we choose $p$ large enough so $m>2$) and such that they are sequentially compactly embedded thus:
\be\label{compactembedd}\mathcal{K}\subset\subset U_m\subset\subset U_{m-1}\subset\subset\ldots\subset\subset U_0\subset\subset\pi_i(\Omega_0^M).\ee
After taking the supremum of (\ref{tildeH}) over each of these subsets, we find
\be\label{estH}\sup_{U_n}|H(\boldsymbol{\xi};x,\delta;y)|\leq\sup_{ U_n}|H(\boldsymbol{\xi};x,b;y)|\,+\,\bindnasrepma(\theta_1)^{-1}\sup_{ U_n}|b\hspace{.03cm}\partial_b H(\boldsymbol{\xi};x,b;y)|+\frac{4/\kappa}{\bindnasrepma(\theta_1)}\sideset{}{_\delta^b}\int\sup_{ U_n}|\eta\mathcal{M}[H](\boldsymbol{\xi};x,\eta;y)|\,{\rm d}\eta.\ee

Next, we construct a strictly elliptic PDE in order to bound the derivatives of $H$ in the integrand of (\ref{estH}).  The construction is nearly identical to that in the proof of lemma \red{3} in \cite{florkleb}, so for brevity we present only an outline.
\begin{enumerate}
\item\label{step11}We sum the null-state PDEs (\ref{hpde}) over $j\not\in\{i-1,i,\iota\}$ and cast the result in terms of $\boldsymbol{\xi}$, $x$, $y$, $\delta$, and $H$. 
\item\label{step12} Next, we use the conformal Ward identities (\ref{wardidh})  to replace the derivatives $\partial_xH$, $\partial_yH$, and $\delta\partial_\delta H$ in the PDE created by step \ref{step11} exclusively with $H$ and its derivatives with respect to the coordinates of $\boldsymbol{\xi}$.
\item\label{step13}Thus, $H$ satisfies a strictly elliptic PDE with all derivatives in the coordinates of $\boldsymbol{\xi}$, with $x$, $y,$ and $\delta$ as parameters, and with none of its coefficients vanishing or blowing up as $\delta\downarrow0$.
\end{enumerate}
Thus, the Schauder interior estimate (Cor.\ 6.3 of \cite{giltru}) says that with $d_n:=\text{dist}(\partial U_n,\partial U_{n-1})$ and $R_n$ $:=\text{diam}( U_n)/2$, the inequality
\be\label{Schauder}d_{n+1}\sup_{ U_{n+1}}|\partial^\varpi H(\boldsymbol{\xi};x,\delta;y)|\leq C(R_n)\sup_{ U_n}|H(\boldsymbol{\xi};x,\delta;y)|\quad\begin{array}{l}\\ \end{array}\ee
holds for all $0<\delta<b$ and $n\in\{0,1,\ldots,m\}$, where $C$ is some positive-valued function and $\varpi$ is any multi-index for the coordinates of $\boldsymbol{\xi}$ with length $|\varpi|\leq2$.  Actually, step \ref{step12} in the construction of the elliptic PDE implies that $\varpi$ may also involve the coordinates $x$ and $y$ and the derivative $\delta\partial_\delta$ too. That is,
\be\label{multiindex1}\partial^\varpi\in\left\{\begin{array}{c}\partial_j,\quad \partial_x,\quad \partial_y,\quad \delta\partial_\delta, \quad\partial_j^2,\quad\partial_x^2,\quad\partial_y^2,\quad(\delta\partial_\delta)^2,\\ \partial_j\partial_k, \quad \partial_j\partial_x,\quad\partial_j\partial_y,\quad\partial_j\delta\partial_\delta,\quad \partial_x\partial_y,\quad\partial_x\delta\partial_\delta,\quad\partial_y\delta\partial_\delta \end{array}\right\}.\ee

Condition \ref{cond1} of the lemma implies that both sides of (\ref{Schauder}) with $n=0$ are $O(\delta^{-q})$ as $\delta\downarrow0$.  By combining (\ref{Schauder}) with (\ref{estH}), we may improve these bounds on the growth of $H$ and $\partial^\varpi H$.  Specifically,  after substituting the bound (\ref{Schauder}) with $n=0$ into (\ref{i+12}, \ref{estH}) with $n=1$ and evaluating the definite integral in (\ref{estH}), we find 
\be\label{thefirstest}\sup_{U_1}|H(\boldsymbol{\xi};x,\delta;y)|=O(\delta^{-q+1})\quad\Longrightarrow\quad\sup_{U_2}|\partial^\varpi H(\boldsymbol{\xi};x,\delta;y)|=O(\delta^{-q+1}),\ee
with the right estimate of (\ref{thefirstest}) following from (\ref{Schauder}).  After repeating this process another $m-2$ times, we find that the left side of (\ref{estH}) with $n=m-1$ is $O(\delta^{-q+m-1})$ as $\delta\downarrow0$.  Repeating one last time finally gives
\be\label{finbound}\sup_{ U_m}|H(\boldsymbol{\xi};x,\delta;y)|=O(1)\quad\text{as $\delta\downarrow0$}\quad\Longrightarrow\quad\sup_{\mathcal{K}}|\partial^\varpi H(\boldsymbol{\xi};x,\delta;y)|=O(1)\quad\text{as $\delta\downarrow0$},\ee
thanks to the compact embedding $\mathcal{K}\subset\subset U_m$ (\ref{compactembedd}).  Further iterations fail to improve the bound (\ref{finbound}) because the other $O(1)$ terms in (\ref{estH}) may no longer be ignored.  (We follow the main lines of this argument in the proofs of lemmas \ref{secondlimitlem}, \ref{farlem}, \ref{firstcloselem}, and \ref{closelem} below.)

Now, to finish the proof, we must show that the limit of $H(\boldsymbol{\xi};x,\delta;y)$ and its derivatives in (\ref{hlim-}) as $\delta\downarrow0$ exist and are approached uniformly over $\mathcal{K}$.  But the proof of this claim is identical to the first part of the proof of lemma \red{4}, from (\red{58}) to (\red{66}), in \cite{florkleb}.  Thus, we have proven the lemma for case \ref{firstlimitlempart1}.

\item\label{firstlimitlempart2} We suppose that $i=\iota+1$, so with $x=y$ in the notation of case \ref{firstlimitlempart1}, we relabel the coordinates in $\{x_j\}_{j\neq i-1,i}$ in increasing order as $\{\xi_1,\xi_2,$ $\ldots,\xi_{M-2}\}$, and we let $\boldsymbol{\xi}:=(\xi_1,\xi_2,\ldots,\xi_{M-2})$.  Also, we define
\be\label{FtoH} {\rm F}(\boldsymbol{\xi};x,\delta):=F(\xi_1,\xi_2,\ldots,\xi_{i-2},x,x+\delta,\xi_{i-1},\ldots,\xi_M),\quad H(\boldsymbol{\xi};x,\delta):=\delta^{-\Delta^-(h)} {\rm F}(\boldsymbol{\xi};x,\delta).\ee
Second, we write the null-state PDE (\ref{nullstate}) centered on $x_i$ as $\mathcal{L}[{\rm F}]=\mathcal{P}[{\rm F}]$, where $\mathcal{L}$ is given by (\ref{i+11}) with $\theta_1$ replaced by $h$ (so the characteristic exponents of $\mathcal{L}$ are now $\Delta^-(h)$ and $\Delta^+(h)$), and where $\mathcal{P}$ is 
\be\label{newi+12}\mathcal{P}[{\rm F}](\boldsymbol{\xi};x,\delta):=\Bigg[\frac{\partial_x}{\delta}+\sum_{j=1}^{M-2}\left(\frac{\theta_1}{(\xi_j-x-\delta)^2}-\frac{\partial_j}{\xi_j-x-\delta}\right)\Bigg]{\rm F}(\boldsymbol{\xi};x,\delta).\ee
With these modifications, (\ref{tildeH}) becomes (again, compare with (\red{58}) in \cite{florkleb})
\be\label{tildeHold} H(\boldsymbol{\xi};x,\delta)=H(\boldsymbol{\xi};x,b)-\frac{\kappa}{4}J(\delta,b)\partial_b H(\boldsymbol{\xi};x,b)+\sideset{}{_\delta^b}\int J(\delta,\eta)\mathcal{P}[H](\boldsymbol{\xi};x,\eta)\,{\rm d}\eta\ee
for all $0<\delta<b$, with $\theta_1$ replaced by $h$ in the definition (\ref{Jgreen}) for $J(\delta,\eta)$.  Now, the construction \ref{step11}--\ref{step13} in part \ref{firstlimitlempart1} of this proof (with $x=y$) shows that $H$ satisfies a PDE with the properties of \ref{step13}.  Thus, the Schauder estimate
\be\label{otherSchauder}d_{n+1}\sup_{ U_{n+1}}|\partial^\varpi H(\boldsymbol{\xi};x,\delta)|\leq C(R_n)\sup_{ U_n}|H(\boldsymbol{\xi};x,\delta)|,\quad \partial^\varpi\in\left\{\begin{array}{c}\partial_j,\quad \partial_x,\quad \delta\partial_\delta, \quad\partial_j^2,\quad\partial_x^2,\\ (\delta\partial_\delta)^2,\quad\partial_j\partial_k, \quad \partial_j\partial_x,\quad\partial_j\delta\partial_\delta,\quad\partial_x\delta\partial_\delta \end{array}\right\}\ee
holds for all $0<\delta<b$ and $n\in\{0,1,\ldots,m\}$.  By taking the supremum of (\ref{tildeHold}) over each of the open sets $U_n$ of (\ref{compactembedd}) and using (\ref{otherSchauder}) to perform the same iterative sequence of steps as in part \ref{firstlimitlempart1} of this proof, we find that
\be\label{lastSchauderold}\sup_{\mathcal{K}}|\partial^\varpi H(\boldsymbol{\xi};x,\delta)|=O(1)\quad\text{as $\delta\downarrow0$}.\ee
The rest of the proof of case \ref{firstlimitlempart2} proceeds identically to the proof of case \ref{firstlimitlempart1}.

\item\label{firstlimitlempart3} 
Finally, the proof of case \ref{firstlimitlempart3} with $i=\iota$ is identical to the proof of case \ref{firstlimitlempart2} with $i=\iota+1$, with some small differences.  For brevity, we only point out those differences.

With $i=\iota$, we let $x:=x_i$, $\delta:=x_i-x_{i-1}$, we relabel the coordinates in $\{x_j\}_{j\neq i-1,i}$ in increasing order as $\{\xi_1,\xi_2,$ $\ldots,\xi_{M-2}\}$, and we let $\boldsymbol{\xi}:=(\xi_1,\xi_2,\ldots,\xi_{M-2})$.  Also, we define
\be\label{FtoH2} {\rm F}(\boldsymbol{\xi};x,\delta):=F(\xi_1,\xi_2,\ldots,\xi_{i-2},x-\delta,x,\xi_{i-1},\ldots,\xi_M),\quad H(\boldsymbol{\xi};x,\delta):=\delta^{-\Delta^-(h)} {\rm F}(\boldsymbol{\xi};x,\delta).\ee
Then, we write the null-state PDE (\ref{nullstate}) centered on $x_i$ as $\mathcal{L}[{\rm F}]=\mathcal{Q}[{\rm F}]$, where $\mathcal{L}$ is given by (\ref{i+11}) with $\theta_1$ replaced by $h$ (so the characteristic exponents of $\mathcal{L}$ are now $\Delta^-(h)$ and $\Delta^+(h)$), and where $\mathcal{Q}$ is (similar to $\mathcal{P}$ (\ref{newi+12}))
\be\label{Q}\mathcal{Q}[{\rm F}](\boldsymbol{\xi};x,\delta):=\Bigg[-\frac{\partial_x}{\delta}+\sum_{j=1}^{M-2}\left(\frac{\theta_1}{(\xi_j-x+\delta)^2}-\frac{\partial_j}{\xi_j-x+\delta}\right)\Bigg]{\rm F}(\boldsymbol{\xi};x,\delta).\ee
By following the reasoning presented in the proof of case \ref{firstlimitlempart2}, we again find (\ref{tildeHold}) with $\mathcal{P}$ replaced by $\mathcal{Q}$, and the estimate (\ref{lastSchauderold}).  The rest of the proof of case \ref{firstlimitlempart3} proceeds identically to the proof of case \ref{firstlimitlempart2}.
\end{enumerate}
With cases \ref{firstlimitlempart1}--\ref{firstlimitlempart3} justified, we have proven the lemma.  \end{proof}

The proof of lemma \ref{firstlimitlem} implies some interesting integral equations that $H(\boldsymbol{\xi};x,\delta;y)$ and $ {\rm F}(\boldsymbol{\xi};x,\delta;y)$ must satisfy in the case $i\not\in\{\iota,\iota+1\}$.  After replacing $\delta$ with zero and then replacing $b$ with $\delta$ in (\ref{tildeH}), we find
\be\label{futurepropagator}H(\boldsymbol{\xi};x,\delta;y)=H(\boldsymbol{\xi};x,0;y)\,+\,\bindnasrepma(\theta_1)^{-1}\delta\partial_\delta H(\boldsymbol{\xi};x,\delta;y)-\frac{4/\kappa}{\bindnasrepma(\theta_1)}\sideset{}{_0^\delta}\int \eta\mathcal{M}[H](\boldsymbol{\xi};x,\eta;y)\,{\rm d}\eta,\ee
where $H(\boldsymbol{\xi};x,0;y)$ is the limit of $H(\boldsymbol{\xi};x,\delta;y)$ as $\delta\downarrow0$. This integral equation is interesting because it integrates over $0<\eta<\delta$ instead of over $\delta<\eta<b$ for some small, arbitrary cutoff $b$.  Furthermore, we find that for all $0<\delta<b$,
\begin{multline}\label{Fpropagator} {\rm F}(\boldsymbol{\xi};x,\delta;y)=\left(\frac{\delta}{b}\right)^{\Delta^+(\theta_1)} {\rm F}(\boldsymbol{\xi};x,b;y)+\Bigg[\left(\frac{\delta}{b}\right)^{\Delta^-(\theta_1)}-\hspace{.2cm}\left(\frac{\delta}{b}\right)^{\Delta^+(\theta_1)}\Bigg]b^{\Delta^-(\theta_1)}H(\boldsymbol{\xi};x,0;y)\\
-\frac{4}{\kappa}\delta^{\Delta^+(\theta_1)}\sideset{}{_\delta^b}\int\frac{1}{\beta}\sideset{}{_0^\beta}\int\beta^{-\bindnasrepma(\theta_1)}\eta^{-\Delta^-(\theta_1)}\,\eta\mathcal{M}[ {\rm F}](\boldsymbol{\xi};x,\eta;y)\,{\rm d}\eta\,{\rm d}\beta\end{multline}
after we move the middle term on the right side of (\ref{futurepropagator}) to the left side, replace $\delta$ with $\beta$, and integrate both sides over $\beta\in[\delta,b]$ with $b$ small.  If $i\in\{\iota,\iota+1\}$, then after starting with (\ref{tildeHold}) and following the same steps, we find that
\begin{multline}\label{otherFpropagator} {\rm F}(\boldsymbol{\xi};x,\delta)=\left(\frac{\delta}{b}\right)^{\Delta^+(h)} {\rm F}(\boldsymbol{\xi};x,b)+\Bigg[\left(\frac{\delta}{b}\right)^{\Delta^-(h)}-\hspace{.2cm}\left(\frac{\delta}{b}\right)^{\Delta^+(h)}\Bigg]b^{\Delta^-(h)}H(\boldsymbol{\xi};x,0)\\
-\frac{4}{\kappa}\delta^{\Delta^+(h)}\sideset{}{_\delta^b}\int\frac{1}{\beta}\sideset{}{_0^\beta}\int\beta^{-\bindnasrepma(h)}\eta^{-\Delta^-(h)}\,\left\{\begin{array}{ll}\eta\mathcal{P}[ {\rm F}](\boldsymbol{\xi};x,\eta), & i=\iota+1 \\ \eta\mathcal{Q}[ {\rm F}](\boldsymbol{\xi};x,\eta), & i=\iota\end{array}\right\}\,{\rm d}\eta\,{\rm d}\beta\end{multline}
for all $0<\delta<b$, with $H(\boldsymbol{\xi};x,0)$ the limit of $H(\boldsymbol{\xi};x,\delta)$ as $\delta\downarrow0$.  We use these integral equations in the proof of lemma \ref{secondlimitlem} below.

In lemma \ref{secondlimitlem}, we isolate the second term that appears on the right side of (\ref{ababnormalpowers}) as we collapse the interval $(x_{i-1},x_i)$.  The coefficient of this term is the limit (\ref{hlim+}) appearing in the statement of the lemma, and we identify this limit with the limit (\ref{ss+1lim}) after we set $i=\iota=M$, $M=2N-s+1$, and $h=\theta_s$ in the proof of lemma \ref{alltwoleglem} below.  This point is relevant because we need the latter limit to exist in order to execute the methodology proposed in section \ref{Methodology}.  We note that condition \ref{nextcond3} of lemma \ref{secondlimitlem}, in CFT parlance, implies that only the $+$ term is present in the OPE of (\ref{ababnormalpowers}).  In particular, if $d=\theta_1$, then this OPE comprises terms from only the $\theta_2$ conformal family.

Again, some steps in the proof of lemma \ref{secondlimitlem} below are identical to steps in the proofs of lemmas \red{3} and \red{4} of \cite{florkleb}.  Rather than write them out again, we reference them in \cite{florkleb} for brevity.

\begin{lem}\label{secondlimitlem}
Suppose that $\kappa\in(0,8)$ and $M>2$, and define $\boldsymbol{x}_\delta$ as in (\ref{xdelta}).  If $F:\Omega_0^M\rightarrow\mathbb{R}$ satisfies these conditions,
\begin{enumerate}
\item\label{nextcond1} $F$ satisfies the growth bound (\ref{powerlaw}) for some positive constants $C$ and $p$,
\item\label{nextcond2} $F$ solves the $M-1$ null-state PDEs (\ref{hpde}) and three conformal Ward identities (\ref{wardidh}) stated in condition \ref{cond2} of lemma \ref{firstlimitlem} for some $h>-(\kappa-4)^2/16\kappa$, and
\item\label{nextcond3} the limit $F_{d^-}$ (\ref{hlim-}) is zero for the index $i\in\{2,3,\ldots,M\}$ selected in the definition (\ref{xdelta}) for $\boldsymbol{x}_\delta$,
\end{enumerate}
then the limits
\be\label{hlim+}\begin{gathered}(F_{d^+}\circ\pi_i)(\boldsymbol{x}):=\lim_{\delta\downarrow0}\delta^{-\Delta^+(d)}F(\boldsymbol{x}_\delta),\qquad\lim_{\delta\downarrow0}\delta^{-\Delta^+(d)}\partial_\iota F(\boldsymbol{x}_\delta),\qquad d:=\begin{cases} \theta_1,& i\not\in\{\iota,\iota+1\} \\ h, & i\in\{\iota,\iota+1\}\end{cases}\\
\lim_{\delta\downarrow0}\delta^{-\Delta^+(d)}\partial_jF(\boldsymbol{x}_\delta),\qquad\lim_{\delta\downarrow0}\delta^{-\Delta^+(d)}\partial_j^2F(\boldsymbol{x}_\delta),\qquad j\not\in\begin{cases}\{i,\iota\},& i\not\in\{\iota,\iota+1\} \\ \{i-1,i\}, & i\in\{\iota,\iota+1\}\end{cases}\end{gathered}\ee
exist and are approached uniformly over every compact subset of $\pi_i(\Omega_0^M)$. Finally, if $F$ is not zero, then the first limit $F_{d^+}$ of (\ref{hlim+}) is not zero.  
\end{lem}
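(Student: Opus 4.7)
The plan is to adapt the strategy of lemma \ref{firstlimitlem} so as to isolate the subleading $\delta^{\Delta^+(d)}$ branch of the asymptotic expansion (\ref{ababnormalpowers}) rather than the leading $\delta^{\Delta^-(d)}$ branch. I would introduce $H^+(\boldsymbol{\xi};x,\delta;y):=\delta^{-\Delta^+(d)}{\rm F}(\boldsymbol{\xi};x,\delta;y)$ in the case $i\not\in\{\iota,\iota+1\}$, and the analogous $H^+(\boldsymbol{\xi};x,\delta):=\delta^{-\Delta^+(h)}{\rm F}(\boldsymbol{\xi};x,\delta)$ in the remaining two cases, and start from the integral representations (\ref{Fpropagator}) and (\ref{otherFpropagator}) already derived at the end of the proof of lemma \ref{firstlimitlem}. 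Condition \ref{nextcond3} is precisely the statement that the factor $H(\boldsymbol{\xi};x,0;y)$ (resp.\ $H(\boldsymbol{\xi};x,0)$) in the middle term of (\ref{Fpropagator}) (resp.\ (\ref{otherFpropagator})) vanishes. After dropping that term and dividing through by $\delta^{\Delta^+(d)}$, the existence of $\lim_{\delta\downarrow 0}H^+$ reduces to the absolute convergence of the remaining iterated integral as its lower limit in $\beta$ descends to zero, and the limiting expression then supplies the candidate formula for $F_{d^+}$.

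Convergence and its uniformity over any compact $\mathcal{K}\subset\pi_i(\Omega_0^M)$ should come from the Schauder interior estimates (\ref{Schauder}) and (\ref{otherSchauder}) already built in lemma \ref{firstlimitlem}. Those estimates give $|\partial^\varpi{\rm F}|=O(\delta^{\Delta^-(d)})$ uniformly on $\mathcal{K}$, so $|\mathcal{M}[{\rm F}]|$, $|\mathcal{P}[{\rm F}]|$, and $|\mathcal{Q}[{\rm F}]|$ inherit the same bound, and the outer integrand is $O(\beta^{1-\bindnasrepma(d)})$ on $\mathcal{K}$. This is integrable at $\beta=0$ whenever $\bindnasrepma(d)<2$; for the remaining values of $\kappa\in(0,8)$, I would bootstrap by feeding the improved bound $|{\rm F}|=O(\delta^{\Delta^-(d)+2})$ produced by one pass back through the Schauder estimate and the integral equation. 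Each iteration gains a fixed positive increment in the exponent, so finitely many passes suffice to bring the integrand into $L^1$ at $\beta=0$ for every $\kappa\in(0,8)$, and dominated convergence then lets me take $\delta\downarrow 0$ inside the integral uniformly over $\mathcal{K}$. The limits of $\partial_\iota F$, $\partial_jF$, and $\partial_j^2F$ in (\ref{hlim+}) follow by differentiating the integral representation and reusing the same Schauder bounds, mirroring the passage from (\red{58}) to (\red{66}) of \cite{florkleb}.

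The main obstacle is the final claim that $F_{d^+}\neq 0$ whenever $F\neq 0$, and my plan is a proof by contradiction. Assuming both $F_{d^-}=0=F_{d^+}$, setting the closed form for $\lim_{\delta\downarrow 0}H^+$ equal to zero and substituting the resulting expression for ${\rm F}(\boldsymbol{\xi};x,b;y)$ back into (\ref{Fpropagator}) cancels the boundary contribution and collapses (\ref{Fpropagator}) to the self-referential identity
\begin{equation*}
{\rm F}(\boldsymbol{\xi};x,\delta;y)=\frac{4}{\kappa}\delta^{\Delta^+(\theta_1)}\int_0^\delta\frac{1}{\beta}\int_0^\beta\beta^{-\bindnasrepma(\theta_1)}\eta^{-\Delta^-(\theta_1)}\eta\mathcal{M}[{\rm F}](\boldsymbol{\xi};x,\eta;y)\,{\rm d}\eta\,{\rm d}\beta,
\end{equation*}
with parallel identities arising from (\ref{otherFpropagator}) in the two other cases. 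Iterating this identity---each pass using Schauder to lift a pointwise bound $|{\rm F}|\le C_k\delta^{\alpha_k}$ to the same bound on $\partial^\varpi{\rm F}$ and hence on $\mathcal{M}[{\rm F}]$, and thereby producing a new bound with $\alpha_{k+1}=\alpha_k+2$ and an explicit constant $C_{k+1}$ decreasing like $C_k/\alpha_k^2$---shows that for each fixed $\delta\in(0,b)$ with $b<1$ one has $|{\rm F}(\boldsymbol{x}_\delta)|\le C_k\delta^{\alpha_k}\to 0$ as $k\to\infty$. Thus ${\rm F}$ vanishes on an open neighborhood of the face $\delta=0$ inside $\mathcal{K}$. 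Because the null-state PDEs together with (\ref{wardidh}) yield a real-analytic elliptic system on $\Omega_0^M$ (as already exploited in producing (\ref{Schauder}), (\ref{otherSchauder})), the solution ${\rm F}$ is real-analytic on $\Omega_0^M$, and vanishing on an open subset forces ${\rm F}\equiv 0$ on the connected set $\Omega_0^M$, contradicting $F\neq 0$.
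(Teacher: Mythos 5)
Your proposal is correct in outline and follows essentially the same route as the paper: condition \ref{nextcond3} removes the $H(\boldsymbol{\xi};x,0;y)$ term from (\ref{Fpropagator})--(\ref{otherFpropagator}), yielding precisely the paper's integral equations (\ref{Kpropagator}), (\ref{otherKpropagator}) for $E:=\delta^{-\Delta^+(d)}{\rm F}$; existence and uniformity of the limits then follow from the same Schauder-plus-repeated-integration bootstrap, and the nonvanishing claim from the same contradiction scheme built on the homogeneous identity you display. Two bookkeeping corrections: with the bounds of lemma \ref{firstlimitlem} the outer integrand is $O(\beta^{-\bindnasrepma(d)})$, not $O(\beta^{1-\bindnasrepma(d)})$, and each pass gains exactly one power of $\delta$ (saturating at $\delta^{\Delta^+(d)}$ because the boundary term $E(\boldsymbol{\xi};x,b;y)$ is generically nonzero), so $\lceil\bindnasrepma(d)\rceil$ passes are needed; more importantly, in the infinite iteration behind the nonvanishing claim each Schauder application forces a strictly smaller domain, and the paper controls this by working on concentric balls whose radii decrease by $d_k=d_1/k^{3/2}$, so that the accumulated factors $1/(d_1\dotsm d_k)$ are beaten by the $(k!)^{-2}$ produced by the repeated double integrations, cf.\ (\ref{Ksupsnextk})--(\ref{Ksupslast}); your sketch of $C_{k+1}\sim C_k/\alpha_k^2$ captures the right mechanism but should make this domain-shrinking explicit. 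Your closing real-analyticity step (propagating the vanishing near the face $\delta=0$ to all of $\Omega_0^M$) is a reasonable way to finish a point the paper states only tersely, though the analyticity is better justified by summing all the null-state PDEs, whose joint principal part $\tfrac{\kappa}{4}\sum_k\partial_k^2$ is elliptic with real-analytic coefficients on $\Omega_0^M$, rather than by the $\boldsymbol{\xi}$-only elliptic PDE behind (\ref{Schauder}), (\ref{otherSchauder}).
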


\begin{proof}  Throughout, we let $\mathcal{K}$ be an arbitrary compact subset of $\pi_i(\Omega_0^M)$.  Below, we divide the proof into two cases: case \ref{secondlimitlempart1} with $i\not\in\{\iota,\iota+1\}$, and case \ref{secondlimitlempart2} with $i\in\{\iota,\iota+1\}$. The proof of case \ref{secondlimitlempart2} is almost identical to that for case \ref{secondlimitlempart1}, so our exposition for the former case is correspondingly abbreviated.

\begin{enumerate}[leftmargin=*]
\item\label{secondlimitlempart1} 
We suppose that $i\not\in\{\iota,\iota+1\}$.  For this case, we define $x$, $y$, $\delta$, ${\rm F}$, $H$ (\ref{tilde}), $\boldsymbol{\xi}$, and $b$ as in case \ref{firstlimitlempart1} of the proof of lemma \ref{firstlimitlem}, and we define 
\be\label{Kdefn}E(\boldsymbol{\xi};x,\delta;y):=\delta^{-\Delta^+(\theta_1)}{\rm F}(\boldsymbol{\xi};x,\delta;y)=\delta^{-\bindnasrepma(\theta_1)}H(\boldsymbol{\xi};x,\delta;y).\ee 
Lemma \ref{firstlimitlem} implies that $E(\boldsymbol{\xi};x,\delta;y)=O(\delta^{-q})$ as $\delta\downarrow0$, where we define $q:=\bindnasrepma(\theta_1)>0$ throughout this proof.

To begin, we bound the growth of $E(\boldsymbol{\xi};x,\delta;y)$ as $\delta\downarrow0$.  Condition \ref{nextcond3} of the lemma implies that $H(\boldsymbol{\xi};x,0;y)=0$ in (\ref{Fpropagator}), so after expressing (\ref{Fpropagator}) in terms of $E$ with this condition, we find the integral equation
\be\label{Kpropagator}E(\boldsymbol{\xi};x,\delta;y)=E(\boldsymbol{\xi};x,b;y)-\frac{4}{\kappa}\sideset{}{_\delta^b}\int\frac{1}{\beta}\sideset{}{_0^\beta}\int\left(\frac{\eta}{\beta}\right)^{\bindnasrepma(\theta_1)}\eta\mathcal{M}[E](\boldsymbol{\xi};x,\eta;y)\,{\rm d}\eta\,{\rm d}\beta\ee 
for all $0<\delta<b$ and with $\mathcal{M}$ given in (\ref{i+12}).  Next, we choose bounded open sets $ U_0$, $ U_1,\ldots, U_m$, with $m:=\lceil q\rceil$, compactly embedded within each other and $\mathcal{K}$ as in (\ref{compactembedd}).  Taking the supremum of (\ref{Kpropagator}) over $ U_n$ gives
\be\label{Fpropagator2nd} \sup_{ U_n}|E(\boldsymbol{\xi};x,\delta;y)|\leq\sup_{ U_n}|E(\boldsymbol{\xi};x,b;y)|+\frac{4}{\kappa}\sideset{}{_\delta^b}\int\frac{1}{\beta}\sideset{}{_0^\beta}\int\left(\frac{\eta}{\beta}\right)^{\bindnasrepma(\theta_1)}\sup_{ U_n}|\eta
[E](\boldsymbol{\xi};x,\eta;y)|\,{\rm d}\eta\,{\rm d}\beta\ee
for all $0<\delta<b$.  Lemma \ref{firstlimitlem} with (\ref{Kdefn}) implies that the supremum of the integrand in (\ref{Fpropagator2nd})  is $O(\eta^{-q})$ as $\eta\downarrow0$.  (We recall that $q:=\bindnasrepma(\theta_1)$ in this proof.)  Hence, we may estimate the definite integral in (\ref{Fpropagator2nd}) to find
\be\label{firstsup}\sup_{ U_1}|E(\boldsymbol{\xi};x,\delta;y)|=\begin{cases}O(\delta^{-q+1}),& m>1 \\ O(1), & m=1\end{cases}\quad\text{as $\delta\downarrow0$}.\ee
Supposing that $m>1$, we recall from the proof of lemma \ref{firstlimitlem} that $H$, and therefore $E$, satisfies the Schauder interior estimate (\ref{Schauder})
\be\label{ESchauder}d_{n+1}\sup_{ U_{n+1}}|\partial^\varpi E(\boldsymbol{\xi};x,\delta;y)|\leq C(R_n)\sup_{ U_n}|E(\boldsymbol{\xi};x,\delta;y)|\quad\begin{array}{l}\\ \end{array}\ee
where $C$ is some positive-valued function, $d_n=\text{dist}(\partial U_n,\partial U_{n-1})$, $R_n=\text{diam}( U_n)/2$, and $\partial^\varpi$ is given by (\ref{multiindex1}).  If $m>1$, then after using (\ref{firstsup}) with (\ref{ESchauder}) to estimate the integrand of (\ref{Fpropagator2nd}) with $n=2$, we find that
\be\sup_{ U_2}|E(\boldsymbol{\xi};x,\delta;y)|=\begin{cases} O(\delta^{-q+2}), & m>2 \\ O(1), & m=2\end{cases}\quad\text{as $\delta\downarrow0$}.\ee
After repeating this process another $m-3$ times, we ultimately find that the left side of (\ref{Fpropagator2nd}) with $n=m-1$ is $O(\delta^{-q+m-1})$ as $\delta\downarrow0$.  Repeating this process one last time and invoking (\ref{Schauder}), we find that because $\mathcal{K}\subset\subset U_m$,
\be\label{lastSchauderK}\sup_{ U_m}|E(\boldsymbol{\xi};x,\delta;y)|=O(1)\quad\text{as $\delta\downarrow0$}\quad\Longrightarrow\quad \sup_{\mathcal{K}}|\partial^\varpi E(\boldsymbol{\xi};x,\delta;y)|=O(1)\quad\text{as $\delta\downarrow0$}.\ee

Now we use (\ref{lastSchauderK}) to show that the limits in (\ref{hlim+}) exist and are approached uniformly over $\mathcal{K}$.  The reasoning follows that used for the proof of lemma \red{4} in \cite{florkleb}.  Because the integrand of (\ref{Kpropagator}) with $\beta$ fixed is bounded over $0<\eta<b$,
\be \sup_{0<\delta<b}|E(\boldsymbol{\xi};x,\delta;y)-E(\boldsymbol{\xi};x,b;y)|\xrightarrow[b\downarrow0]{}0.\ee
Hence, the superior and inferior limits of $E(\boldsymbol{\xi};x,\delta;y)$ as $\delta\downarrow0$ are equal, so the limit ${\rm F}_2(\boldsymbol{\xi};x;y)$ of $E(\boldsymbol{\xi};x,\delta;y)$ as $\delta\downarrow0$ exists.  After taking the supremum of (\ref{Kpropagator}) over $\mathcal{K}$, sending $\delta\downarrow0$, and then replacing $b$ with $\delta$, we also find
\bea\label{supKK}\sup_{\mathcal{K}}|E(\boldsymbol{\xi};x,\delta;y)-{\rm F}_2(\boldsymbol{\xi};x;y)|&\leq&\frac{4}{\kappa}\sideset{}{_0^\delta}\int\frac{1}{\beta}\sideset{}{_0^\beta}\int\sup_\mathcal{K}|\eta\mathcal{M}[E](\boldsymbol{\xi};x,\eta;y)|\,{\rm d}\eta\,{\rm d}\beta\\
\label{followinglim}&\xrightarrow[\delta\downarrow0]{}&0.\eea
The limit (\ref{followinglim}) follows because the supremum on the right side of (\ref{supKK}) is bounded over $0<\eta<b$ thanks to (\ref{i+12}, \ref{lastSchauderK}).  Hence, the limit ${\rm F}_2(\boldsymbol{\xi};x;y)$ is approached uniformly over $\mathcal{K}$, and we identify it with $(F_{d^+}\circ\pi_i)(\boldsymbol{x})$ in (\ref{hlim+}).

We may show that the derivatives of $E$ with respect to the coordinates of $\boldsymbol{\xi}$, $x$, and $y$ approach limits (\ref{hlim+}) as $\delta\downarrow0$ uniformly over $\mathcal{K}$ by differentiating (\ref{Kpropagator}) with respect to these variables and following the same procedure.  Finally, we may prove the same for second derivatives of $E$ with respect to the coordinates of $\boldsymbol{\xi}$ by isolating these derivatives from (\ref{hpde}) in terms of quantities with the limits  (\ref{hlim+}) as $\delta\downarrow0$.

Next, we prove that if the limit ${\rm F}_2$ is zero, then ${\rm F}$ is zero.  After sending $\delta\downarrow0$ and replacing $b$ with $\delta$ in (\ref{Kpropagator}), inserting the assumption that ${\rm F}_2=0$, and taking the supremum over an open ball $\mathcal{B}_1\subset\subset\mathcal{K}$ of radius $R_1$, we find 
\be\label{V1}\sup_{\mathcal{B}_1}|E(\boldsymbol{\xi};x,\delta;y)|\leq\frac{4}{\kappa}\sideset{}{_0^\delta}\int\frac{1}{\beta}\sideset{}{_0^\beta}\int \sup_{\mathcal{B}_1}|\eta\mathcal{M}[E](\boldsymbol{\xi};x,\eta;y)|\,{\rm d}\eta\,{\rm d}\beta.\ee
Because the integrand is bounded over $0<\eta<\beta$ by a constant independent of $\beta$, the left side of (\ref{V1}) is $O(\delta)$.  With $\mathcal{B}_2$ an open ball concentric with $\mathcal{B}_1$ and of radius $R_2=R_1+d_2>R_1$, the Schauder estimate (\ref{ESchauder}) then gives
\be\label{Ksups}\sup_{\mathcal{B}_1}| E(\boldsymbol{\xi};x,\delta;y)|\leq c_1\frac{\delta}{d_1}\quad\Longrightarrow\quad  d_2\sup_{\mathcal{B}_2}|\delta\mathcal{M}[E](\boldsymbol{\xi};x,\delta;y)|\leq c_1c_2\frac{\delta}{d_1},\ee
for some constants $c_1$, $d_1$ (to be specified in (\ref{dk}) below), and $c_2:=C(R_1)$, with $C$ a continuous function over $(0,R_1]$ (slightly different from $C$ defined in (\ref{Schauder})).  Next, we iterate this estimation an infinite number of times.  We let 
\be\mathcal{B}_\infty\subset\subset\ldots\subset\subset\mathcal{B}_{k+1}\subset\subset\mathcal{B}_k\subset\subset\ldots\subset\subset\mathcal{B}_1\ee
be an infinite sequence of concentric balls, with $R_k$ the radius of $\mathcal{B}_k$, such that their intersection $\bigcap_k\mathcal{B}_k$ is a ball $\mathcal{B}_\infty$ of radius $R_\infty<R_k$ for all $k\in\mathbb{Z}^+$.  We choose the radii of the balls such that for all $k>2$,
\be\label{dk}d_k:=R_{k-1}-R_k=\frac{d_1}{k^{3/2}},\quad\text{where $d_1:=\frac{R_1-R_\infty}{\zeta(3/2)-1}$}\ee
and $\zeta$ is the Riemann zeta function.  This choice satisfies the necessary condition $\sum_{k=2}^\infty d_k=R_1-R_\infty$.  After using (\ref{Ksups}) to estimate the definite integral in (\ref{V1}) with $\mathcal{B}_2$ replacing $\mathcal{B}_1$ and applying (\ref{ESchauder}) again, we find 
\be\label{Ksupsnext}\sup_{\mathcal{B}_2}|E(\boldsymbol{\xi};x,\delta;y)|\leq\frac{4}{\kappa}c_1c_2\frac{\delta^2}{2^2d_1d_2}\quad\Longrightarrow\quad  d_3\sup_{\mathcal{B}_3}|\delta\mathcal{M}[E](\boldsymbol{\xi};x,\delta;y)|\leq\frac{4}{\kappa}c_1c_2c_3\frac{\delta^2}{2^2d_1d_2},\ee
with $c_3:=C(R_2)$.  Letting $c_k:=C(R_{k-1})$ for $k>1$, we repeat this process an infinite number of times to ultimately find that for all $k\in\mathbb{Z}^+$,  
\be\label{Ksupsnextk}\sup_{\mathcal{B}_k}|E(\boldsymbol{\xi};x,\delta;y)|\leq\frac{\kappa}{4}c_1c_2\dotsm c_k\frac{(4\delta/\kappa)^k}{(k!)^2d_1d_2\dotsm d_k}.\ee
Because $C$ is continuous on $(R_\infty,R_1)$, the sequence $c_k$ is bounded.  Therefore, after substituting the formula for $d_k$ (\ref{dk}) into (\ref{Ksupsnextk}) and recalling that $\mathcal{B}_\infty\subset\subset\mathcal{B}_k$ for all $k\in\mathbb{Z}^+$, we find 
\bea\label{Ksupslast}\sup_{\mathcal{B}_\infty}|E(\boldsymbol{\xi};x,\delta;y)|&\leq& \frac{\kappa}{4\sqrt{k!}}\left(\frac{4\delta}{\kappa d_1}\sup_{k\in\mathbb{Z}^+}c_k\right)^k,\quad k\in\mathbb{Z}^+\\
&\xrightarrow[k\rightarrow\infty]{}&0.\eea
Because $\mathcal{B}_\infty$ is an arbitrary ball in $\mathcal{K}$ and $\mathcal{K}$ is an arbitrary compact subset of $\pi_i(\Omega_0^M)$, it follows that $E$, and therefore ${\rm F}$ (\ref{Kdefn}), is zero.  We conclude that if ${\rm F}$ is not zero, then ${\rm F}_2$ is not zero.

\item\label{secondlimitlempart2} We suppose that $i\in\{\iota,\iota+1\}$.  The proof of the case $i=\iota+1$ (resp.\ $i=\iota$) is identical to that of the previous case, except that we replace ${\rm F}(\boldsymbol{\xi};x,\delta,y)$ with ${\rm F}(\boldsymbol{\xi};x,\delta)$ (\ref{FtoH}) (resp.\ (\ref{FtoH2})), $\theta_1$ with $h$, and $\mathcal{M}$ (\ref{i+12}) with $\mathcal{P}$ (\ref{newi+12}) (resp.\ $\mathcal{Q}$ (\ref{Q})).  After defining
\be\label{2ndKdefn} E(\boldsymbol{\xi};x,\delta):=\delta^{-\Delta^+(h)}{\rm F}(\boldsymbol{\xi};x,\delta)=\delta^{-\bindnasrepma(h)}H(\boldsymbol{\xi};x,\delta;y),\ee
we repeat the steps of the previous case \ref{secondlimitlempart1} to derive from (\ref{otherFpropagator}) the integral equation
\be\label{otherKpropagator}E(\boldsymbol{\xi};x,\delta)=E(\boldsymbol{\xi};x,b)-\frac{4}{\kappa}\sideset{}{_\delta^b}\int\frac{1}{\beta}\sideset{}{_0^\beta}\int\left(\frac{\eta}{\beta}\right)^{\bindnasrepma(h)}\left\{\begin{array}{ll}\eta\mathcal{P}[E](\boldsymbol{\xi};x,\eta), & i=\iota+1 \\ \eta\mathcal{Q}[E](\boldsymbol{\xi};x,\eta), & i=\iota\end{array}\right\}\,{\rm d}\eta\,{\rm d}\beta\ee
for all $0<\delta<b$.  From here, the rest of the proof proceeds exactly as did the proof of the case \ref{secondlimitlempart1}, except that we now use the other Schauder estimate (\ref{otherSchauder}).
\end{enumerate} 
With both cases justified, the proof is complete.
\end{proof}

\begin{lem}\label{pdelem} Suppose that $\kappa\in(0,8)$, $M>2$, and $h>-(\kappa-4)^2/16\kappa$, and let $h^+:=\theta_1+h+\Delta^+(h)$.  If $F:\Omega_0^M\rightarrow\mathbb{R}$ satisfies conditions \ref{nextcond1}--\ref{nextcond3} of lemma \ref{secondlimitlem} with $i=\iota>1$, then the limit $F_{h^+}$ (\ref{hlim+}) satisfies the (modified) null-state PDE centered on $x_j$
\be\label{nullstatemod}\Bigg[\frac{\kappa}{4}\partial_j^2+\sum_{k\neq j,\iota-1,\iota}^M\left(\frac{\partial_k}{x_k-x_j}-\frac{\theta_1}{(x_k-x_j)^2}\right)+\frac{\partial_{\iota-1}}{x_{\iota-1}-x_j}-\frac{h^+}{(x_{\iota-1}-x_j)^2}\Bigg](F_{h^+}\circ\pi_\iota)(\boldsymbol{x})=0\ee
for each $j\in\{1,2,\ldots,\iota-2,\iota+1,\ldots,M\}$ and the (modified) conformal Ward identities
\be\label{wardidmod} \begin{gathered}\sum_{k\neq\iota}^M\partial_k(F_{h^+}\circ\pi_\iota)(\boldsymbol{x})=0,\qquad
\Bigg[\sum_{k\neq\iota-1,\iota}^M(x_k\partial_k+\theta_1)+x_{\iota-1}\partial_{\iota-1}+h^+\Bigg](F_{h^+}\circ\pi_\iota)(\boldsymbol{x})=0,\\
\Bigg[\sum_{k\neq\iota-1,\iota}^M(x_k^2\partial_k+2\theta_1x_k)+x_{\iota-1}^2\partial_{\iota-1}+2h^+x_{\iota-1}\Bigg](F_{h^+}\circ\pi_\iota)(\boldsymbol{x})=0.\end{gathered}\ee
\end{lem}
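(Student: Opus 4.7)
The plan is to take each of the null-state PDEs (\ref{hpde}) and Ward identities (\ref{wardidh}) satisfied by $F$, evaluate at the collapsed configuration $\boldsymbol{x}_\delta$ of (\ref{xdelta}) with $i=\iota$, multiply by $\delta^{-\Delta^+(h)}$, and pass to the limit $\delta\downarrow 0$. The uniform convergence furnished by lemma \ref{secondlimitlem} lets us exchange this limit with most derivatives, and the new weight $h^+=\theta_1+h+\Delta^+(h)$ at the collapsed point emerges from three contributions: $\theta_1$ from the one-leg weight at $x_{\iota-1}$, $h$ from the anomalous weight at $x_\iota$, and $\Delta^+(h)$ from differentiating the leading behavior $F(\boldsymbol{x}_\delta)\sim\delta^{\Delta^+(h)}G$, where for brevity we write $G:=(F_{h^+}\circ\pi_\iota)(\boldsymbol{x})$.

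First we record three auxiliary limits. Lemma \ref{secondlimitlem} directly supplies (i) $\delta^{-\Delta^+(h)}F(\boldsymbol{x}_\delta)\to G$, $\delta^{-\Delta^+(h)}\partial_k F(\boldsymbol{x}_\delta)\to\partial_kG$, and $\delta^{-\Delta^+(h)}\partial_k^2 F(\boldsymbol{x}_\delta)\to\partial_k^2 G$ for $k\notin\{\iota-1,\iota\}$, all uniformly on compact subsets of $\pi_\iota(\Omega_0^M)$. Next, (ii) $\delta^{-\Delta^+(h)}[\partial_{\iota-1}F+\partial_\iota F](\boldsymbol{x}_\delta)\to\partial_{\iota-1}G$, because the bracketed sum equals $\partial_{x_{\iota-1}}$ of $F(\boldsymbol{x}_\delta)$ with $\delta$ and the other coordinates held fixed, so the uniform convergence of (i) legitimizes the exchange of $\partial_{x_{\iota-1}}$ with the $\delta\downarrow 0$ limit. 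Finally, (iii) $\delta^{1-\Delta^+(h)}\partial_\iota F(\boldsymbol{x}_\delta)\to\Delta^+(h)\,G$; to see this, set $E(\boldsymbol{\xi};x,\delta):=\delta^{-\Delta^+(h)}F(\boldsymbol{x}_\delta)$ as in (\ref{2ndKdefn}), compute $\partial_\iota F(\boldsymbol{x}_\delta)=\partial_\delta[\delta^{\Delta^+(h)}E]=\delta^{\Delta^+(h)-1}[\Delta^+(h)E+\delta\partial_\delta E]$, and use the integral equation (\ref{otherKpropagator}) together with the Schauder estimate (\ref{otherSchauder}), both already at hand from the proof of lemma \ref{secondlimitlem}, to conclude that $E\to G$ and $\delta\partial_\delta E\to 0$ uniformly.

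With (i)--(iii) in hand, fix $j\in\{1,\ldots,\iota-2,\iota+1,\ldots,M\}$ and isolate the two terms in (\ref{hpde}) involving the collapsing coordinates. Taylor expanding $(x_\iota-x_j)^{-n}=(x_{\iota-1}-x_j)^{-n}-n\delta(x_{\iota-1}-x_j)^{-n-1}+O(\delta^2)$, these terms become
\[
\frac{\partial_{\iota-1}F+\partial_\iota F}{x_{\iota-1}-x_j}-\frac{\delta\,\partial_\iota F}{(x_{\iota-1}-x_j)^2}-\frac{(\theta_1+h)F}{(x_{\iota-1}-x_j)^2}+O(\delta),
\]
where the error is a bounded multiple of $F$ and $\delta\partial_\iota F$. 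Multiplying by $\delta^{-\Delta^+(h)}$ and sending $\delta\downarrow 0$ through (i)--(iii), this contribution converges to $\partial_{\iota-1}G/(x_{\iota-1}-x_j)-h^+G/(x_{\iota-1}-x_j)^2$, the coefficient of $G$ consolidating as $\theta_1+h+\Delta^+(h)=h^+$. Combined with the remaining terms of (\ref{hpde}), whose limits come from (i) alone, this yields the modified null-state PDE (\ref{nullstatemod}). The three Ward identities in (\ref{wardidh}) are treated by the same trick: for the second, the combination $(x_{\iota-1}\partial_{\iota-1}+\theta_1)F+x_\iota\partial_\iota F+hF$ factors as $x_{\iota-1}(\partial_{\iota-1}+\partial_\iota)F+\delta\partial_\iota F+(\theta_1+h)F$, whose rescaled limit equals $x_{\iota-1}\partial_{\iota-1}G+h^+G$; the third is handled identically after expanding $x_\iota^2=x_{\iota-1}^2+2\delta x_{\iota-1}+\delta^2$.

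The main obstacle is (iii): the ingredient $\delta\partial_\delta E\to 0$ is not supplied verbatim by lemma \ref{secondlimitlem}, which only gives boundedness of $\delta\partial_\delta E$ via the Schauder estimate. Establishing the finer statement requires returning to (\ref{otherKpropagator}), differentiating in $\delta$, and observing that the integrand there is bounded in a neighborhood of $\delta=0$, so $\delta\partial_\delta E$ is controlled by a definite integral whose size goes to zero with $\delta$. Once (iii) is secured, the rest of the argument is algebraic bookkeeping driven by the identity $\theta_1+h+\Delta^+(h)=h^+$.
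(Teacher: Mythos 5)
Your proposal is correct and follows essentially the same route as the paper: the crucial ingredient, $\lim_{\delta\downarrow0}\delta\partial_\delta E=0$, is obtained exactly as in the paper by differentiating the integral equation (\ref{otherKpropagator}) in $\delta$ (the paper's (\ref{partialKpropagator})) and using boundedness of the integrand from lemma \ref{secondlimitlem}, after which one passes to the limit in (\ref{hpde}, \ref{wardidh}) using the uniform convergence and the identity $\theta_1+h+\Delta^+(h)=h^+$. The only difference is cosmetic: the paper first rewrites the PDEs and Ward identities in the variables $(\boldsymbol{\xi};x,\delta)$ (its (\ref{Knullstate}, \ref{wK})) before sending $\delta\downarrow0$, while you Taylor-expand the coefficients in the original coordinates, which amounts to the same bookkeeping.
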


\begin{proof} We define $\delta:=x_\iota-x_{\iota-1}$, $x:=x_\iota$, $\boldsymbol{\xi}$, and ${\rm F}(\boldsymbol{\xi};x,\delta)$ (\ref{FtoH2}) as in case \ref{firstlimitlempart3} from the proof of lemma \ref{firstlimitlem} with $i=\iota$, and we define $E$ as in (\ref{2ndKdefn}).  We note that the limit $F_{h^+}$ (\ref{hlim+}) is given by
\be (F_{h^+}\circ\pi_\iota)(\boldsymbol{x})=\lim_{\delta\downarrow0}E(\boldsymbol{\xi};x,\delta).\ee
From the proof of lemma \ref{secondlimitlem}, we know that $E$ satisfies the integral equation (\ref{otherKpropagator}) with $i=\iota$.  After differentiating this integral equation with respect to $\delta$, we find
\be\label{partialKpropagator}\delta\partial_\delta E(\boldsymbol{\xi};x,\delta)=\frac{4}{\kappa}\sideset{}{_0^\delta}\int\left(\frac{\eta}{\delta}\right)^{\bindnasrepma(h)}\eta\mathcal{Q}[E](\boldsymbol{\xi};x,\eta)\,{\rm d}\eta.\ee
Now, lemma \ref{secondlimitlem} implies that the integrand of (\ref{partialKpropagator}) is bounded as $\delta\downarrow0$.  Therefore, the right side of (\ref{partialKpropagator}) vanishes as $\delta\downarrow0$, so we have
\be\label{zerolim}\lim_{\delta\downarrow0}\delta\partial_\delta E(\boldsymbol{\xi};x,\delta)=0.\ee

With the limit (\ref{zerolim}) established, we straightforwardly prove the lemma by examining the system (\ref{hpde}, \ref{wardidh}) in the limit $\delta\downarrow0$.  In terms of the variables $\boldsymbol{\xi}$, $x$, and $\delta$, the null-state PDE (\ref{hpde}) centered on $x_j$ with $j\not\in\{\iota-1,\iota\}$ is
\begin{multline}\label{Knullstate}\Bigg[\frac{\kappa}{4}\partial_j^2+\sum_{k\neq j}\left(\frac{\partial_k}{\xi_k-\xi_j}-\frac{\theta_1}{(\xi_k-\xi_j)^2}\right)+\frac{\partial_x}{x-\xi_j}-\frac{\theta_1}{(x-\xi_j)^2}\\
-\frac{\delta\partial_\delta}{(x-\xi_j)(x-\delta-\xi_j)}-\frac{\Delta^+(h)}{(x-\xi_j)(x-\delta-\xi_j)}-\frac{h}{(x-\delta-\xi_j)^2}\Bigg]E(\boldsymbol{\xi};x,\delta)=0,\end{multline}
and the conformal Ward identities (\ref{wardidh}) are
\begin{gather}\begin{aligned}&\label{wK}\bigg[\sideset{}{_k}\sum\partial_k+\partial_x\bigg]E(\boldsymbol{\xi};x,\delta)=0,\quad\bigg[\sideset{}{_k}\sum(\xi_k\partial_k+\theta_1)+x\partial_x+\theta_1+\delta\partial_\delta+h+\Delta^+(h)\bigg]E(\boldsymbol{\xi};x,\delta)=0,\\
&\bigg[\sideset{}{_k}\sum(\xi_k^2\partial_k+2\theta_1\xi_k+x^2\partial_x+2\theta_1x+(2x-\delta)\delta\partial_\delta+2h(x-\delta)+(2x-\delta)\Delta^+(h)\bigg]E(\boldsymbol{\xi};x,\delta)=0.\end{aligned}\end{gather}
Because all of the quantities in (\ref{Knullstate}, \ref{wK}) approach their limits uniformly over compact subsets of $\pi_\iota(\Omega_0^M)$ as $\delta\downarrow0$, we may commute this limit with all differentiations in these equations that are not with respect to $\delta$.  After doing this and applying (\ref{zerolim}), we find that (\ref{Knullstate}) and (\ref{wK}) respectively go to (\ref{nullstatemod}) and (\ref{wardidmod}).
\end{proof}
If $F$ satisfies conditions \ref{nextcond1} and \ref{nextcond2} but not \ref{nextcond3} of lemma \ref{secondlimitlem}, then it is easy to show that the limit $F_{h^-}$ (\ref{hlim-}) satisfies the system (\ref{nullstatemod}) and (\ref{wardidmod}) with $h^+$ replaced by $h^-:=\theta_1+h+\Delta^-(h).$  (Indeed, to prove this claim, we follow the proof of lemma \red{5} in \cite{florkleb}.)  In either case, we interpret the system (\ref{nullstatemod}, \ref{wardidmod}) as the collection of null-state PDEs and conformal Ward identities for a certain $(M-1)$-point CFT correlation function.  This correlation function has a one-leg boundary operator at the coordinates in $\{x_j\}_{j\neq\iota-1}$ and a primary operator at $x_{\iota-1}$ with conformal weight $h^+$ or $h^-$.

\subsection{Estimates involving two intervals}\label{twointervals}

Lemmas \ref{kpzlem}--\ref{pdelem} begin the task of constructing the functions $F_1$, $F_2,\ldots,F_{2N-1}$ described in section \ref{Methodology}, but they are not sufficient to complete it because of the subtleties involved in collapsing neighboring intervals.  In this section, we derive two estimates, stated in lemmas \ref{farlem} and \ref{closelem}, that complete the construction.

To see how lemmas \ref{kpzlem}--\ref{pdelem} work together, we study the first steps of the construction that take us from $F_1:=F\in\mathcal{S}_N\setminus\{0\}$ to $F_2$ to $F_3$ to $F_4$.  First supposing that $(x_{2N-1},x_{2N})$ is a two-leg interval of $F$, lemmas \ref{kpzlem}, \ref{secondlimitlem}, and \ref{pdelem} imply
\begin{multline}\label{firsttwoleg}(x_{2N}-x_{2N-1})^{-\Delta^-(\theta_1)}F_1(\boldsymbol{x}) \xrightarrow[\substack{x_{2N}\\\qquad\rightarrow x_{2N-1}}]{\boldsymbol{x}\in\Omega_0^{2N}}0\quad \text{(the two-leg interval condition)} \\
\Longrightarrow\quad (F_2\circ\pi_{2N})(\boldsymbol{x})\,\,\,:=\lim_{\substack{x_{2N}\rightarrow x_{2N-1} \\ \boldsymbol{x}\in\Omega_0^{2N}}}(x_{2N}-x_{2N-1})^{-\Delta^+(\theta_1)}F_1(\boldsymbol{x})\,\,\left\{\begin{array}{ll} 1.\,\,\text{exists and is not zero},\\ 2.\,\,\text{satisfies (\ref{hpde}, \ref{wardidh}) of lemma \ref{firstlimitlem}} \\ \text{with $h=\theta_2$ and $\iota=M=2N-1$.}\end{array}\right.\end{multline}

Next, we suppose that both $(x_{2N-2},x_{2N-1})$ and $(x_{2N-1},x_{2N})$ are two-leg intervals of $F\in\mathcal{S}_N\setminus\{0\}$, and we wish to construct $F_3$ from $F_1$ by proving the following similar statement (figure \ref{Collapses}):
\begin{multline}\label{toprove}\left\{\begin{array}{l}(x_{2N\hphantom{-0}}-x_{2N-1})^{-\Delta^-(\theta_1)}F_1(\boldsymbol{x}) \xrightarrow[\substack{x_{2N\hphantom{-0}}\\\qquad\rightarrow x_{2N-1}}]{\boldsymbol{x}\in\Omega_0^{2N}}0\\
(x_{2N-1}-x_{2N-2})^{-\Delta^-(\theta_1)}F_1(\boldsymbol{x}) \xrightarrow[\substack{x_{2N-1}\\\qquad\rightarrow x_{2N-2}}]{\boldsymbol{x}\in\Omega_0^{2N}}0\end{array}\right.\quad \text{(the two-leg interval conditions)} \\
\Longrightarrow\quad (F_3\circ\pi_{2N-1})(\boldsymbol{x})\,\,\,:=\lim_{\substack{x_{2N-1}\rightarrow x_{2N-2} \\ \boldsymbol{x}\in\Omega_0^{2N-1}}}(x_{2N-1}-x_{2N-2})^{-\Delta^+(\theta_2)}F_2(\boldsymbol{x})\,\,\left\{\begin{array}{ll} 1.\,\,\text{exists and is not zero},\\ 2.\,\,\text{satisfies (\ref{hpde}, \ref{wardidh}) of lemma \ref{firstlimitlem}} \\ \text{with $h=\theta_3$ and $\iota=M=2N-2$.}\end{array}\right.\end{multline}
Taken together, lemmas \ref{kpzlem}--\ref{pdelem} almost prove (\ref{toprove}).  Indeed, (\ref{firsttwoleg}) gives $F_2$ with the stated properties, and then lemma \ref{firstlimitlem} with $h=\theta_2$ and $\iota=M=2N-1$ says that the limit
\be\label{thelim}\lim_{x_{2N-1}\rightarrow x_{2N-2}}(x_{2N-1}-x_{2N-2})^{-\Delta^-(\theta_2)}F_2(\boldsymbol{x}),\quad\boldsymbol{x}\in\Omega_0^{2N-1}\ee
exists.  Finally, lemmas \ref{kpzlem}, \ref{secondlimitlem}, and \ref{pdelem} imply that if this limit vanishes, then the conclusion of (\ref{toprove}) regarding $F_3$ follows.  Hence, proving that (\ref{thelim}) does vanish is what is left.  Indeed, this follows from the estimate
\begin{multline}\label{2claim3}\begin{cases} (x_{2N\hphantom{-0}}-x_{2N-1})^{-\Delta^-(\theta_1)}F_1(\boldsymbol{x})& \xrightarrow[\substack{x_{2N\hphantom{-0}}\\\qquad\rightarrow x_{2N-1}}]{\boldsymbol{x}\in\Omega_0^{2N}}0 \\ (x_{2N-1}-x_{2N-2})^{-\Delta^-(\theta_1)}F_1(\boldsymbol{x})& \xrightarrow[\substack{x_{2N-1}\\\qquad\rightarrow x_{2N-2}}]{\boldsymbol{x}\in\Omega_0^{2N}}0\end{cases}\\
\Longrightarrow\quad F_1(\boldsymbol{x})=O\left(\begin{array}{l}\begin{aligned}(x_{2N-1}-&x_{2N-2})^{\Delta^+(\theta_1)} \\ \times & (x_{2N}-x_{2N-1})^{\Delta^+(\theta_1)}(x_{2N}-x_{2N-2})^{\Delta^+(\theta_1)}\end{aligned}\end{array}\right)\quad\text{as}\,\,\,\left\{\begin{array}{l}\text{$x_{2N}\rightarrow x_{2N-1}$} \\ \text{$x_{2N-1}\rightarrow x_{2N-2}$}\end{array}\right.,\end{multline}
derived in lemma \ref{closelem} below.  Estimate (\ref{2claim3}) reveals the behavior of $F(\boldsymbol{x})$ as we collapse two adjacent intervals \emph{simultaneously.}

\begin{figure}[t]
\centering
\includegraphics[scale=0.27]{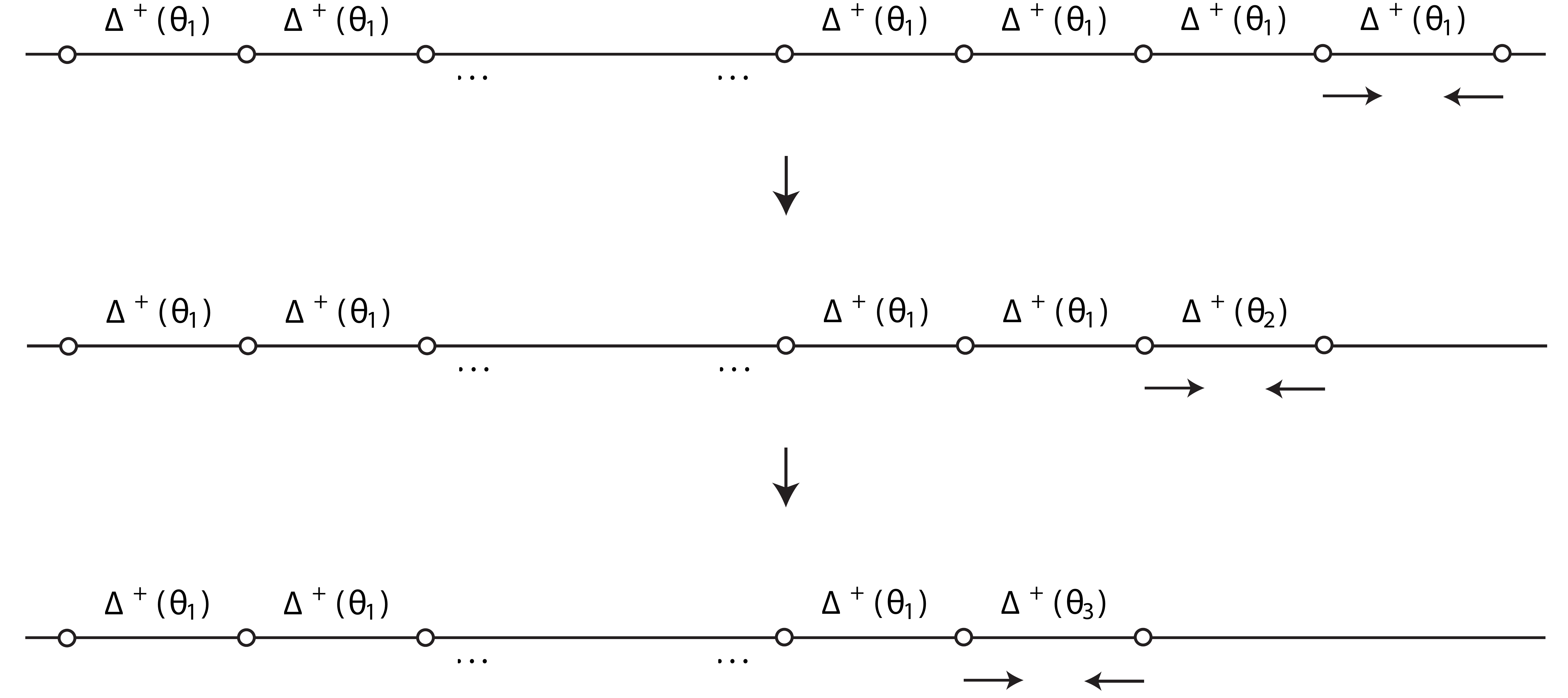}
\caption{Interval collapses sending $F_1$ (top) to $F_2$ (middle) to $F_3$ (bottom), etc.  On each line, we collapse the rightmost interval to take us to the next line beneath.  The power law for each interval collapse is shown above the respective interval.}
\label{Collapses}
\end{figure}

Continuing, we suppose that $(x_{2N-3},x_{2N-2})$, $(x_{2N-2},x_{2N-1})$, and $(x_{2N-1},x_{2N})$ are two-leg intervals of $F\in\mathcal{S}_N\setminus\{0\}$, and we wish to construct $F_4$ from $F_1$ by proving the following statement (figure \ref{Collapses}):
\begin{multline}\label{toprove2}\left\{\begin{array}{l}(x_{2N\hphantom{-0}}-x_{2N-1})^{-\Delta^-(\theta_1)}F_1(\boldsymbol{x}) \xrightarrow[\substack{x_{2N\hphantom{-0}}\\\qquad\rightarrow x_{2N-1}}]{\boldsymbol{x}\in\Omega_0^{2N}}0\\
(x_{2N-1}-x_{2N-2})^{-\Delta^-(\theta_1)}F_1(\boldsymbol{x}) \xrightarrow[\substack{x_{2N-1}\\\qquad\rightarrow x_{2N-2}}]{\boldsymbol{x}\in\Omega_0^{2N}}0\\
(x_{2N-2}-x_{2N-3})^{-\Delta^-(\theta_1)}F_1(\boldsymbol{x}) \xrightarrow[\substack{x_{2N-2}\\\qquad\rightarrow x_{2N-3}}]{\boldsymbol{x}\in\Omega_0^{2N}}0\end{array}\right.\quad \text{(the two-leg interval conditions)} \\
\Longrightarrow\quad (F_4\circ\pi_{2N-2})(\boldsymbol{x}):=\lim_{\substack{x_{2N-2}\rightarrow x_{2N-3} \\ \boldsymbol{x}\in\Omega_0^{2N-2}}}(x_{2N-2}-x_{2N-3})^{-\Delta^+(\theta_3)}F_3(\boldsymbol{x})\,\,\left\{\begin{array}{ll} 1.\,\,\text{exists and is not zero},\\ 2.\,\,\text{satisfies (\ref{hpde}, \ref{wardidh}) of lemma \ref{firstlimitlem}} \\ \text{with $h=\theta_4$ and $\iota=M=2N-3$.}\end{array}\right.\end{multline} 
Taken together, lemmas \ref{kpzlem}--\ref{pdelem} again almost prove (\ref{toprove2}).  Indeed, (\ref{toprove}) gives $F_3$ with the stated properties, and then lemma \ref{firstlimitlem} with $h=\theta_3$ and $\iota=M=2N-2$ says that the limit
\be\label{thelim2}\lim_{x_{2N-2}\rightarrow x_{2N-3}}(x_{2N-2}-x_{2N-3})^{-\Delta^-(\theta_3)}F_3(\boldsymbol{x}),\quad\boldsymbol{x}\in\Omega_0^{2N-2}\ee
exists.  Finally, lemmas \ref{kpzlem}, \ref{secondlimitlem}, and \ref{pdelem} imply that if this limit vanishes, then the conclusion of (\ref{toprove2}) regarding $F_4$ follows.  Hence, proving that (\ref{thelim2}) does vanish is what is left.  Indeed, this follows from the estimate
\begin{multline}\label{2claim4}\begin{cases} (x_{2N-1}-x_{2N-2})^{-\Delta^-(\theta_2)}F_2(\boldsymbol{x})& \xrightarrow[\substack{x_{2N-1}\\\qquad\rightarrow x_{2N-2}}]{\boldsymbol{x}\in\Omega_0^{2N-1}}0 \\ (x_{2N-2}-x_{2N-3})^{-\Delta^-(\theta_1)}F_2(\boldsymbol{x})& \xrightarrow[\substack{x_{2N-2}\\\qquad\rightarrow x_{2N-3}}]{\boldsymbol{x}\in\Omega_0^{2N-1}}0\end{cases}\\
\Longrightarrow\quad F_2(\boldsymbol{x})=O\left(\begin{array}{l}\begin{aligned}(x_{2N-2}-&x_{2N-3})^{\Delta^+(\theta_1)} \\ \times & (x_{2N-1}-x_{2N-2})^{\Delta^+(\theta_2)}(x_{2N-1}-x_{2N-3})^{\Delta^+(\theta_1)}\end{aligned}\end{array}\right)\quad\text{as}\,\,\,\left\{\begin{array}{l}\text{$x_{2N-1}\rightarrow x_{2N-2}$} \\ \text{$x_{2N-2}\rightarrow x_{2N-3}$}\end{array}\right.,\end{multline}
derived in lemma \ref{closelem} below.  We may use (\ref{2claim4}) because the first two vanishing limits of the ``two-leg interval conditions"  in (\ref{toprove2}) imply that the limit (\ref{thelim}) with $x_{2N-1}\rightarrow x_{2N-2}$ vanishes thanks to (\ref{toprove}), and the estimate
\be\label{implies}(x_{2N-2}-x_{2N-3})^{-\Delta^-(\theta_1)}F_1(\boldsymbol{x})\xrightarrow[\substack{x_{2N-2}\\\qquad\rightarrow x_{2N-3}}]{\boldsymbol{x}\in\Omega_0^{2N}}0\quad\Longrightarrow\quad(x_{2N-2}-x_{2N-3})^{-\Delta^-(\theta_1)}F_2(\boldsymbol{x})\xrightarrow[\substack{x_{2N-2}\\\qquad\rightarrow x_{2N-3}}]{\boldsymbol{x}\in\Omega_0^{2N-1}}0,\ee
derived in lemma \ref{farlem} below, implies that the other limit in (\ref{2claim4}) with $x_{2N-2}\rightarrow x_{2N-3}$ vanishes.  Estimate (\ref{implies}) reveals the behavior of $F(\boldsymbol{x})$ as we collapse two non-adjacent intervals $(x_{2N-3},x_{2N-2})$ and $(x_{2N-1},x_{2N})$ \emph{simultaneously.}

Combined with lemmas \ref{kpzlem}--\ref{pdelem} of section \ref{oneinterval}, lemmas \ref{farlem} and \ref{closelem} are the last ingredients that we need to construct the two-point function $F_{2N-1}$ (\ref{2pt}) described in section \ref{Methodology} in order to prove lemma \ref{alltwoleglem}.  In the rest of this section, we present lemmas \ref{farlem} and \ref{closelem} with their proofs, and in section \ref{theproof}, we use them with lemmas \ref{kpzlem}--\ref{pdelem} to prove lemma \ref{alltwoleglem}.

But first, we motivate our analysis of simultaneous interval collapse by reconsidering the case in which we collapse just one interval.  In section \ref{oneinterval}, we found that we may write a solution $F$ of the system (\ref{hpde}, \ref{wardidh}) as
\be\label{whatwefound} F(\boldsymbol{x})=O((x_i-x_{i-1})^{\Delta^-(d_i)})+O((x_i-x_{i-1})^{\Delta^+(d_i)})\quad\text{as $x_i\rightarrow x_{i-1}$},\quad d_i=\begin{cases} \theta_1,& i\not\in\{\iota,\iota+1\} \\ h, & i\in\{\iota,\iota+1\}\end{cases}\ee
if we collapse just the interval $(x_{i-1},x_i)$.  Supposing that $i\not\in\{\iota,\iota+1\}$, we anticipate (\ref{whatwefound}) from the null-state PDE (\ref{hpde}) with $j=i$ after writing this PDE as $\mathcal{L}[F]=\mathcal{M}[F]$, with $\mathcal{L}$ and $\mathcal{M}$ respectively given by (\ref{i+11}) and (\ref{i+12}).  Indeed,
\be\label{Lscale}\text{$F(\boldsymbol{x}_\delta)\underset{\delta\downarrow0}\sim C\delta^{-p}$ for $C,p\in\mathbb{R}$}\quad\Longrightarrow\quad\begin{cases}\mathcal{L}[F](\boldsymbol{x}_\delta)=O(\delta^{-p-2})\\ \mathcal{M}[F](\boldsymbol{x}_\delta)=O(\delta^{-p-1})\end{cases}\quad\text{as $\delta\downarrow0$},\quad\text{($\boldsymbol{x}_\delta$ given by (\ref{xdelta})),}\ee  
if we make the natural assumption that for all $m\in\mathbb{Z}^+$, we have $\partial_\delta^mF(\boldsymbol{x}_\delta)=O(\delta^{-p-m})$ and $\partial_j^mF(\boldsymbol{x}_\delta)=O(\delta^{-p})$ if $j\neq i$.  If (\ref{Lscale}) is true, then we may approximate solutions of $\mathcal{L}[F]=\mathcal{M}[F]$ in the limit $\delta\downarrow0$ thus:
\be\label{nextbehavior}\mathcal{L}[F](\boldsymbol{x}_\delta)\underset{\delta\downarrow0}{\approx}0\quad\Longrightarrow\quad F(\boldsymbol{x}_\delta)\underset{\delta\downarrow0}{\sim}O(\delta^{\Delta^-(d_i)})+O(\delta^{\Delta^+(d_i)}).\ee
This decomposition for small $\delta>0$ (\ref{nextbehavior}) matches that of (\ref{whatwefound}) previously derived in the proofs of lemmas \ref{firstlimitlem} and \ref{secondlimitlem}.  (Almost identical arguments anticipate the same result if $i\in\{\iota,\iota+1\}$.)

In the case of lemma \ref{farlem} stated below, we simultaneously collapse two non-adjacent intervals $(x_{i-1},x_i)$ and $(x_{j-1},x_j)$ with respective lengths $\epsilon$ and $\delta$.  Reasoning similar to that of the previous paragraph suggests that a solution $F$ of the system (\ref{hpde}, \ref{wardidh}) behaves as
\be\label{thetwoterms}F(\boldsymbol{x}_{\delta,\epsilon})\underset{\delta,\epsilon\downarrow0}{\sim}[O(\delta^{\Delta^-(d_j)})+O(\delta^{\Delta^+(d_j)})][O(\epsilon^{\Delta^-(d_i)})+O(\epsilon^{\Delta^+(d_i)})],\ee
where $\boldsymbol{x}_{\delta,\epsilon}\in\Omega_0$ (\ref{xdeltaepsilon}) is such that $x_j=x_{j-1}+\delta$ and $x_i=x_{i-1}+\epsilon$.  To derive (\ref{thetwoterms}), we execute the analysis of lemmas \ref{firstlimitlem} and \ref{secondlimitlem} twice, once per collapsing interval, and in either iteration we use the Green function (\ref{Jgreen}) of the case with one interval collapse.  (The particular conditions of lemma \ref{farlem} imply that we should keep only the second term in either bracket of (\ref{thetwoterms}).  In so doing, we obtain estimate (\ref{farest}) below.)

In the case of lemmas \ref{firstcloselem} and \ref{closelem}, we simultaneously collapse the two adjacent intervals $(x_{\iota-2},x_{\iota-1})$ and $(x_{\iota-1},x_\iota)$ with respective lengths $\delta$ and $\varepsilon-\delta$.  Again, we wish to find the behavior of a solution $F$ of the system (\ref{hpde}, \ref{wardidh}) in this situation.  After writing the null-state PDE (\ref{hpde}) with $j=\iota-1>1$ as $\mathscr{L}[F]=\mathscr{M}[F]$, where
\bea\label{Lformula}\mathscr{L}[F](\boldsymbol{x}_{\delta,\varepsilon})&:=&\left[\frac{\kappa}{4}\partial_\delta^2+\frac{\partial_\delta}{\delta}+\left(\frac{1}{\delta}+\frac{1}{\varepsilon-\delta}\right)\partial_\varepsilon-\left(\frac{\theta_1}{\delta^2}+\frac{h}{(\varepsilon-\delta)^2}\right)\right]F(\boldsymbol{x}_{\delta,\varepsilon}),\\
\label{preMformula}\mathscr{M}[F](\boldsymbol{x}_{\delta,\varepsilon})&:=&\Bigg[\frac{\partial_{\iota-1}}{\delta}\,\,+\sum_{j\neq \iota-1,\iota}^M\left(\frac{\theta_1}{(x_j-x_\iota-\delta)^2}-\frac{\partial_j}{x_j-x_\iota-\delta}\right)\Bigg]F(\boldsymbol{x}_{\delta,\varepsilon}),\eea
where $\boldsymbol{x}_{\delta,\varepsilon}\in\Omega_0$ (\ref{boldx}) is such that $x_{\iota-1}=x_{\iota-2}+\delta$ and $x_\iota=x_{\iota-2}+\varepsilon$, and following the (non-rigorous) arguments of the previous paragraphs, we find that
\be\label{Lscale1}\text{$F(\boldsymbol{x}_{\delta,\varepsilon})\underset{\delta,\varepsilon\downarrow0}\sim C\delta^{-p}\varepsilon^{-p}$ for $C,p\in\mathbb{R}$}\quad\Longrightarrow\quad\begin{cases}\mathscr{L}[F](\boldsymbol{x}_{\delta,\varepsilon})=O(\delta^{-p-2}\varepsilon^{-p-2})\\ \mathscr{M}[F](\boldsymbol{x}_{\delta,\varepsilon})=O(\delta^{-p-1}\varepsilon^{-p})\end{cases}\quad\text{as $\delta,\varepsilon\downarrow0$}.\ee
Hence, because $\mathscr{L}[F]$ contains all of the terms that are ostensibly largest as $\delta,\varepsilon\downarrow0$, an appropriate solution of the PDE $\mathscr{L}[u]=0$ may presumably predict the behavior of $F(\boldsymbol{x}_{\delta,\varepsilon})$ in this limit.

Unlike its relative $\mathcal{L}$ (\ref{i+11}), $\mathscr{L}$ is a partial differential operator, so finding solutions of the PDE $\mathscr{L}[u]=0$ is more difficult than finding solutions of $\mathcal{L}[u]=0$.  Fortunately, $\mathscr{L}$ becomes separable after we change coordinates via (\ref{translate}) below.  Requiring that $u(\delta,\varepsilon)$ and $F(\boldsymbol{x}_{\delta,\varepsilon})$ exhibit the same asymptotic behavior as either $\delta\downarrow0$ or $\delta\uparrow\varepsilon$ with $\varepsilon>0$ fixed yields a unique, separable solution $u$ of this PDE, and we anticipate that its asymptotic behavior as $\delta,\varepsilon\downarrow0$ matches that of $F(\boldsymbol{x}_{\delta,\varepsilon})$ in the same limit.  This reasoning motivates the proof of lemma \ref{closelem} below just as (\ref{nextbehavior}) motivates the proofs of lemmas \ref{firstlimitlem} and \ref{secondlimitlem} above.

\begin{lem}\label{farlem}Suppose that $\kappa\in(0,8)$ and $M>3$, and for some $\iota\in\{2,3,\ldots,M\}$ and $j\in\{2,3,\ldots,M\}\setminus\{\iota-1,\iota,\iota+1\}$ (so the intervals $(x_{j-1},x_j)$ and $(x_{\iota-1},x_\iota)$ are neither adjacent nor identical), let
\be\label{xdeltaepsilon}\boldsymbol{x}_{\delta,\epsilon}:=\left.\begin{cases} (x_1,x_2,\ldots,x_{j-1},x_{j-1}+\delta,x_{j+1},\ldots,x_{\iota-1},x_{\iota-1}+\epsilon,x_{\iota+1},\ldots,x_M), & j<\iota-1 \\ (x_1,x_2,\ldots,x_{\iota-1},x_{\iota-1}+\epsilon,x_{\iota+1},\ldots,x_{j-1},x_{j-1}+\delta,x_{j+1},\ldots,x_M), & j>\iota+1\end{cases}\right\}\in\Omega_0^M.\ee
If $F:\Omega_0^M\rightarrow\mathbb{R}$ satisfies conditions \ref{nextcond1}--\ref{nextcond3} of lemma \ref{secondlimitlem} for $i\in\{\iota,j\}$, then for any compact $\mathcal{K}\subset\pi_{\iota,j}(\Omega_0^M)$,
\be\label{farest}\sup_{\mathcal{K}}|F(\boldsymbol{x}_{\delta,\epsilon})|=O(\delta^{\Delta^+(\theta_1)}\epsilon^{\Delta^+(h)})\quad\text{as $\delta,\epsilon\downarrow0$.}\ee
\end{lem}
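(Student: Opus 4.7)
The plan is to execute the Green-function and Schauder-estimate machinery from the proofs of lemmas \ref{firstlimitlem} and \ref{secondlimitlem} twice, once per collapsing interval, using the \emph{same} one-variable Green function $J$ of (\ref{Jgreen}) both times.  The key point that makes this possible is that the two intervals $(x_{j-1},x_j)$ and $(x_{\iota-1},x_\iota)$ are non-adjacent: the null-state PDE (\ref{hpde}) centered on $x_j$ has coefficients that remain bounded as $\epsilon\downarrow 0$, and symmetrically the null-state PDE centered on $x_\iota$ has coefficients that remain bounded as $\delta\downarrow 0$.  Let $\mathcal{K}\subset\pi_{\iota,j}(\Omega_0^M)$ be a compact set; the target is to show that $E(\boldsymbol{\xi};\delta,\epsilon):=\delta^{-\Delta^+(\theta_1)}\epsilon^{-\Delta^+(h)}F(\boldsymbol{x}_{\delta,\epsilon})$ is bounded on $\mathcal{K}$ uniformly as $\delta,\epsilon\downarrow 0$, where $\boldsymbol{\xi}$ collects the coordinates not involved in the two collapses.

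First I would cast the null-state PDE centered on $x_j$ in the form $\mathcal{L}_\delta[F]=\mathcal{M}_\delta[F]$, where $\mathcal{L}_\delta$ is the Euler operator with characteristic exponents $\Delta^\pm(\theta_1)$ (as in (\ref{i+11})) and $\mathcal{M}_\delta$ gathers the remaining terms (as in (\ref{i+12})); inverting $\mathcal{L}_\delta$ against $J$ exactly as in the proof of lemma \ref{secondlimitlem} yields an integral equation of the form (\ref{Kpropagator}) for $\delta^{-\Delta^+(\theta_1)}F$ in which condition \ref{nextcond3} at $i=j$ is precisely what allows us to drop the would-be $\delta^{\Delta^-(\theta_1)}$ boundary term.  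Performing the identical construction for the PDE centered on $x_\iota$ (with weight $h$, so the characteristic exponents are $\Delta^\pm(h)$) gives a parallel integral equation for $\epsilon^{-\Delta^+(h)}F$, with condition \ref{nextcond3} at $i=\iota$ eliminating the $\epsilon^{\Delta^-(h)}$ term.

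I would then build a single strictly elliptic PDE for $E$ by combining the remaining null-state PDEs (those centered at $x_k$ with $k\neq j,\iota$) with the three Ward identities (\ref{wardidh}) to eliminate the derivatives $\partial_{j-1}E$, $\partial_j E$, $\partial_{\iota-1}E$, $\partial_\iota E$, and the Euler derivatives $\delta\partial_\delta E$ and $\epsilon\partial_\epsilon E$ in favor of derivatives of $E$ with respect to the coordinates of $\boldsymbol{\xi}$, with coefficients that remain bounded and non-degenerate in the joint limit.  This yields a two-parameter family of Schauder interior estimates (Cor.\ 6.3 of \cite{giltru}) bounding $\partial^\varpi E$ on $U_{n+1}$ by $E$ on $U_n$ with constants independent of $(\delta,\epsilon)$, for any multi-index $\varpi$ in $\boldsymbol{\xi}$ and any of the Euler-type operators listed in (\ref{multiindex1}) in either collapsing variable.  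Iterating the two integral equations in alternation against this Schauder estimate, starting from the initial bound $E=O(\delta^{-q_1}\epsilon^{-q_2})$ with $q_i$ determined by (\ref{powerlaw}) and $\Delta^+$, improves each exponent by one unit per round; after $\lceil q_1\rceil+\lceil q_2\rceil$ rounds the growth is absorbed and the joint bound $\sup_{\mathcal{K}}|E|=O(1)$ follows.

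The main obstacle is ensuring that the two iterations can be genuinely coupled, rather than performed only sequentially.  A naive sequential application of lemma \ref{secondlimitlem} (first $\delta\downarrow 0$ with $\epsilon$ held positive, then $\epsilon\downarrow 0$) would only control the iterated limit, not the joint limit, because the implicit constants in the first step could in principle blow up as $\epsilon\downarrow 0$.  Non-adjacency of the two intervals is exactly what rules this out: the coefficient functions $(x_k-x_j)^{-1}$ and $(x_k-x_j)^{-2}$ appearing in $\mathcal{M}_\delta$ never feature $k=\iota-1$ or $k=\iota$ in a way that forces $x_k-x_j\to 0$ as $\epsilon\downarrow 0$, so $\mathcal{M}_\delta$ and the Schauder constants remain uniformly bounded on $\mathcal{K}$ for all small $\epsilon$, and symmetrically for $\mathcal{M}_\epsilon$ as $\delta\downarrow 0$.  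This uniformity is what lets the two iterative improvements compose multiplicatively, yielding the estimate (\ref{farest}).
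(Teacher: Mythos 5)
Your overall strategy --- run the one-variable Green function $J$ of (\ref{Jgreen}) once per collapsing interval, back it up with Schauder interior estimates that are uniform in both $\delta$ and $\epsilon$, and iterate --- is the same as the paper's, but the step on which everything hinges has a genuine gap: the construction of the uniformly elliptic PDE. You propose to use the three conformal Ward identities (\ref{wardidh}) to eliminate \emph{all} derivatives in the collapsing coordinates ($\partial_{j-1},\partial_j,\partial_{\iota-1},\partial_\iota$, equivalently $\partial_x$, $\delta\partial_\delta$, $\partial_y$, $\epsilon\partial_\epsilon$ in the natural variables) in favor of $\boldsymbol{\xi}$-derivatives, so that the Schauder estimate has $x,\delta,y,\epsilon$ all as parameters. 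A counting argument shows this is impossible: there are four independent first-order derivative combinations to remove and only three Ward identities, and the remaining null-state PDEs cannot be used for this purpose because they are second order and would reintroduce second derivatives in the parameter variables. This is precisely the ``key difference'' the paper flags: it resolves it by keeping $x$ as an \emph{active} variable alongside $\boldsymbol{\xi}$ (so only $\delta\partial_\delta{\rm F}$, $\epsilon\partial_\epsilon{\rm F}$, $\partial_y{\rm F}$ need to be eliminated, matching the three identities), building the PDE from the $\delta$-weighted difference of the null-state PDEs centered at $x_{j-1}$ and $x_j$ plus a large multiple $4c/\kappa$ of the spectator PDEs, and verifying strict ellipticity of the resulting principal part (which contains $\partial_x^2$, $\partial_k^2$, and the mixed $\partial_x\partial_k$) via Sylvester's criterion (\ref{coefmatrix})--(\ref{firstdet}) with $c$ chosen large. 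Without some such construction your claimed $(\delta,\epsilon)$-uniform Schauder bound (and hence the whole iteration) is unsupported; your appeal to non-adjacency controls the coefficients of $\mathcal{M}_\delta$ and $\mathcal{M}_\epsilon$, but not the missing elimination.

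A secondary issue is the order of the iterations. You propose to work directly with $E=\delta^{-\Delta^+(\theta_1)}\epsilon^{-\Delta^+(h)}F$ and feed the raw power-law bound (\ref{powerlaw}) into the condition-\ref{nextcond3}-based integral equations of type (\ref{Kpropagator}). But that integral equation is derived from (\ref{futurepropagator})--(\ref{Fpropagator}), whose validity (existence of the $\delta\downarrow0$ limit, vanishing of $H(\cdot,0)$, and convergence of the inner $\eta$-integral) rests on the prior $O(1)$ bound in the $\Delta^-$ normalization; under only (\ref{powerlaw}) the inner integral need not converge. The paper therefore runs four stages over the nested sets $U_0\supset\supset\cdots\supset\supset U_{4m}$: first it bounds $I=\delta^{-\Delta^-(\theta_1)}\epsilon^{-\Delta^-(h)}F=O(1)$ in each variable in turn (no use of condition \ref{nextcond3}), and only then iterates the double-integral equations for $L=\delta^{-\Delta^+(\theta_1)}\epsilon^{-\Delta^+(h)}F$, again one variable at a time, to reach (\ref{farest}). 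Your alternating scheme should be reorganized along these lines for the estimates to close.
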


\begin{proof} We let $\mathcal{K}$ be an arbitrary compact subset of $\pi_i(\Omega_0^M)$, we define 
\be\label{vars}\delta:=x_j-x_{j-1},\quad x:=x_{j-1},\quad\epsilon:=x_\iota-x_{\iota-1},\quad y:=x_\iota,\ee
and we restrict $\delta$ and $\epsilon$ to $0<\delta,\epsilon<b$ where $b$ is small enough to ensure that $x_\iota$ and $x_j$ are respectively less than $x_{\iota+1}$ and $x_{j+1}$ (if these coordinates exist).  Furthermore, we relabel the other $M-4$ coordinates of $\boldsymbol{x}$ in increasing order as $\xi_1$, $\xi_2,\ldots,\xi_{M-4}$, we let $\boldsymbol{\xi}:=(\xi_1,\xi_2,\ldots,\xi_{M-4})$, and we define
\be\label{Fdef}{\rm F}(\boldsymbol{\xi};x,\delta;y,\epsilon):=\begin{cases} F(\xi_1,\xi_2,\ldots,\xi_{j-2},x,x+\delta,\xi_{j-1},\ldots,\xi_{\iota-4},y-\epsilon,y,\xi_{\iota-3},\ldots,\xi_{M-4}),& j<\iota-1 \\ F(\xi_1,\xi_2,\ldots,\xi_{j-2},y-\epsilon,y,\xi_{j-1},\ldots,\xi_{\iota-4},x,x+\delta,\xi_{\iota-3},\ldots,\xi_{M-4}),& j>\iota+1\end{cases}.\ee

Although this proof resembles those of lemmas \ref{firstlimitlem} and \ref{secondlimitlem}, it has a key difference.  In the latter proofs, the Schauder estimate (\ref{Schauder}) follows from the strictly elliptic PDE that we constructed in steps \ref{step11}--\ref{step13} of the proof of lemma \ref{firstlimitlem}.  If we send $\epsilon\downarrow0$, then some of the PDE's coefficients, and thus the function $C$ in (\ref{Schauder}), blow up, destroying our estimates.

Therefore, before we proceed with the proof of lemma \ref{farlem}, we construct a new, strictly elliptic PDE that avoids this problem.  This PDE has the following features:
\begin{itemize}
\item Derivatives of ${\rm F}$ are only with respect to $x$ and the coordinates of $\boldsymbol{\xi}$,
\item $y$, $\delta$, and $\epsilon$ appear as parameters,
\item the coefficients  do not vanish or blow up as $\delta,\epsilon\downarrow0$, and
\item the PDE is strictly elliptic in a given compactly embedded subset of $\pi_{\iota,j}(\Omega_0)$.
\end{itemize}
The construction of this PDE is nearly identical to that of a similar PDE used in the first half of the proof of lemma \red{12} in \cite{florkleb}.  For this reason, we only sketch the steps,  leaving the algebraic details to the reader.
\begin{enumerate}
\item\label{step0}  To begin, we express the null-state PDE centered on $x_k$ (\ref{hpde}) with $k\not\in\{\iota-1,\iota\}$ in terms of the variables in (\ref{vars}) and the coordinates of $\boldsymbol{\xi}$.  The null-state PDE (\ref{hpde}) centered on $x_{j-1}$ becomes
\begin{multline}\label{Iith}\Bigg[\frac{\kappa}{4}(\partial_x-\partial_\delta)^2+\frac{\partial_\delta}{\delta}+\sum_l\left(\frac{\partial_l}{\xi_l-x}-\frac{\theta_1}{(\xi_l-x)^2}\right)\\
+\frac{\partial_y}{y-x}-\frac{\epsilon\partial_\epsilon}{(y-x)(y-\epsilon-x)}-\frac{\theta_1}{(y-x)^2}-\frac{h}{(y-\epsilon-x)^2}\Bigg]{\rm F}(\boldsymbol{\xi};x,\delta;y,\epsilon)=0,\end{multline}
the null-state PDE (\ref{hpde}) centered on $x_j$ becomes
\newpage
\begin{multline}\label{Ii+1th}\Bigg[\frac{\kappa}{4}\partial_\delta^2-\frac{(\partial_x-\partial_\delta)}{\delta}+\sum_k\left(\frac{\partial_l}{\xi_l-x-\delta}-\frac{\theta_1}{(\xi_l-x-\delta)^2}\right)\\
+\frac{\partial_y}{y-x}-\frac{\epsilon\partial_\epsilon}{(y-x)(y-\epsilon-x)}-\frac{\theta_1}{(y-x)^2}-\frac{h}{(y-\epsilon-x)^2}\Bigg]{\rm F}(\boldsymbol{\xi};x,\delta;y,\epsilon)=0,\end{multline}
and the null-state PDE (\ref{hpde}) centered on the coordinate of $\boldsymbol{x}$ that we now call $\xi_k$ becomes
\begin{multline}\label{Inotii+1}\Bigg[\frac{\kappa}{4}\partial_k^2+\sum_{l\neq k}\left(\frac{\partial_l}{\xi_l-\xi_k}-\frac{\theta_1}{(\xi_l-\xi_k)^2}\right)\\
\begin{aligned}&+\frac{\partial_x}{x-\xi_k}-\frac{\delta\partial_\delta}{(x-\xi_k)(x+\delta-\xi_k)}-\frac{\theta_1}{(x-\xi_k)^2}-\frac{\theta_1}{(x+\delta-\xi_k)^2}\\
&+\frac{\partial_y}{y-\xi_k}-\frac{\epsilon\partial_\epsilon}{(y-\xi_k)(y-\epsilon-\xi_k)}-\frac{\theta_1}{(y-\xi_k)^2}-\frac{h}{(y-\epsilon-\xi_k)^2}\Bigg]{\rm F}(\boldsymbol{\xi};x,\delta;y,\epsilon)=0.\end{aligned}\end{multline}
Also, the three conformal Ward identities (\ref{wardidh}) become
\begin{multline}\label{Iw1}\begin{aligned}&\bigg[\sideset{}{_k}\sum\partial_k+\partial_x+\partial_y\bigg]{\rm F}(\boldsymbol{\xi};x,\delta;y,\epsilon)=0,\\
&\bigg[\sideset{}{_k}\sum\xi_k\partial_k+x\partial_x+\delta\partial_\delta+y\partial_y+\epsilon\partial_\epsilon+(M-1)\theta_1+h\bigg]{\rm F}(\boldsymbol{\xi};x,\delta;y,\epsilon)=0,\\
&\bigg[\sideset{}{_k}\sum(\xi_k^2\partial_k+2\theta_1\xi_k)+x^2\partial_x+2\theta_1x+y^2\partial_y+2\theta_1y+(2x+\delta)\delta\partial_\delta\end{aligned}\\
+2\theta_1(x+\delta)+(2y-\epsilon)\epsilon\partial_\epsilon+2h(y-\epsilon)\bigg]{\rm F}(\boldsymbol{\xi};x,\delta;y,\epsilon)=0.\end{multline}
\item\label{step1}Next, we subtract (\ref{Ii+1th}) from (\ref{Iith}) and multiply the result by $\delta$ to find a PDE with principal part
\be\label{Idiffpde}\bigg[\frac{\kappa}{4}\delta\partial_x^2-\frac{\kappa}{2}\partial_x\delta\partial_\delta\bigg]{\rm F}(\boldsymbol{\xi};x,\delta;y,\epsilon)\ee
and with coefficients that neither vanish nor blow up as $\epsilon\downarrow0$ or $\delta\downarrow0$.
\item\label{step2}Next, we use (\ref{Iw1}) to solve for $\delta\partial_\delta{\rm F}$ strictly in terms of ${\rm F}$ and its derivatives with respect to $x$ and the coordinates of $\boldsymbol{\xi}$.  We insert the result into the principal part (\ref{Idiffpde}) of the PDE from step \ref{step1} to generate a PDE whose principal part only contains $\partial_x^2{\rm F}$ and the mixed partial derivatives $\partial_x\partial_k{\rm F}$ for $k\in\{1,2,\ldots,M-4\}$.  Again, none of the coefficients in this PDE vanish or grow without bound as $\delta,\epsilon\downarrow0$.  We let $a(x,\delta;y,\epsilon)$ be the coefficient of $\partial_x^2{\rm F}(\boldsymbol{\xi};x,\delta;y,\epsilon)$ in this PDE.
\item\label{step3} Next, we take a linear combination of the PDE that we constructed in step \ref{step2}, with coefficient   $a(x,\delta;y,\epsilon)^{-1}$, and the $M-4$ null-state PDEs in (\ref{Inotii+1}), each with the same coefficient $4c/\kappa$ with $c>0$, to find a new PDE whose principal part only contains $\partial_x^2{\rm F},$ $\partial_k^2{\rm F}$, and the mixed partial derivatives $\partial_x\partial_k{\rm F}$ with $k\in\{1,2,\ldots,M-4\}$.  Again, none of the coefficients in this PDE grow without bound as $\epsilon\downarrow0$ or $\delta\downarrow0$.
\item\label{step4} Finally, we use (\ref{Iw1}) to replace the first derivatives $\delta\partial_\delta{\rm F}$, $\epsilon\partial_\epsilon{\rm F}$, and $\partial_y{\rm F}$ in the PDE that we constructed in step \ref{step3} with linear combinations of first derivatives of ${\rm F}$ with respect to either $x$ or the coordinates of $\boldsymbol{\xi}$.  This produces a final, complicated PDE for which $x$ and the coordinates of $\boldsymbol{\xi}$ are independent variables while $y$, $\delta$, and $\epsilon$ are parameters.  The coefficients of this final PDE do not vanish or grow without bound as $\delta,\epsilon\downarrow0$.
\end{enumerate}

So far, the PDE that we have constructed manifestly satisfies the first three bullet points presented above for any positive $c$.  Now we argue that in the bounded open set $ U_0\subset\subset\pi_{\iota,j}(\Omega_0)$, there exists a choice for $c$ such that this PDE is strictly elliptic in $U_0$.  The coefficient matrix for its principal part  takes the form
\renewcommand{\kbldelim}{(}
\renewcommand{\kbrdelim}{)}
\be\label{coefmatrix}\kbordermatrix{
    & \xi_1 & \xi_2 & \hdots & \xi_{M-5} & \xi_{M-4} & x\\
    \xi_1 & c & 0 & \hdots & 0 & 0 & a_{1} \\
    \xi_2 & 0 & c & &  & 0 & a_{2} \\
    \vdots & \vdots &  & \ddots &  & \vdots & \vdots  \\
    \xi_{M-5} & 0 &  &  & c & 0 & a_{M-5} \\
    \xi_{M-4} & 0 & 0 & \hdots & 0 & c & a_{M-4} \\
   x & a_{1} & a_{2} & \hdots & a_{M-5} & a_{M-4} & 1
  },
\ee
where $a_k$ is half of the coefficient of $\partial_x\partial_k{\rm F}$.  According to Sylvester's criterion (Thm.\ 7.5.2 of \cite{horn}), this matrix is positive definite if all of its leading principal minors are positive.  Because $c$ is positive, the determinants of the first $M-4$  leading principal minors are evidently positive.  We find the remaining principal minor, really the determinant of (\ref{coefmatrix}), by using the formula
\be\label{detformula}M=\left(\begin{matrix} A & B \\ C & D\end{matrix}\right)\quad\Longrightarrow\quad\det M=\det A\det(D-CA^{-1}B),\ee
where $A$ and $D$ are square blocks of the matrix $M$, and where $B$ and $C$ are blocks that fill the part of $M$ above $D$ and beneath $A$ respectively.  Using this formula, we find that the $(M-3)$rd  leading principal minor of (\ref{coefmatrix}) is 
\be\label{firstdet}c^{M-4}\left(1-\frac{|a_x|^2}{c}\right),\ee
where $a_x$ is the vector in $\mathbb{R}^{M-4}$ whose $k$th entry is $a_k$.  Because each entry of $a_x$ is bounded on $U_0$, by choosing $c$ large enough, we ensure that (\ref{firstdet}) is greater than, say, one at all points in $ U_0$ and for all $0<\delta,\epsilon<b$.  Thus, the coefficient matrix (\ref{coefmatrix}) is positive definite.  Furthermore, because all components of $a_x$ are bounded on $ U_0$, the eigenvalues of the matrix (\ref{coefmatrix}) (which is Hermitian and therefore diagonalizable) are bounded on $U_0$ too.  This fact together with the fact that the product of these eigenvalues, equaling (\ref{firstdet}), is  strictly positive on $ U_0$, imply that all of these eigenvalues are bounded away from zero over this set for all $0<\delta,\epsilon<b$.  Thus, the constructed PDE is strictly elliptic in $ U_0$ for all $0<\delta,\epsilon<b$.

The existence of this PDE implies Schauder estimates.  We choose open sets $ U_0,$ $ U_1,\ldots, U_{4m}$, where $m:=\lceil q\rceil$ and $q:=p+\Delta^-(\theta_1)$ ($p$ is given by condition \ref{nextcond1} of lemma \ref{secondlimitlem}, and we choose $p$ big enough so $m>2$), such that 
\be\mathcal{K}\subset\subset U_{4m}\subset\subset U_{4m-1}\subset\subset\ldots\subset\subset U_0\subset\subset\pi_{\iota,j}(\Omega_0^M),\ee
and we let $d_n:=\text{dist}(\partial U_{n+1},\partial U_n)$.  With these choices, the Schauder interior estimate gives (Cor.\ 6.3 of \cite{giltru})
\be\label{ISchauder}d_{n+1}\sup_{ U_{n+1}}|\partial^\varpi{\rm F}(\boldsymbol{\xi};x,\delta;y,\epsilon)|\leq C_n\sup_{ U_n}|{\rm F}(\boldsymbol{\xi};x,\delta;y,\epsilon)|\ee
for all $0<\delta,\epsilon<b$ and $n\in\{0,1,\ldots,4m-1\}$, where $C_n$ is some constant and $\varpi$ is a multi-index of $x$ and the coordinates of $\boldsymbol{\xi}$ with length $|\varpi|\leq2$.  Furthermore, step \ref{step4} of the above construction implies that $\varpi$ may include the coordinate $y$ and the derivatives $\delta\partial_\delta$ and $\epsilon\partial_\epsilon$ too.  Overall, we have
\be\label{multiindex}\partial^\varpi\in\left\{\begin{array}{c}\partial_j,\quad \partial_x,\quad \partial_y,\quad \delta\partial_\delta,\quad\epsilon\partial_\epsilon, \quad\partial_j^2,\quad\partial_x^2,\quad\partial_y^2,\quad(\delta\partial_\delta)^2,\quad(\epsilon\partial_\epsilon)^2,\quad \partial_j\partial_k, \\ \partial_j\partial_x,\quad\partial_j\partial_y,\quad\partial_j\delta\partial_\delta,\quad\partial_j\epsilon\partial_\epsilon,\quad\partial_x\partial_y,\quad\partial_x\delta\partial_\delta,\quad\partial_x\epsilon\partial_\epsilon\quad\partial_y\delta\partial_\delta,\quad \partial_y\epsilon\partial_\epsilon,\quad\delta\partial_\delta\epsilon\partial_\epsilon \end{array}\right\}.\ee
Estimate (\ref{ISchauder}) is the main ingredient for our principal goal, deriving the bound (\ref{farest}).  

The proof of the bound (\ref{farest})  proceeds similarly to the proofs of lemmas \ref{firstlimitlem} and \ref{secondlimitlem} above and very similarly to the proof of lemma \red{12} in \cite{florkleb}.  First, we define  
\be\label{Idefn}I(\boldsymbol{\xi};x,\delta;y,\epsilon):=\delta^{-\Delta^-(\theta_1)}\epsilon^{-\Delta^-(h)}{\rm F}(\boldsymbol{\xi};x,\delta;y,\epsilon),\ee
and we show that $I(\boldsymbol{\xi};x,\delta;y,\epsilon)=O(1)$ as $\delta,\epsilon\downarrow0$.  Expressed in terms of $I$, $\boldsymbol{\xi}$, $x$, $\delta$, $y$, and $\epsilon$, the integral equation (\ref{tildeH}) becomes
\be\label{I} I(\boldsymbol{\xi};x,\delta;y,\epsilon)=I(\boldsymbol{\xi};x,b;y,\epsilon)-\frac{\kappa}{4}J(\delta,b)\partial_b I(\boldsymbol{\xi};x,b;y,\epsilon)+\sideset{}{_\delta^b}\int J(\delta,\eta)\mathcal{N}[I](\boldsymbol{\xi};x,\eta;y,\epsilon)\,{\rm d}\eta,\ee
for all $0<\delta<b$, where $J$ is defined in (\ref{Jgreen}) and $\mathcal{N}$ is the differential operator
\begin{multline}\label{N}\mathcal{N}[I](\boldsymbol{\xi};x,\eta;y,\epsilon):=\Bigg[\frac{\partial_x}{\eta}-\sum_{k=1}^{M-4}\left(\frac{\partial_k}{\xi_k-x-\eta}-\frac{\theta_1}{(\xi_k-x-\eta)^2}\right)-\frac{\partial_y}{y-x-\eta}+\frac{\theta_1}{(y-x-\eta)^2}\\
+\frac{\epsilon\partial_{\epsilon}}{(y-x-\eta)(y+\epsilon-x-\eta)}+\frac{h}{(y+\epsilon-x-\eta)^2}+\frac{\Delta^+(h)}{(y-x-\eta)(y+\epsilon-x-\eta)}\Bigg]I(\boldsymbol{\xi};x,\eta;y,\epsilon).\end{multline}
Taking the supremum of (\ref{I}) over $ U_n$ gives
\begin{multline}\label{estI2}\sup_{ U_n}|I(\boldsymbol{\xi};x,\delta;y,\epsilon)|\leq \sup_{ U_n}|I(\boldsymbol{\xi};x,b;y,\epsilon)|\\
+\,\bindnasrepma(\theta_1)^{-1}\sup_{ U_n}|b\hspace{.03cm}\partial_b I(\boldsymbol{\xi};x,b;y,\epsilon)|+\frac{4/\kappa}{{\bindnasrepma(\theta_1)}}\sideset{}{_\delta^b}\int\sup_{ U_n}|\eta\,\mathcal{N}[I](\boldsymbol{\xi};x,\eta;y,\epsilon)|\,{\rm d}\eta\end{multline}
for all $0<\delta,\epsilon<b$.  Thanks to condition \ref{nextcond1} of lemma \ref{secondlimitlem}, we may use (\ref{ISchauder}) with $n=0$ to estimate the integrand of (\ref{estI2}) with $n=1$ as $O(\eta^{-q}\epsilon^{-q})$ when $\eta\downarrow0$ or $\epsilon\downarrow0$.  After inserting this estimate into (\ref{estI2}), we find that
\be\sup_{ U_1}|I(\boldsymbol{\xi};x,\delta;y,\epsilon)|=O(\delta^{-q+1}\epsilon^{-q}).\ee
Just as we did in the proofs of lemmas \ref{firstlimitlem} and \ref{secondlimitlem}, we repeat this process $m-1$ more times, using the subsets $U_m\subset\subset U_{m-1}\subset\subset\ldots\subset\subset U_0$ until we finally arrive with
\be\sup_{ U_m}|I(\boldsymbol{\xi};x,\delta;y,\epsilon)|=O(\epsilon^{-q})\quad\text{as $\delta,\epsilon\downarrow0$.}\ee
Now to decrease the power on $\epsilon$, we switch $(x,\delta,\theta_1;y,\epsilon,h)$ to $(y,\epsilon,h;x,\delta,\theta_1)$ and replace $\mathcal{N}$ with $\mathcal{Q}$ (\ref{Q}) (but still use the variables of (\ref{vars}) and $\boldsymbol{\xi}$) in (\ref{estI2}), and we continue the previous steps for the $\epsilon$ variable, using the subsets $ U_{2m}\subset\subset U_{2m-1}\subset\subset\ldots\subset\subset U_m$.  We ultimately find
\be\label{almostlastest1}\sup_{ U_{2m}}|I(\boldsymbol{\xi};x,\delta;y,\epsilon)|=O(1)\quad\text{as $\delta,\epsilon\downarrow0$.}\ee

Now, to finish the derivation of the bound (\ref{farest}), we define
\be\label{Ldefn}L(\boldsymbol{\xi};x,\delta;y,\epsilon):=\delta^{-\Delta^+(\theta_1)}\epsilon^{-\Delta^+(h)}{\rm F}(\boldsymbol{\xi};x,\delta;y,\epsilon)=\delta^{-\bindnasrepma(\theta_1)}\epsilon^{-\bindnasrepma(h)}I(\boldsymbol{\xi};x,\delta;y,\epsilon),\ee
and we show that $L(\boldsymbol{\xi};x,\delta;y,\epsilon)=O(1)$ as $\delta,\epsilon\downarrow0$ as a consequence of condition \ref{nextcond3} of lemma \ref{secondlimitlem} for $i\in\{\iota,j\}$.   Now if $i=j$, then this condition implies that $L$ satisfies the integral equation (\ref{Kpropagator}) for all $0<\delta,\epsilon<b$, which reads
\be\label{L}L(\boldsymbol{\xi};x,\delta;y,\epsilon)=L(\boldsymbol{\xi};x,b;y,\epsilon)-\frac{4}{\kappa}\sideset{}{_\delta^b}\int\frac{1}{\beta}\sideset{}{_0^\beta}\int\left(\frac{\eta}{\beta}\right)^{\bindnasrepma(\theta_1)}\eta\,\mathcal{N}[L](\boldsymbol{\xi};x,\eta;y,\epsilon)\,{\rm d}\eta\,{\rm d}\beta,\ee
after we express it in terms of the function $L$ and variables $\boldsymbol{\xi}$, $x$, $\delta$, $y$, and $\epsilon$ used in this proof.  Taking the supremum of (\ref{L}) over the open set $ U_n$ immediately gives
\be\label{Lpropagator2nd} \sup_{ U_n}|L(\boldsymbol{\xi};x,\delta;y,\epsilon)|\leq\sup_{ U_n}|L(\boldsymbol{\xi};x,b;y,\epsilon)|+\frac{4}{\kappa}\sideset{}{_\delta^b}\int\frac{1}{\beta}\sideset{}{_0^\beta}\int\left(\frac{\eta}{\beta}\right)^{\bindnasrepma(\theta_1)}\sup_{ U_n}|\eta\,\mathcal{N}[L](\boldsymbol{\xi};x,\eta;y,\epsilon)|\,{\rm d}\eta\,{\rm d}\beta.\ee
According to (\ref{almostlastest1}), the supremum in the integrand, with $\epsilon>0$ fixed, is bounded on $0<\eta<b$.
Using the same methods as in the proof of lemma \ref{secondlimitlem} with a continued sequence of nested open sets $ U_{3m}\subset\subset U_{3m-1}\subset\subset\ldots\subset\subset U_{2m}$, we repeatedly integrate (\ref{Lpropagator2nd}) to improve the bound for $L(\boldsymbol{\xi};x,\delta;y,\epsilon)$ as $\epsilon\downarrow0$, eventually finding that
\be\label{almostalmostlastest}\sup_{ U_{3m}}|L(\boldsymbol{\xi};x,\delta;y,\epsilon)|=O(\epsilon^{-\bindnasrepma(h)})\quad\text{as $\delta,\epsilon\downarrow0$.}\ee
Finally, after switching $(x,\delta,\theta_1;y,\epsilon,h)$ to $(y,\epsilon,h;x,\delta,\theta_1)$ and replacing $\mathcal{N}$ with $\mathcal{Q}$ (\ref{Q}) (using, again, the variables of (\ref{vars}) and $\boldsymbol{\xi}$) in (\ref{Lpropagator2nd}) and repeating the previous steps for the $\epsilon$ variable with the sets $\mathcal{K}\subset\subset U_{4m}\subset\subset U_{4m-1}\subset\subset\ldots\subset\subset U_{3m}$, we ultimately find that
\be\label{almostlastest}\sup_{\mathcal{K}}|L(\boldsymbol{\xi};x,\delta;y,\epsilon)|=O(1)\quad\text{as $\delta,\epsilon\downarrow0$.}\ee
After recalling the relation (\ref{Ldefn}) between $L$ and ${\rm F}$ (\ref{Fdef}), we see that (\ref{almostlastest}) is the estimate (\ref{farest}).
\end{proof}

Before we prove the estimate of lemma \ref{closelem} for collapsing adjacent intervals, we need the following lemma \ref{firstcloselem}. This lemma gives a power-law for this estimate up to an undetermined constant $p_0$, which we then optimize in lemma \ref{closelem}.

\begin{lem}\label{firstcloselem}Suppose that $\kappa\in(0,8)$ and $M>2$, and for some $\iota\in\{3,4,\ldots,M\}$, let 
\be\label{boldx}\boldsymbol{x}_{\delta,\varepsilon}:=(x_1,x_2,\ldots,x_{\iota-2},x_{\iota-2}+\delta,x_{\iota-2}+\varepsilon,x_{\iota+1},\ldots,x_M)\in\Omega_0^M.\ee
If $F:\Omega_0^M\rightarrow\mathbb{R}$ satisfies conditions \ref{nextcond1}--\ref{nextcond3} of lemma \ref{secondlimitlem} for $i\in\{\iota-2,\iota-1\}$, 
then for any compact $\mathcal{K}\subset\pi_{\iota-1,\iota}(\Omega_0^M)$ there is a $p_0\in\mathbb{R}$ such that
\be\label{initialest}\sup_\mathcal{K}|F(\boldsymbol{x}_{\delta,\varepsilon})|=O(\delta^{\Delta^+(\theta_1)}\varepsilon^{-p_0}(\varepsilon-\delta)^{\Delta^+(h)})\quad\text{as $\delta,\varepsilon\downarrow0$}.\ee
\end{lem}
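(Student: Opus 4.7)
The plan is to follow the pattern established in lemmas \ref{firstlimitlem}, \ref{secondlimitlem}, and \ref{farlem} --- change variables, convert the right null-state PDE to an integral equation via an appropriate Green function, bound the other derivatives appearing in the integrand by Schauder interior estimates, and then iterate --- but with one crucial new ingredient. Because the two collapsing intervals $(x_{\iota-2},x_{\iota-1})$ and $(x_{\iota-1},x_\iota)$ share the endpoint $x_{\iota-1}$, the null-state PDE centered at this shared endpoint does not decouple into a single-variable ODE in the collapsing length; instead, expressed in the variables $\delta=x_{\iota-1}-x_{\iota-2}$ and $\varepsilon=x_\iota-x_{\iota-2}$, it becomes the genuine two-variable PDE $\mathscr{L}[F]=\mathscr{M}[F]$ of (\ref{Lformula})--(\ref{preMformula}), so a Green function in four variables is required.

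First I would introduce the new coordinates $x:=x_{\iota-2}$, $\delta$, $\varepsilon$, and collect the other $M-3$ coordinates in increasing order as $\boldsymbol{\xi}$, defining ${\rm F}(\boldsymbol{\xi};x,\delta,\varepsilon)$ accordingly. I would then rewrite the null-state PDE (\ref{hpde}) centered on $x_{\iota-1}$ as $\mathscr{L}[{\rm F}]=\mathscr{M}[{\rm F}]$, with $\mathscr{L}$ and $\mathscr{M}$ as in (\ref{Lformula}) and (\ref{preMformula}); the heuristic scaling (\ref{Lscale1}) justifies the expectation that $\mathscr{L}$ captures the leading behaviour as $\delta,\varepsilon\downarrow0$. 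Using the other null-state PDEs and the Ward identities (\ref{wardidh}), I would construct, by the same algebraic recipe as in steps \ref{step11}--\ref{step13} of lemma \ref{firstlimitlem} (and its extension in lemma \ref{farlem}), an auxiliary strictly elliptic PDE in the independent variables $x$ and $\boldsymbol{\xi}$ with $\delta$ and $\varepsilon$ as parameters, whose coefficients remain bounded and uniformly elliptic on compact subsets of $\pi_{\iota-1,\iota}(\Omega_0^M)$ as $\delta,\varepsilon\downarrow 0$ with $\sigma:=\delta/\varepsilon\in(0,1)$ arbitrary. This yields a Schauder interior estimate analogous to (\ref{ISchauder}) that bounds $\partial^\varpi{\rm F}$ on a nested chain of open sets by the supremum of ${\rm F}$ on a slightly larger set.

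The heart of the proof is the construction of a Green function $G(\delta,\varepsilon;\delta',\varepsilon')$ for $\mathscr{L}$, the object referred to in (\ref{Greenfunc}). Since $\mathscr{L}$ is Euler-homogeneous of degree $-2$ in $(\delta,\varepsilon)$ jointly, the separable ansatz ${\rm F}\sim\varepsilon^{\alpha}g(\sigma)$ reduces $\mathscr{L}[{\rm F}]=0$ to a hypergeometric-type ODE on $(0,1)$ whose indicial exponents are $\Delta^\pm(\theta_1)$ at $\sigma=0$ and $\Delta^\pm(h)$ at $\sigma=1$. After a Liouville transformation this ODE is a Jacobi operator, and the associated semigroup kernel is precisely the Jacobi heat kernel of \cite{nowsj1, nowsj2}; $G$ therefore factors as explicit powers of $\delta$, $\varepsilon-\delta$, and $\varepsilon$ multiplying this Jacobi heat kernel in the variable $\sigma$. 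I would verify that $G$ satisfies causal boundary conditions on both degenerate faces, analogous to those satisfied by $J$ in (\ref{Jgreen}), so that inverting $\mathscr{L}[{\rm F}]=\mathscr{M}[{\rm F}]$ produces an integral equation of the form
\begin{equation*}
{\rm F}(\boldsymbol{\xi};x,\delta,\varepsilon)=(\text{boundary contributions})+\iint G(\delta,\varepsilon;\delta',\varepsilon')\,\mathscr{M}[{\rm F}](\boldsymbol{\xi};x,\delta',\varepsilon')\,{\rm d}\delta'\,{\rm d}\varepsilon',
\end{equation*}
the joint analogue of (\ref{tildeH}) and (\ref{Kpropagator}). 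The two-leg interval hypotheses (condition \ref{nextcond3} of lemma \ref{secondlimitlem}) eliminate the $\Delta^-(\theta_1)$-mode at $\sigma=0$ and the $\Delta^-(h)$-mode at $\sigma=1$, leaving only the $\Delta^+$ boundary contributions.

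Finally, I would iterate: starting from the crude growth bound coming from condition \ref{nextcond1} and the Schauder estimate, substituting into $\mathscr{M}[{\rm F}]$ and applying $G$, each pass improves the power of $\delta$ and of $\varepsilon-\delta$ by one unit on a tighter nested open set, in the same style as in the proofs of lemmas \ref{firstlimitlem}, \ref{secondlimitlem}, and \ref{farlem}. The iteration saturates when the powers of $\delta$ and of $\varepsilon-\delta$ reach $\Delta^+(\theta_1)$ and $\Delta^+(h)$ respectively, at which point the remaining $O(1)$ terms in the integrand can no longer be absorbed. Because $G$ couples $\delta$ and $\varepsilon$ through the Jacobi heat kernel, the iteration does not sharpen the joint scaling in $\varepsilon$, and a residual factor $\varepsilon^{-p_0}$ with some $p_0\in\mathbb{R}$ depending on $h$, $\kappa$, and the initial growth exponent $p$ remains. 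This is precisely the bound (\ref{initialest}). I expect the main obstacle to be the explicit identification of $G$ as a product of explicit power factors and the Jacobi heat kernel, together with the verification of the right boundary behaviour at both $\sigma=0$ and $\sigma=1$; the elliptic PDE construction is a straightforward but lengthy adaptation of the one in lemma \ref{farlem}, and the iteration then proceeds with no new ideas. The optimization of $p_0$ is left to lemma \ref{closelem}.
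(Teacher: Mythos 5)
Your proposal is, in substance, the paper's proof of lemma \ref{closelem}, not of lemma \ref{firstcloselem}, and as a proof of the latter it has a genuine gap: the Jacobi-heat-kernel iteration cannot be initialized from the crude growth bound of condition \ref{nextcond1} alone. In the integral equation (\ref{simplestGreenID}) the source term $\mathscr{N}[\mathsf{F}]$ contains $\partial_x\mathsf{F}/(\eta\sigma)$, while the Green function (\ref{Greenfuncalt}) supplies only the factor $\sigma^{\beta+1-\Delta^+(\theta_1)}=\sigma^{6/\kappa}$ near $\sigma=0$ (and $(1-\sigma)^{\alpha+1-\Delta^+(h)}$ near $\sigma=1$). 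If all you know is the power law (\ref{powerlaw}), i.e.\ roughly $|\mathsf{F}|\lesssim\sigma^{-p}(1-\sigma)^{-p}\eta^{-3p}$ with $p$ large and not at your disposal (and the same for $\partial_x\mathsf{F}$ via Schauder), then the $\sigma$-integral in your integral equation diverges at the endpoints once $p\geq 6/\kappa$, so its right-hand side need not even be finite and the iteration has no valid starting point. The paper's iteration in lemma \ref{closelem} starts precisely from (\ref{U0}), i.e.\ from the factorized bound $\rho^{\Delta^+(\theta_1)}\varepsilon^{-q_0}(1-\rho)^{\Delta^+(h)}$, which \emph{is} the content of lemma \ref{firstcloselem}; similarly, your claim that condition \ref{nextcond3} ``eliminates the $\Delta^-$ modes, leaving only the $\Delta^+$ boundary contributions'' is known from lemma \ref{secondlimitlem} only for each fixed $\varepsilon>0$, with no uniformity as $\varepsilon\downarrow0$ --- obtaining that endpoint behavior uniformly in small $\varepsilon$, at the price of an uncontrolled power $\varepsilon^{-p_0}$, is exactly what lemma \ref{firstcloselem} is for, so invoking it here is circular. (A smaller inaccuracy: in the heat-kernel argument the exponents of $\rho$ and $1-\rho$ are fixed from the outset by the prefactor of $G$; the iteration improves only the power of $\varepsilon$, saturating at $\varepsilon^{\lambda_0}$, rather than building up the $\delta$ and $\varepsilon-\delta$ exponents one unit at a time.)

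The paper proves lemma \ref{firstcloselem} by much more elementary means, and no four-variable Green function is needed at this stage. One splits $0<\delta<\varepsilon$ into the regimes $\delta<\varepsilon/2$ and $\varepsilon/2<\delta<\varepsilon$ and, in each, re-runs the one-interval analysis of lemmas \ref{firstlimitlem} and \ref{secondlimitlem} with the two-variable Green function (\ref{Jgreen}) and the cutoff $b=\varepsilon/2$ (in the second regime after the reflection $\delta\mapsto\delta'=\varepsilon-\delta$, with $i=\iota$ and weight $h$), treating $\varepsilon$ as a spectator parameter whose dependence is tracked only as some unspecified power law. The one technical point is that the nested open sets are compactly embedded in $\pi_{\iota-1,\iota}(\Omega_0^M)$ rather than $\pi_{\iota-1}(\Omega_0^M)$, so that the auxiliary elliptic PDE and hence the Schauder estimate (\ref{Schaudermod}) hold uniformly for all $0<\delta<\varepsilon/2$ and all small $\varepsilon$. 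This yields (\ref{initialestmod}), hence (\ref{initialest}) with $p_0=\max\{p_1+\Delta^+(h),\,p_2+\Delta^+(\theta_1)\}$; the Jacobi heat kernel enters only afterwards, in lemma \ref{closelem}, to drive $p_0$ down to $-\Delta^+(h)$. If you want to salvage your write-up, the correct division of labor is to prove the present lemma by the two-regime, one-interval argument and reserve your Green-function construction for the optimization step.
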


\begin{proof} We let $\mathcal{K}$ be an arbitrary compact subset of $\pi_i(\Omega_0^M)$.  In order to prove (\ref{initialest}), it suffices to prove that there exist positive numbers $c,$ $p_1$, and $p_2$ such that
\be\label{initialestmod}
0<\delta<\varepsilon<1\quad\Longrightarrow\quad\sup_\mathcal{K}|F(\boldsymbol{x}_{\delta,\varepsilon})|\leq c\begin{cases}\delta^{\Delta^+(\theta_1)}\varepsilon^{-p_1},& 0<\delta<\varepsilon/2\\
\varepsilon^{-p_2}(\varepsilon-\delta)^{\Delta^+(h)},&\varepsilon/2<\delta<\varepsilon\end{cases}.
\ee
Indeed, because $\Delta^+(h)>0$ and $\Delta^+(\theta_1)>0$, (\ref{initialestmod}) implies that
\be\label{initialestmodmod}0<\delta<\varepsilon<1\quad\Longrightarrow\quad\sup_\mathcal{K}|F(\boldsymbol{x}_{\delta,\varepsilon})|\leq c\begin{cases}2^{\Delta^+(h)}\delta^{\Delta^+(\theta_1)}\varepsilon^{-p_1-\Delta^+(h)}(\varepsilon-\delta)^{\Delta^+(h)},& 0<\delta<\varepsilon/2,\\
2^{\Delta^+(\theta_1)}\delta^{\Delta^+(\theta_1)}\varepsilon^{-p_2-\Delta^+(\theta_1)}(\varepsilon-\delta)^{\Delta^+(h)},&\varepsilon/2<\delta<\varepsilon\end{cases},\ee
which in turn implies (\ref{initialest}) with $p_0=\max\{p_1+\Delta^+(h),p_2+\Delta^+(\theta_1)\}$.  (We note that the upper bound $\varepsilon<1$ is somewhat arbitrary and perhaps even unnecessary if $\iota<2N$, as $\varepsilon$ is already bounded above by $x_{\iota+1}-x_{\iota-2}$.)

To prove the first estimate of (\ref{initialestmod}), we repeat the steps in case \ref{firstlimitlempart1} of lemmas \ref{firstlimitlem} and \ref{secondlimitlem} with $i=\iota-1$, retaining  $x$ but using $x+\varepsilon$ in place of $y$, and making the following adjustments.  First, we set $b=\varepsilon/2$ in (\ref{tildeH}) in order to bound $\delta$ away from $\varepsilon$.  Second, we compactly embed the open subsets $U_n$ (\ref{compactembedd}) in $\pi_{\iota-1,\iota}(\Omega_0^M)$ instead of $\pi_{\iota-1}(\Omega_0^M)$ so $\varepsilon$ may approach zero.  Thus, (\ref{estH}) becomes
\begin{multline}\label{estHmod}\sup_{U_n}|H(\boldsymbol{\xi};x,\delta;x+\varepsilon)|\leq\sup_{ U_n}|H(\boldsymbol{\xi};x,\varepsilon/2;x+\varepsilon)|\\
+\,\bindnasrepma(\theta_1)^{-1}\sup_{ U_n}|(\varepsilon/2)\partial_\delta H(\boldsymbol{\xi};x,\varepsilon/2;x+\varepsilon)|+\frac{4/\kappa}{\bindnasrepma(\theta_1)}\sideset{}{_\delta^{\varepsilon/2}}\int\sup_{ U_n}|\eta\mathcal{M}[H](\boldsymbol{\xi};x,\eta;x+\varepsilon)|\,{\rm d}\eta.\end{multline}
(We recall the definition of $H$ from (\ref{tilde}).)  Third, we note that the strictly elliptic PDE, constructed in steps \ref{step11}--\ref{step13} of the proof of lemma \ref{firstlimitlem}, only has derivatives with respect to the coordinates of $\boldsymbol{\xi}$, has $x$, $\delta$, and $\varepsilon=y-x$ as parameters, and has no coefficients that vanish or blow up as $\delta\downarrow0$ \emph{or} as $\varepsilon\downarrow0$.  Thus, the Schauder interior estimate (\ref{Schauder}),
\be\label{Schaudermod}d_{n+1}\sup_{ U_{n+1}}|\partial^\varpi H(\boldsymbol{\xi};x,\delta;x+\varepsilon)|\leq C(R_n)\sup_{ U_n}|H(\boldsymbol{\xi};x,\delta;x+\varepsilon)|,\ee
holds uniformly for all $0<\delta<\varepsilon/2$ \emph{and} all sufficiently small $\varepsilon>0$.  (We define $d_n$, $R_n$, $C$, and the multi-index $\varpi$ in the discussion surrounding (\ref{Schauder}, \ref{multiindex1}).)  Now, with $U_1\subset\subset U_0$, $0<\delta<\varepsilon/2$, and $\varepsilon<1$, the power-law bound (\ref{powerlaw}) gives
\bea\label{prelimest1}\sup_{U_1}|H(\boldsymbol{\xi};x,\delta;x+\varepsilon)|&\leq&\sup_{U_0}|H(\boldsymbol{\xi};x,\delta;x+\varepsilon)|\\
\label{prelimest2}&\leq&C_0\delta^{-q}\varepsilon^{-p}(\varepsilon-\delta)^{-p}\leq2^pC_0\delta^{-q}\varepsilon^{-2p}
\eea
for some positive constants $C_0$, $p$, and $q:=p+\Delta^-(\theta_1)$.  After using (\ref{Schaudermod}) with $n=0$ and (\ref{prelimest1}, \ref{prelimest2}) to estimate each term in (\ref{estHmod}) with $n=1$ and integrating the result, we find that for some positive constants $c_0$, $C_1$, and $p'$,
\bea\sup_{U_1}|H(\boldsymbol{\xi};x,\delta;x+\varepsilon)|&\leq&2^{p+q}C_0\varepsilon^{-2p-q}\left[1+\frac{C(R_0)}{\bindnasrepma(\theta_1)}\right]+\frac{4/\kappa}{\bindnasrepma(\theta_1)}\sideset{}{_\delta^{\varepsilon/2}}\int c_0\eta^{-q}\varepsilon^{-2p}\,{\rm d}\eta\nonumber\\ 
\label{estHmodest}&\leq&C_1\delta^{-q+1}\varepsilon^{-p'}.\eea
The result (\ref{estHmodest}) is analogous to the previous result (\ref{thefirstest}) in the proof of lemma \ref{firstlimitlem}, except that the former bound has an explicit $\varepsilon$-dependence.  This difference arises because the supremum of the former does not involve the $\varepsilon=y-x$ variable, while the supremum of the latter does (or, more exactly, does involve $y$).  That the right side of (\ref{estHmodest}) is a power-law in $\varepsilon$ is important for our purpose, but the exact value of the power $p'$ is not.  Now, proceeding as in the rest of case \ref{firstlimitlempart1} of the proof of lemma \ref{firstlimitlem} and case \ref{secondlimitlempart1} of the proof of lemma \ref{secondlimitlem}, we repeat these estimates to ultimately obtain
\be\label{finalestHmodest}\sup_{\mathcal{K}}|E(\boldsymbol{\xi};x,\delta;x+\varepsilon)|\leq c\varepsilon^{-p_1}\ee
for some positive constants $c$ and $p_1$ and $0<\delta<\varepsilon/2$.  (We recall the definition of $E$ from (\ref{Kdefn}).)  The result (\ref{finalestHmodest}) is analogous to (\ref{lastSchauderK}) in the proof of lemma \ref{secondlimitlem}, except that, again, the former bound has an explicit $\varepsilon$-dependence while the latter does not.  Recalling the definition of $E$ (\ref{Kdefn}), we see that (\ref{finalestHmodest}) is identical to the first inequality of (\ref{initialestmod}).

The same argument that produces the upper estimate of (\ref{initialestmod}) produces the lower estimate of (\ref{initialestmod}) after we replace $\delta$ by $\delta':=\varepsilon-\delta$ and set $i=\iota$.  Thus, we use parts \ref{firstlimitlempart3} and \ref{secondlimitlempart2} of the proofs of lemmas \ref{firstlimitlem} and \ref{secondlimitlem} respectively, also with $\delta$ replaced by $\delta'$.  In this argument for the lower estimate, we reuse the elliptic PDE used for finding the upper estimate to obtain Schauder interior estimates that hold for all $0<\delta'<\varepsilon/2$ \emph{and} all sufficiently small $\varepsilon>0$.
\end{proof}

In the next lemma, we show that $p_0\geq-\Delta^+(h)$ in (\ref{initialest}).  This lemma is the main result of this article that allows us to prove lemma \ref{alltwoleglem}.

\begin{lem}\label{closelem}Suppose that $\kappa\in(0,8)$ and $M>2$, and for some $\iota\in\{3,4,\ldots,M\}$, let 
$\boldsymbol{x}_{\delta,\varepsilon}$ be defined as in (\ref{boldx}).
If $F:\Omega_0^M\rightarrow\mathbb{R}$ satisfies conditions \ref{nextcond1}--\ref{nextcond3} of lemma \ref{secondlimitlem} for $i\in\{\iota-2,\iota-1\}$, 
then for any compact $\mathcal{K}\subset\pi_{\iota-1,\iota}(\Omega_0^M)$, 
\be\label{closeest}\sup_{\mathcal{K}}|F(\boldsymbol{x}_{\delta,\varepsilon})|=O(\delta^{\Delta^+(\theta_1)}\varepsilon^{\Delta^+(h)}(\varepsilon-\delta)^{\Delta^+(h)})\quad\text{as $\delta,\varepsilon\downarrow0$.}\ee\end{lem}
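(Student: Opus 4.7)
The plan is to bootstrap lemma \ref{firstcloselem} up to the optimal exponent using a Green function for the principal part $\mathscr{L}$ of the null-state PDE centered on $x_{\iota-1}$. I write that PDE as $\mathscr{L}[F] = \mathscr{M}[F]$, with $\mathscr{L}$ given by (\ref{Lformula}) and $\mathscr{M}$ by (\ref{preMformula}), relabel the coordinates of $\boldsymbol{x}$ other than $x_{\iota-1}$ and $x_\iota$ as $x := x_{\iota-2}$, $y$, and $\boldsymbol{\xi}$, and normalize by setting
\be
G(\boldsymbol{\xi};x,\delta,\varepsilon) := \delta^{-\Delta^+(\theta_1)}(\varepsilon-\delta)^{-\Delta^+(h)}{\rm F}(\boldsymbol{\xi};x,\delta,\varepsilon).
\ee
By lemma \ref{firstcloselem}, $\sup_\mathcal{K}|G| = O(\varepsilon^{-p_0})$ as $\delta,\varepsilon\downarrow 0$ for some real $p_0$, and the goal is to drive the exponent down to $-\Delta^+(h)$.

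The crucial point is that $\mathscr{L}$ separates in the coordinates $(t,\tau) := (\delta/\varepsilon,\log\varepsilon)$: a direct computation shows that $\varepsilon^2\mathscr{L}$ becomes, up to multiplication by a bounded function of $t$, the sum of a Jacobi-type differential operator in $t$ with singular potentials at $t=0$ and $t=1$ of strengths $\theta_1$ and $h$, plus a first-order derivative $\partial_\tau$. The prefactor built into $G$ absorbs the two singular potentials (its exponents are the $\Delta^+$ indicial roots at the two endpoints of $t\in(0,1)$), so that the operator conjugate to $\mathscr{L}$ acting on $G$ is of the form $\partial_\tau + \mathcal{J}$, where $\mathcal{J}$ is a standard Jacobi differential operator whose weight parameters are determined by $\Delta^+(\theta_1)$ and $\Delta^+(h)$. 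The fundamental solution of $\partial_\tau+\mathcal{J}$ is precisely the Jacobi heat kernel of \cite{nowsj1, nowsj2}, which I use to build the Green function $\mathcal{G}$ of $\mathscr{L}$ announced in (\ref{Greenfunc}): it factors into power functions of the boundary coordinates multiplying a Jacobi heat kernel evaluated at arguments $(\delta/\varepsilon,\delta'/\varepsilon')$ and at time $\log(\varepsilon/\varepsilon')$.

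I then invert $\mathscr{L}$ against $\mathcal{G}$ to rewrite $\mathscr{L}[F] = \mathscr{M}[F]$ as an integral equation for $G$, in direct analogy to (\ref{tildeH}) in the one-interval case. The source $\mathscr{M}[G]$ involves derivatives of $G$ only in $x$, $\boldsymbol{\xi}$, and $y$; to control these I build a strictly elliptic PDE in $x$ and $\boldsymbol{\xi}$ (with $y$, $\delta$, $\varepsilon$ as parameters) exactly as in the five-step construction in the proof of lemma \ref{farlem}, by combining the remaining null-state PDEs (\ref{hpde}) with the Ward identities (\ref{wardidh}). The construction can be arranged so that the coefficients and the ellipticity constant of the resulting PDE remain bounded uniformly as $\delta\downarrow 0$, $\varepsilon-\delta\downarrow 0$, and $\varepsilon\downarrow 0$, yielding Schauder interior estimates (Cor.~6.3 of \cite{giltru}) that bound the needed derivatives of $G$ by $\sup|G|$. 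Iterating the integral equation on a nested sequence of compactly embedded open sets, in the same bootstrap style as in lemmas \ref{firstlimitlem}--\ref{farlem}, I expect to gain one power of $\varepsilon$ per iteration thanks to the time-smoothing of the Jacobi heat kernel in $\tau$, so that after $\lceil p_0+\Delta^+(h)\rceil$ iterations the bound on $G$ reads $O(\varepsilon^{\Delta^+(h)})$, which translates back into (\ref{closeest}).

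The hard part will be the construction and quantitative control of the four-variable Green function $\mathcal{G}$: identifying the Jacobi heat kernel as the correct fundamental solution of the conjugate operator, verifying that its behavior as $t$ and $1-t$ vanish is compatible with the two $\Delta^+$ channels, and above all extracting the one-power gain per convolution that makes the iteration converge to the optimal exponent $-\Delta^+(h)$ rather than stalling at some larger value. A secondary technical obstacle is checking that the elliptic PDE underlying the Schauder estimates can be built with coefficients bounded uniformly over the doubly singular corner where $\delta$ and $\varepsilon-\delta$ vanish simultaneously, so that the bootstrap remains valid throughout the whole region of adjacent-interval collapse.
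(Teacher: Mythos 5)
Your proposal takes essentially the same route as the paper: change coordinates to $(\rho,\varepsilon)=(\delta/\varepsilon,\varepsilon)$ so that the principal operator $\mathscr{L}$ separates, build a Green function whose essential factor is the Jacobi heat kernel, convert $\mathscr{L}[F]=\mathscr{M}[F]$ into an integral equation, control the derivatives entering $\mathscr{M}[F]$ with Schauder interior estimates, and bootstrap the power of $\varepsilon$ upward one step per iteration until it caps out at $\Delta^+(h)$. One concrete issue, though: you propose to build the elliptic PDE "in $x$ and $\boldsymbol{\xi}$ (with $y$, $\delta$, $\varepsilon$ as parameters) exactly as in the five-step construction in the proof of lemma \ref{farlem}," and you introduce a coordinate $y$. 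Neither fits the adjacent-interval geometry. In lemma \ref{farlem} the two shrinking intervals are disjoint, giving two anchors $x=x_{j-1}$ and $y=x_\iota$, and the three Ward identities are spent eliminating $\partial_y$, $\delta\partial_\delta$, $\epsilon\partial_\epsilon$, which is exactly what lets $x$ survive as a free second-order variable in the five-step construction. In the adjacent case the two intervals share the endpoint $x_{\iota-1}$, leaving a single anchor $x=x_{\iota-2}$ (there is no $y$) and \emph{three} parametric first derivatives $\partial_x$, $\partial_\rho$, $\partial_\varepsilon$ that must all be eliminated using the only three Ward identities available. Accordingly the elliptic PDE underlying the Schauder bound (\ref{newSchauder}) is built purely in the $\boldsymbol{\xi}$ variables, mirroring steps (a)--(c) in the proof of lemma \ref{firstlimitlem} rather than the farlem construction: one sums the null-state PDEs (\ref{newnotii+1}) centered on the $\xi_k$ and uses (\ref{newward}) to substitute for $\partial_x$, $\rho(1-\rho)\partial_\rho$, and $\varepsilon\partial_\varepsilon$. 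The Schauder estimate still covers those first derivatives (they are re-expressed in terms of $\boldsymbol{\xi}$-derivatives and ${\rm F}$ itself), but keeping $\partial_x^2$ as a principal term the way farlem does would leave you one Ward identity short. You would likely notice this when carrying out the computation, but as written the variable count in your elliptic construction is off, and fixing it is necessary for the uniform Schauder bound over the doubly degenerate corner $\delta,\varepsilon-\delta,\varepsilon\downarrow0$ that you rightly single out as a key technical obstacle.
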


\begin{proof} We let $\varepsilon:=x_\iota-x_{\iota-2}$, $\delta:=x_{\iota-1}-x_{\iota-2}<\varepsilon,$ $x:=x_{\iota-2}$ (figure \ref{Geometry}), we relabel the other $M-3$ coordinates of $\boldsymbol{x}$ in increasing order as $\xi_1$, $\xi_2,\ldots,\xi_{M-3}$, and we let $\boldsymbol{\xi}=(\xi_1,\xi_2,\ldots,\xi_{M-3})$.  With this new notation, we define
\be\label{Fdef2}{\rm F}(\boldsymbol{\xi};x,\delta,\varepsilon):=F(\xi_1,\xi_2,\ldots,\xi_{\iota-3},x,x+\delta,x+\varepsilon,\xi_{\iota-2},\ldots,\xi_{M-3}).\ee
Also, we choose an arbitrary compact subset $\mathcal{K}\subset\pi_{\iota-1,\iota}(\Omega_0^M)$.  Our goal is to prove that estimate (\ref{initialest}) of lemma \ref{firstcloselem} is true for any $p_0\geq-\Delta^+(h)$.

Most of the proof involves analysis of the null-state PDE (\ref{hpde}) centered on $x_{\iota-1}$.  In terms of the variables specified above, we write this PDE in the form $\mathscr{L}[{\rm F}]=\mathscr{M}[{\rm F}]$, where $\mathscr{L}[{\rm F}]$ is given by (\ref{Lformula}) and
\be
\label{Mformula}\mathscr{M}[{\rm F}](\boldsymbol{\xi};x,\delta,\varepsilon):=\Bigg[\frac{\partial_x}{\delta}+\sum_{j=1}^{M-3}\left(\frac{\theta_1}{(\xi_j-x-\delta)^2}-\frac{\partial_j}{\xi_j-x-\delta}\right)\Bigg]{\rm F}(\boldsymbol{\xi};x,\delta,\varepsilon).\ee
Here, we have collected all of the terms that are apparently largest when $\delta$ or $\varepsilon$ are very small into the differential operator $\mathscr{L}$ and collected the remaining terms into $\mathscr{M}$.  After introducing the change of coordinates (figure \ref{Geometry})
\be\label{translate}(\delta,\varepsilon)\quad\longrightarrow\quad(\rho:=\delta/\varepsilon,\varepsilon),\ee
the former $\mathscr{L}$  becomes a separable differential operator.  Indeed, in terms of these new coordinates, the null-state PDE centered on $x_{\iota-1}$ goes from $\mathscr{L}[{\rm F}]=\mathscr{M}[{\rm F}]$ to $\mathscr{P}[\mathsf{F}]=\mathscr{N}[\mathsf{F}]$, where
\begin{gather}\label{Nformula}\mathsf{F}(\boldsymbol{\xi};x,\rho,\varepsilon):={\rm F}(\boldsymbol{\xi};x,\varepsilon\rho,\varepsilon),\quad \mathscr{P}[\mathsf{F}](\boldsymbol{\xi};x,\rho,\varepsilon):=\mathscr{L}[{\rm F}](\boldsymbol{\xi};x,\varepsilon\rho,\varepsilon),\quad\mathscr{N}[\mathsf{F}](\boldsymbol{\xi};x,\rho,\varepsilon):=\mathscr{M}[{\rm F}](\boldsymbol{\xi};x,\varepsilon\rho,\varepsilon)\\ \label{Pformula9}
\Longrightarrow\quad\mathscr{P}[\mathsf{F}](\boldsymbol{\xi};x,\rho,\varepsilon)=\frac{1}{\varepsilon^2}\Bigg[\frac{\kappa}{4}\partial_\rho^2+\frac{(1-2\rho)\partial_\rho}{\rho(1-\rho)}-\frac{\theta_1}{\rho^2}-\frac{h}{(1-\rho)^2}\Bigg]\mathsf{F}(\boldsymbol{\xi};x,\rho,\varepsilon)+\frac{\partial_\varepsilon \mathsf{F}(\boldsymbol{\xi};x,\rho,\varepsilon)}{\varepsilon\rho(1-\rho)}.\end{gather}

\begin{figure}[t]
\centering
\includegraphics[scale=0.27]{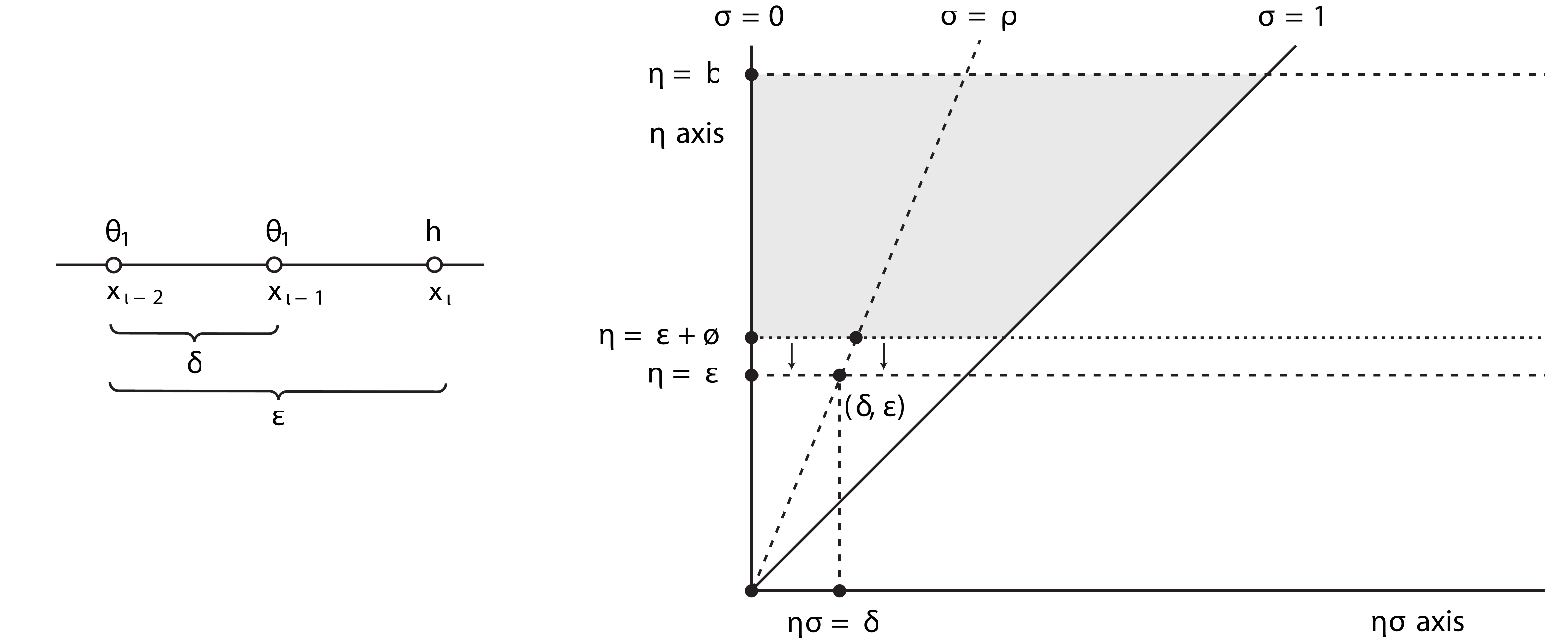}
\caption{Geometry of the statement of lemma \ref{closelem} (left) with the conformal weight of each point shown above that point, and geometry of the coordinate system (\ref{translate}) for the Green function (\ref{Greenfunc}) and the derivation of integral equation (\ref{simplestGreenID}) (right).}
\label{Geometry}
\end{figure}

Using an appropriate Green function, we invert the differential operator $\mathscr{P}$ to write the null-state PDE $\mathscr{P}[\mathsf{F}]=\mathscr{N}[\mathsf{F}]$ as an integral equation.  The Green function $G$ must solve the adjoint equation that follows from (\ref{Pformula9}), namely
\be\label{Pstar}\mathscr{P}^*[G](\rho,\varepsilon;\sigma,\eta)=\frac{1}{\eta^2}\Bigg[\mathscr{Q}^*-\frac{\eta\partial_\eta}{\sigma(1-\sigma)}\Bigg]G(\rho,\varepsilon;\sigma,\eta)=D(\sigma-\rho)D(\eta-\varepsilon),\ee
for all $0<\sigma<1$ and $0<\eta<b$ (where $b$ is so small that $x_{\iota-2}+\eta<x_{\iota+1}$ if $\iota<M$), with $\rho$ and $\varepsilon$ as parameters fixed in these respective ranges, where $D$ is the Dirac delta function, and where $\mathscr{Q}^*$ is the differential operator given by 
\be\label{Qstar}\mathscr{Q}^*[G](\rho,\varepsilon;\sigma,\eta):=\Bigg[\frac{\kappa}{4}\partial_\sigma^2-\frac{(1-2\sigma)\partial_\sigma}{\sigma(1-\sigma)}+\frac{1-\theta_1}{\sigma^2}+\frac{1-h}{(1-\sigma)^2}+\frac{1}{\sigma}+\frac{1}{1-\sigma}\Bigg]G(\rho,\varepsilon;\sigma,\eta).\ee
Now we specify boundary conditions for $G(\rho,\varepsilon;\sigma,\eta)$ as $\sigma\downarrow0$ or $\sigma\uparrow1$.  Identifying convenient  conditions is not completely straightforward because the coefficients in (\ref{Pstar}, \ref{Qstar}) blow up there.  However, the boundary behavior
\be\label{sigmabc}G(\rho,\varepsilon;\sigma,\eta)\underset{\sigma\downarrow0}{\sim}A_\ell\sigma^{\Delta^+(\theta_1)+4/\kappa},\quad G(\rho,\varepsilon;\sigma,\eta)\underset{\sigma\uparrow1}{\sim}A_r(1-\sigma)^{\Delta^+(h)+4/\kappa}\ee
for unspecified constants $A_\ell$ and $A_r$ will emerge as the most natural to use.  Furthermore, we require $G(\rho,\varepsilon;\sigma,\eta)$ to be causal with respect to $\varepsilon$:
\be\label{etabc}G(\rho,\varepsilon;\sigma,\eta)=0,\quad 0<\eta<\varepsilon.\ee 
Now, the related homogeneous differential equation $\mathscr{P}^*[G](\sigma,\eta)=0$ admits separable solutions of the form $G(\sigma,\eta)=\Sigma_\lambda(\sigma){\rm H}_\lambda(\eta)$, where $\lambda$ is the separation variable and $\Sigma$ and ${\rm H}$ respectively solve
\be\label{Qeqn}\Bigg[\mathscr{Q}^*+\frac{\lambda-1}{\sigma(1-\sigma)}\Bigg]\Sigma_\lambda(\sigma)=0,\quad[\eta\partial_\eta+(\lambda-1)]{\rm H}_\lambda(\eta)=0.\ee
The differential equation for ${\rm H}_{\lambda}$ admits power-law solutions, but that for $\Sigma_\lambda$ is more complicated, being (to within an integration factor) an irregular Sturm-Liouville problem on $[0,1]$ \cite{folland}.  After substituting
\begin{gather}\label{lambda}4\lambda_n=\kappa n^2+[8-\kappa+2\kappa(\Delta^+(h)+\Delta^+(\theta_1))]n+4(\Delta^+(h)+\Delta^+(\theta_1))+2\kappa\Delta^+(h)\Delta^+(\theta_1),\\ 
\label{fySigma}\Sigma_{\lambda_n}(\sigma)=f(\sigma)(P_n\circ y)(\sigma),\quad f(\sigma):=\sigma^{\Delta^+(\theta_1)+4/\kappa}(1-\sigma)^{\Delta^+(h)+4/\kappa},\quad y(\sigma):=2\sigma-1,\end{gather}
into the differential equation for $\Sigma_\lambda$ (\ref{Qeqn}) (with $n$ any real number), we find that $P_n$ satisfies the Jacobi differential equation $J^{(\alpha,\beta)}[P_n]=n(n+\alpha+\beta+1)P_n$ \cite{folland, szego}, with $J^{(\alpha,\beta)}$, $\alpha>0$ and $\beta>0$ given as follows (definition \ref{kpzdefn}):
\begin{gather}\label{JacobiODE}J^{(\alpha,\beta)}:=(1-y^2)\partial_y^2+[\beta-\alpha-(\alpha+\beta+2)y]\partial_y\\
\label{Jacobiparams}\alpha:=\bindnasrepma(h)=\frac{4}{\kappa}+2\Delta^+(h)-1>0,\quad\beta:=\bindnasrepma(\theta_1)=\frac{4}{\kappa}+2\Delta^+(\theta_1)-1>0.
\end{gather}
The boundary condition (\ref{sigmabc}) requires that $P_n(y)$ not blow up as $y\downarrow-1$ ($\sigma\downarrow0$) or as $y\uparrow1$ ($\sigma\uparrow1)$.  This condition restricts $n$ to a non-negative integer, so we find that $P_n$ is therefore the $n$th Jacobi polynomial, given by \cite{szego}
\be\label{Jacobipolys}P_n^{(\alpha,\beta)}(y):=\frac{\Gamma(\alpha+n+1)}{\Gamma(n+1)\Gamma(\alpha+\beta+n+1)}\sum_{m=0}^n\binom{n}{m}\frac{\Gamma(\alpha+\beta+n+m+1)}{\Gamma(\alpha+m+1)}\left(\frac{y-1}{2}\right)^m,\quad-1\leq y\leq1.\ee
The set $\{P_n^{(\alpha,\beta)}\}_{n=0}^\infty$ is an orthogonal basis for $L_\varrho^2(-1,1)$ with weight function $\varrho(y)=\varrho^{(\alpha,\beta)}(y):=(1-y)^\alpha(1+y)^\beta$ \cite{szego,folland}.  (We recall that
\be\label{wdefn}L^2_\omega(a,b):=\left\{f:[a,b]\rightarrow\mathbb{R}\,\Bigg|\,\int_0^1 |f(y)|^2\omega(y)\,{\rm d}y<\infty\right\},\quad\text{$\omega$ a positive-valued, continuous function on $[a,b]$},\ee
defines the Hilbert space of $\omega$-square-integrable functions.)  Thus, $\mathscr{B}:=\{P_n^{(\alpha,\beta)}\circ y\}_{n=0}^\infty$ is an orthogonal basis for $L^2_w(0,1)$ with its weight function
\be \label{wtfnct}
w^{(\alpha,\beta)}(\sigma):=\sigma^\beta(1-\sigma)^\alpha
\ee
 \cite{szego}.  The squares of the norms of these two basis elements are \cite{szego}
\bea\label{norm1}&&\|P_n^{(\alpha,\beta)}\|_{L^2_\varrho(-1,1)}^2=h_n^{(\alpha,\beta)}:=\frac{2^{\alpha+\beta+1}\Gamma(n+\alpha+1)\Gamma(n+\beta+1)}{(2n+\alpha+\beta+1)\Gamma(n+1)\Gamma(n+\alpha+\beta+1)},\quad\\
\label{norm2}&&\|P_n^{(\alpha,\beta)}\circ y\|_{L^2_w(0,1)}^2=2^{-\alpha-\beta-1}h_n^{(\alpha,\beta)}.\eea
The boundary condition (\ref{sigmabc}) with (\ref{fySigma}) implies that $G(\rho,\varepsilon;\sigma,\eta)/f(\sigma)$, as a function of $\sigma$ with $\rho$, $\varepsilon$, and $\eta$ fixed, is an element of $L^2_w(0,1)$.  Therefore, it equals (in the $L^2_w(0,1)$ sense) its Jacobi-Fourier expansion over $\mathscr{B}$, and we have
\be G(\rho,\varepsilon;\sigma,\eta)=f(\sigma)\sum_{n=0}^\infty c_n(\rho,\varepsilon;\eta)P_n^{(\alpha,\beta)}(2\sigma-1),\quad0<\sigma<1,\ee
for some functions $c_n$.  Similarly, we expand the Dirac delta function $D(\sigma-\rho)$ over $\mathscr{B}$.
After inserting these expansions into (\ref{Pstar}) and exchanging the infinite summation with differentiation (assuming for now that this is allowed), we find
\be\label{lambdaetaode}[\eta\partial_\eta+(\lambda_n-1)]c_n(\rho,\varepsilon;\eta)=-2^{\alpha+\beta+1}\varepsilon^2\,\frac{\rho(1-\rho)w^{(\alpha,\beta)}(\rho)}{h_n^{(\alpha,\beta)}f(\rho)}P_n^{(\alpha,\beta)}(2\rho-1)D(\eta-\varepsilon),\quad n\in\mathbb{Z}^+\cup\{0\}.\ee
The boundary condition (\ref{etabc}) implies that $c_n(\rho,\varepsilon;\eta)=0$ for $0<\eta<\varepsilon$ and all $n\in\mathbb{Z}^+\cup\{0\}$.  Hence, the unique solution to (\ref{lambdaetaode}) is 
\be c_n(\rho,\varepsilon;\eta)=-2^{\alpha+\beta+1}\Theta(\eta-\varepsilon)\,\varepsilon\,\frac{\rho(1-\rho)w^{(\alpha,\beta)}(\rho)}{h_n^{(\alpha,\beta)}f(\rho)} P_n^{(\alpha,\beta)}(2\rho-1)\left(\frac{\varepsilon}{\eta}\right)^{\lambda_n-1},\quad n\in\mathbb{Z}^+\cup\{0\},\ee
where $\Theta$ is the Heaviside step function.  Putting everything together, we find the following formula for the Green function solving (\ref{Pstar}) and subject to boundary conditions (\ref{sigmabc}, \ref{etabc}):
\begin{multline}\label{Greenfunc}G(\rho,\varepsilon;\sigma,\eta)=-\Theta(\eta-\varepsilon)\sigma^{\beta+1}(1-\sigma)^{\alpha+1}\left(\frac{\rho}{\sigma}\right)^{\Delta^+(\theta_1)}\left(\frac{1-\rho}{1-\sigma}\right)^{\Delta^+(h)}\\
\times\eta\sum_{n=0}^\infty\left(\frac{\varepsilon}{\eta}\right)^{\lambda_n}\frac{P^{(\alpha,\beta)}_n(2\rho-1)P^{(\alpha,\beta)}_n(2\sigma-1)}{2^{-\alpha-\beta-1}h_n^{(\alpha,\beta)}}.\end{multline}
Here, $\alpha$ and $\beta$ are given by (\ref{Jacobiparams}), $h_n^{(\alpha,\beta)}$ is given by (\ref{norm1}), $\lambda_n$ is given by (\ref{lambda}), $P^{(\alpha,\beta)}_n$ is the $n$th Jacobi polynomial given by (\ref{Jacobipolys}), $\Delta^+$ is given by (\ref{kpz}), and $\theta_1$ is given by (\ref{bdysleg}).  We find it convenient to write this Green function as
\be\label{Greenfuncalt}G(\rho,\varepsilon;\sigma,\eta)=-\Theta(\eta-\varepsilon)\sigma^{\beta+1}(1-\sigma)^{\alpha+1}\left(\frac{\rho}{\sigma}\right)^{\Delta^+(\theta_1)}\left(\frac{1-\rho}{1-\sigma}\right)^{\Delta^+(h)}\eta\left(\frac{\varepsilon}{\eta}\right)^{\lambda_0}K^{(\alpha,\beta)}(\rho,\sigma,-\log(\varepsilon/\eta)^{\kappa/4}),\ee
where $K^{(\alpha,\beta)}(\rho,\sigma,t)$ is given by the  series (which is evidently absolutely and uniformly convergent over $(\rho,\sigma,t)\in[0,1]\times[0,1]\times[a,\infty)$ for any $a>0$): 
\be\label{Jacobiheat}K^{(\alpha,\beta)}(\rho,\sigma,t):=\sum_{n=0}^\infty e^{-tn(n+\alpha+\beta+1)}\frac{P^{(\alpha,\beta)}_n(2\rho-1)P^{(\alpha,\beta)}_n(2\sigma-1)}{2^{-\alpha-\beta-1}h_n^{(\alpha,\beta)}}.\ee
We note that the function $G^{(\alpha,\beta)}_t(y,z):=2^{-\alpha-\beta-1}K^{(\alpha,\beta)}(\frac{y+1}{2},\frac{z+1}{2},t)$ is the Jacobi heat kernel \cite{nowsj1, nowsj2}.  This particular heat kernel arises in the following application.  For any square-$\varrho$-integrable function $F:[-1,1]\rightarrow\mathbb{R}$, the function
\be u:[-1,1]\times(0,\infty)\rightarrow\mathbb{R},\quad u(y,t):=\int_{-1}^1G^{(\alpha,\beta)}_t(y,z)F(z)\varrho^{(\alpha,\beta)}(z)\,{\rm d}z \ee
solves the Jacobi heat equation $[J^{(\alpha,\beta)}-\partial_t]u(y,t)=0$ (\ref{JacobiODE}) with the initial condition $u(y,0+)=F(y)$ \cite{ander}.  In an abuse of terminology, we also refer to (\ref{Jacobiheat}) as the Jacobi heat kernel below.

We state a few facts concerning the Jacobi heat kernel that were recently discovered by A.\ Nowak and P.\ Sj\"ogren \cite{nowsj1, nowsj2} and that we use in this proof.  First \cite{nowsj1}, for all continuous $f:[0,1]\rightarrow\mathbb{R}$,
\be\label{idlim}\lim_{t\downarrow0}\int_0^1K^{(\alpha,\beta)}(\rho,\sigma,t)f(\sigma)\,w^{(\alpha,\beta)}(\sigma)\,{\rm d}\sigma=f(\rho)\quad\text{uniformly over $\rho\in[0,1]$}.\ee
Second, the authors of \cite{nowsj1, nowsj2} have derived sharp estimates for the short time behavior of the Jacobi heat kernel, and their result (stated slightly differently) reads as follows \cite{nowsj2}.  We let
\be\label{Lambda}\Lambda^{(\alpha,\beta)}(\theta,\phi,t):=[t+\sin(\theta/2)\sin(\phi/2)]^{-\alpha-1/2}[t+\cos(\theta/2)\cos(\phi/2)]^{-\beta-1/2},\ee
with $\theta,\phi\in[0,\pi]$, $t\geq0$, and $\alpha,\beta\geq-1/2$.  (We recall from (\ref{Jacobiparams}) that $\alpha,\beta>0$ in this proof.)  Then for any $T>0$, there exist positive constants $C$, $c_1$
and $c_2$, depending only on $\alpha$, $\beta$ and $T$, such that for all $\theta,\phi\in[0,\pi]$,
\be\label{bounds}\begin{cases}\begin{aligned}\displaystyle{\frac{\sqrt{c_1}}{C}\Lambda^{(\alpha,\beta)}(\theta,\phi,t)\frac{e^{-(\theta-\phi)^2/c_1t}}{\sqrt{\pi c_1t}}}&\leq K^{(\alpha,\beta)}(\cos^2\theta/2,\cos^2\phi/2,t) \\
&\leq \displaystyle{C\sqrt{c_2}\,\Lambda^{(\alpha,\beta)}(\theta,\phi,t)\frac{e^{-(\theta-\phi)^2/c_2t}}{\sqrt{\pi c_2t}}}\end{aligned}\,\,\,,& 0\leq t\leq T\\ 
\displaystyle{\frac{1}{C}\leq K^{(\alpha,\beta)}(\cos^2\theta/2,\cos^2\phi/2,t)\leq C},& t>T \end{cases}.\ee
We recognize the exponential divided by the square root in (\ref{bounds}) as the standard heat kernel for an infinite rod \cite{folland}, a fact that motivates (\ref{idlim}).  Equation (\ref{bounds}) shows that $K^{(\alpha,\beta)}(\rho,\sigma,t)$ is positive, a feature previously noted in \cite{karlin, gasper, bochner}.

Next, in analogy with the proof of lemma \ref{firstlimitlem}, we use the Green function (\ref{Greenfunc}) to derive an integral equation for $\mathsf{F}(\rho,\varepsilon)$.  The usual procedure is to equate two evaluations of the  definite integral
\be\label{difference}\int_0^b\int_0^1\big[\mathscr{P}[\mathsf{F}](\boldsymbol{\xi};x,\sigma,\eta)G(\rho,\varepsilon;\sigma,\eta)-\mathscr{P}^*[G](\rho,\varepsilon;\sigma,\eta)\mathsf{F}(\boldsymbol{\xi};x,\sigma,\eta)\big]\,{\rm d}\eta\,{\rm d}\sigma.\ee
First, we use definitions (\ref{Pformula9}--\ref{Qstar}) and  the product rule to write the integrand of (\ref{difference}) as a sum of derivatives with respect to $\sigma$ and $\eta$ and integrate, finding only boundary terms \cite{folland}.  With $G(\rho,\varepsilon;\sigma,\eta)=0$ for $\eta\leq\varepsilon$ (\ref{etabc}), we find
\begin{multline}\label{theboundaryterms}\frac{\kappa}{4}\int_\varepsilon^b\frac{1}{\eta^2}[\partial_\sigma\mathsf{F}(\boldsymbol{\xi};x,\sigma,\eta)G(\rho,\varepsilon;\sigma,\eta)-\mathsf{F}(\boldsymbol{\xi};x,\sigma,\eta)\partial_\sigma G(\rho,\varepsilon;\sigma,\eta)]\bigg|_{\sigma=0}^{\sigma=1}{\rm d}\eta\\
+\int_\varepsilon^b\frac{1-2\sigma}{\eta^2\sigma(1-\sigma)}\mathsf{F}(\boldsymbol{\xi};x,\sigma,\eta)G(\rho,\varepsilon;\sigma,\eta)\bigg|_{\sigma=0}^{\sigma=1}{\rm d}\eta+\int_0^1\frac{1}{b\sigma(1-\sigma)}\mathsf{F}(\boldsymbol{\xi};x,\sigma,b)G(\rho,\varepsilon;\sigma,b)\,{\rm d}\sigma.\end{multline}
Second, we insert $\mathscr{P}[\mathsf{F}]=\mathscr{N}[\mathsf{F}]$ and (\ref{Pstar}) into (\ref{difference}), and we integrate over the delta function in the first term.  This treatment of the formal equation (\ref{Pstar}) is non-rigorous, but if we momentarily allow it, then we find 
\be\label{straightintegration}\int_\varepsilon^b\int_0^1 \mathscr{N}[\mathsf{F}](\boldsymbol{\xi};x,\sigma,\eta)G(\rho,\varepsilon;\sigma,\eta)\,{\rm d}\eta\,{\rm d}\sigma-\mathsf{F}(\boldsymbol{\xi};x,\rho,\varepsilon).\ee
(Again, we have used $G(\rho,\varepsilon;\sigma,\eta)=0$ for $\eta\leq\varepsilon$.)  Equating (\ref{straightintegration}) with (\ref{theboundaryterms}) gives an integral equation with a form reminiscent of the form of the previous integral equations (\ref{tildeH}, \ref{I}):
\be\label{preGreenID}\mathsf{F}(\boldsymbol{\xi};x,\rho,\varepsilon)=\int_\varepsilon^b\int_0^1 \mathscr{N}[\mathsf{F}](\boldsymbol{\xi};x,\sigma,\eta)G(\rho,\varepsilon;\sigma,\eta)\,{\rm d}\eta\,{\rm d}\sigma-\text{boundary terms.}\ee

To rigorously derive (\ref{preGreenID}), we change the lower $\eta$-limit in (\ref{difference}) to $\eta=\varepsilon+\o$ with $0<\o\ll\varepsilon$, recalculate (\ref{theboundaryterms}) after this change, and equate the result to (\ref{difference}) with $\mathscr{P}^*[G]=0$ (because $\eta\geq\varepsilon+\o>\varepsilon$) and $\mathscr{P}[\mathsf{F}]=\mathscr{N}[\mathsf{F}]$, finding
\begin{multline}\label{GreenID}\int_{\varepsilon+\o}^b\int_0^1 \mathscr{N}[\mathsf{F}](\boldsymbol{\xi};x,\sigma,\eta)G(\rho,\varepsilon;\sigma,\eta)\,{\rm d}\eta\,{\rm d}\sigma\\
\begin{aligned}&=\frac{\kappa}{4}\int_{\varepsilon+\o}^b\frac{1}{\eta^2}[\partial_\sigma\mathsf{F}(\boldsymbol{\xi};x,\sigma,\eta)G(\rho,\varepsilon;\sigma,\eta)-\mathsf{F}(\boldsymbol{\xi};x,\sigma,\eta)\partial_\sigma G(\rho,\varepsilon;\sigma,\eta)]\bigg|_{\sigma=0}^{\sigma=1}{\rm d}\eta\\
&+\int_{\varepsilon+\o}^b\frac{1-2\sigma}{\eta^2\sigma(1-\sigma)}\mathsf{F}(\boldsymbol{\xi};x,\sigma,\eta)G(\rho,\varepsilon;\sigma,\eta)\bigg|_{\sigma=0}^{\sigma=1}{\rm d}\eta+\int_0^1\frac{1}{\eta\sigma(1-\sigma)}\mathsf{F}(\boldsymbol{\xi};x,\sigma,\eta)G(\rho,\varepsilon;\sigma,\eta)\bigg|^{\eta=b}_{\eta=\varepsilon+\o}d\sigma.\end{aligned}\end{multline}
Following a strategy described in \cite{roach} (Sect.\ 9.7), we later send $\o \downarrow0$ (figure \ref{Geometry}) in order to derive from (\ref{GreenID}) an integral equation with the desired form (\ref{preGreenID}).  

But first, we show that the terms on the right side of (\ref{GreenID}) involving integration with respect to $\eta$ in fact vanish.
Now, the boundary condition (\ref{sigmabc}) with the asymptotic behavior (see condition \ref{nextcond3} and the conclusion of lemma \ref{secondlimitlem})
\be\label{leftrightbehavior}\mathsf{F}(\boldsymbol{\xi};x,\sigma,\eta)\underset{\sigma\downarrow0}{\sim}B_\ell(\boldsymbol{\xi};x,\eta)\sigma^{\Delta^+(\theta_1)},\quad\mathsf{F}(\boldsymbol{\xi};x,\sigma,\eta)\underset{\sigma\uparrow1}{\sim}B_r(\boldsymbol{\xi};x,\eta)(1-\sigma)^{\Delta^+(h)},\ee
imply that the integrand of the second definite integral on the right side of (\ref{GreenID}) is $O(\sigma^\beta)$ as $\sigma\downarrow0$ (resp.\ O$((1-\sigma)^\alpha)$ as $\sigma\uparrow1$) with $\alpha,\beta>0$ (\ref{Jacobiparams}).  Thus, this definite integral vanishes as claimed.

By a similar argument, the first definite integral on the right side of (\ref{GreenID}) vanishes too.  Indeed, the second term in its integrand vanishes as $\sigma\downarrow0$ or $\sigma\uparrow1$ thanks to (\ref{Greenfunc}) and (\ref{leftrightbehavior}).  Now turning our attention to the first term, we estimate $\partial_\sigma\mathsf{F}(\boldsymbol{\xi};x,\sigma,\eta)$ with arguments very similar to those in the proof of lemma \ref{firstlimitlem}.  Expressed in terms of $\boldsymbol{\xi}$, $x$, $\varepsilon$, and $\rho$ (see above (\ref{Fdef2}) and (\ref{translate})) (we replace $\rho$ by $\sigma$ later), the null-state PDE (\ref{hpde}) with $j\not\in\{\iota-2,\iota-1,\iota\}$ is
\begin{multline}\label{newnotii+1}\Bigg[\frac{\kappa}{4}\partial_j^2+\sum_{k\neq j}\left(\frac{\partial_k}{\xi_k-\xi_j}-\frac{\theta_1}{(\xi_k-\xi_j)^2}\right)+\frac{\partial_x}{x-\xi_j}-\frac{\theta_1}{(x-\xi_j)^2}-\frac{\varepsilon\rho(1-\rho)\partial_\rho}{(x-\xi_j)(x+\varepsilon\rho-\xi_j)(x+\varepsilon-\xi_j)}\\
-\frac{\theta_1}{(x+\varepsilon\rho-\xi_j)^2}-\frac{\varepsilon\partial_\varepsilon}{(x-\xi_j)(x+\varepsilon-\xi_j)}-\frac{h}{(x+\varepsilon-\xi_j)^2}\Bigg]\mathsf{F}(\boldsymbol{\xi};x,\rho,\varepsilon)=0. \end{multline}
After summing (\ref{newnotii+1}) over $j\not\in\{\iota-2,\iota-1,\iota\}$, we use the conformal Ward identities (\ref{wardidh}) to replace the derivatives $\partial_x\mathsf{F}$, $\partial_\rho\mathsf{F}$, and $\partial_\varepsilon\mathsf{F}$ in the resulting PDE with derivatives of $\mathsf{F}$ in the coordinates of $\boldsymbol{\xi}$.  These identities are 
\be\begin{gathered}\label{newward}\bigg[\sideset{}{_j}\sum\partial_j+\partial_x\bigg]\mathsf{F}(\boldsymbol{\xi};x,\rho,\varepsilon)=0,\qquad\bigg[\sideset{}{_j}\sum(\xi_j\partial_j+\theta_1)+x\partial_x+\theta_1+\varepsilon\partial_\varepsilon+h+\theta_1\bigg]\mathsf{F}(\boldsymbol{\xi};x,\rho,\varepsilon)=0,\\
\bigg[\sideset{}{_j}\sum(\xi_j^2\partial_j+2\theta_1\xi_j)+x^2\partial_x+2\theta_1x-\rho(1-\rho)\varepsilon\partial_\rho+(2x+\varepsilon\rho)\theta_1+(2x+\varepsilon)\varepsilon\partial_\varepsilon+2h(x+\varepsilon)\bigg]\mathsf{F}(\boldsymbol{\xi};x,\rho,\varepsilon)=0.\end{gathered}\ee
Hence, we find a strictly elliptic PDE in the coordinates of $\boldsymbol{\xi}$, with $x$, $\rho$ and $\varepsilon$ as parameters, and whose coefficients do not blow up or vanish as $\varepsilon\downarrow0$, $\rho\downarrow0$, or $\rho\uparrow1$.  Therefore, the Schauder interior estimate (Cor.\ 6.3 of \cite{giltru}) gives
\be\label{newSchauder}d\sup_{ U}\left|\left\{\begin{array}{ccccc}\mathsf{F}, & \partial_j\mathsf{F}, & \rho(1-\rho)\partial_\rho\mathsf{F}, & \partial_x\mathsf{F}, & \varepsilon\partial_\varepsilon\mathsf{F} \end{array}\right\}(\boldsymbol{\xi};x,\rho,\varepsilon)\right|\leq C(R)\sup_{ V}|\mathsf{F}(\boldsymbol{\xi};x,\rho,\varepsilon)|\ee
for all $\rho\in(0,1)$ and $0<\varepsilon<b$, where $ U\subset\subset V\subset\subset\pi_{\iota-1,\iota}(\Omega_0^M)$ are open sets, $C$ is some positive-valued function, $d=\text{dist}(\partial U,\partial V)$, and $R=\text{diam}( V)/2$.  It follows from  (\ref{sigmabc}, \ref{leftrightbehavior}) and (\ref{newSchauder}) with $\sigma$ replacing $\rho$ that the first term of the first definite integral on the right side of (\ref{GreenID}) vanishes as $\sigma\downarrow0$ or $\sigma\uparrow1$ too.  Hence, (\ref{GreenID}) becomes
\begin{multline}\label{simplerGreenID}-\int_0^1\frac{1}{(\varepsilon+\o)\sigma(1-\sigma)}\mathsf{F}(\boldsymbol{\xi};x,\sigma,\varepsilon+\o)G(\rho,\varepsilon;\sigma,\varepsilon+\o)\,{\rm d}\sigma\\
=\int_{\varepsilon+\o}^b\int_0^1 \mathscr{N}[\mathsf{F}](\boldsymbol{\xi};x,\sigma,\eta)G(\rho,\varepsilon;\sigma,\eta)\,{\rm d}\eta\,{\rm d}\sigma-\int_0^1\frac{1}{b\sigma(1-\sigma)}\mathsf{F}(\boldsymbol{\xi};x,\sigma,b)G(\rho,\varepsilon;\sigma,b)\,{\rm d}\sigma.\end{multline}
It is easy to show that integrands appearing on either side of (\ref{simplerGreenID}) are $O(\sigma^\beta)$ as $\sigma\downarrow0$ and $O((1-\sigma)^\alpha)$ as $\sigma\uparrow1$.  Because $\alpha$ and $\beta$ are positive (\ref{Jacobiparams}), the definite integrals with respect to $\sigma$ on either side converge.

Next, we send $\o\downarrow0$ to derive an integral equation of the form (\ref{preGreenID}).  In this limit, the left side of (\ref{simplerGreenID}) goes to $\mathsf{F}(\boldsymbol{\xi};x,\rho,\varepsilon)$.  To motivate why this is true, we use (\ref{wtfnct}) and (\ref{Greenfunc}) to write the limit of this left side as 
\be\label{leftside1}\lim_{\o\downarrow0}\int_0^1\left(\frac{\rho}{\sigma}\right)^{\Delta^+(\theta_1)}\left(\frac{1-\rho}{1-\sigma}\right)^{\Delta^+(h)}\sum_{n=0}^\infty\left(\frac{\varepsilon}{\varepsilon+\o}\right)^{\lambda_n}\frac{P^{(\alpha,\beta)}_n(2\sigma-1)P^{(\alpha,\beta)}_n(2\rho-1)}{2^{-\alpha-\beta-1}h_n^{(\alpha,\beta)}}\mathsf{F}(\boldsymbol{\xi};x,\sigma,\varepsilon+\o)\,w^{(\alpha,\beta)}(\sigma)\,{\rm d}\sigma,\ee
we commute the integration with respect to $\sigma$ with the infinite sum, we set $\o=0$, and we recognize (essentially) the Jacobi-Fourier series for $\mathsf{F}(\boldsymbol{\xi};x,\sigma,\varepsilon)$.  Now, to prove that (\ref{leftside1}) indeed equals $\mathsf{F}(\boldsymbol{\xi};x,\sigma,\varepsilon)$, we use (\ref{Greenfuncalt}) to write it as
\newpage
\begin{multline}\label{leftside2}\lim_{t\downarrow0}e^{-4\lambda_0t/\kappa}\Bigg[\int_0^1K^{(\alpha,\beta)}(\rho,\sigma,t)\left(\frac{\rho}{\sigma}\right)^{\Delta^+(\theta_1)}\left(\frac{1-\rho}{1-\sigma}\right)^{\Delta^+(h)}\mathsf{F}(\boldsymbol{\xi};x,\sigma,\varepsilon)\,w^{(\alpha,\beta)}(\sigma)\,{\rm d}\sigma\\
+\int_0^1K^{(\alpha,\beta)}(\rho,\sigma,t)\left(\frac{\rho}{\sigma}\right)^{\Delta^+(\theta_1)}\left(\frac{1-\rho}{1-\sigma}\right)^{\Delta^+(h)}[\mathsf{F}(\boldsymbol{\xi};x,\sigma,\varepsilon e^{4t/\kappa})-\mathsf{F}(\boldsymbol{\xi};x,\sigma,\varepsilon)]\,w^{(\alpha,\beta)}(\sigma)\,{\rm d}\sigma\Bigg],\end{multline}
where $\varepsilon+\o=\varepsilon e^{4t/\kappa}$.  As a consequence of (\ref{idlim}), the first definite integral in (\ref{leftside2}) goes to $\mathsf{F}(\boldsymbol{\xi};x,\rho,\varepsilon)$ as $t\downarrow0$.  To show that the second definite integral vanishes as $t\downarrow0$, we let $\mathcal{E}_\rho$ be a closed interval within $(0,1)$ and containing $\rho$, we divide the region of integration into two parts thus,
\be\label{2ndterm}\left(\int_{\mathcal{E}_\rho}+\int_{(0,1)\setminus\mathcal{E}_\rho}\right)K^{(\alpha,\beta)}(\rho,\sigma,t)\left(\frac{\rho}{\sigma}\right)^{\Delta^+(\theta_1)}\left(\frac{1-\rho}{1-\sigma}\right)^{\Delta^+(h)}[\mathsf{F}(\boldsymbol{\xi};x,\sigma,\varepsilon e^{4t/\kappa})-\mathsf{F}(\boldsymbol{\xi};x,\sigma,\varepsilon)]\,w^{(\alpha,\beta)}(\sigma)\,{\rm d}\sigma,\ee
and we show that both definite integrals vanish as $t\downarrow0$.  
\begin{enumerate}
\item\label{thefirstitem} First, we show that the definite integral over $\mathcal{E}_\rho$ in (\ref{2ndterm}) vanishes as $t\downarrow0$.  The magnitude of this definite integral is bounded above by
\be\label{term2}\sup_{\mathcal{E}_\rho}|\mathsf{F}(\boldsymbol{\xi};x,\sigma,\varepsilon e^{4t/\kappa})-\mathsf{F}(\boldsymbol{\xi};x,\sigma,\varepsilon)|\int_0^1K^{(\alpha,\beta)}(\rho,\sigma,t)\left(\frac{\rho}{\sigma}\right)^{\Delta^+(\theta_1)}\left(\frac{1-\rho}{1-\sigma}\right)^{\Delta^+(h)}\,w^{(\alpha,\beta)}(\sigma)\,{\rm d}\sigma, \ee
because (\ref{bounds}) implies that $K^{(\alpha,\beta)}$ is positive-valued.  Furthermore, the definite integral in (\ref{term2}) approaches one as $t\downarrow0$ thanks to (\ref{idlim}).  Now, to show that the supremum in (\ref{term2}) vanishes, we use the integral equation 
\begin{multline}\label{Fprop}\mathsf{F}(\boldsymbol{\xi};x,\sigma,\varepsilon)-\mathsf{F}(\boldsymbol{\xi};x,\sigma,b)=\left[\left(\frac{\varepsilon}{b}\right)^{\Delta^-(h)}-1\right]\mathsf{F}(\boldsymbol{\xi};x,\sigma,b)-\frac{\kappa}{4}\varepsilon^{\Delta^-(h)}J(\varepsilon,b)\partial_b[b^{-\Delta^-(h)}\mathsf{F}(\boldsymbol{\xi};x,\sigma,b)]\\
+\int_\varepsilon^b\left(\frac{\varepsilon}{\eta}\right)^{\Delta^-(h)}J(\varepsilon,\eta)\Bigg[\frac{\sigma\partial_{\sigma}}{\eta^2(1-\sigma)^2}+\frac{\theta_1}{\eta^2(1-\sigma)^2}+\frac{\partial_x}{\eta}+\sum_{j=1}^{M-3}\left(\frac{\theta_1}{(\xi_j-x-\eta)^2}-\frac{\partial_j}{\xi_j-x-\eta}\right)\Bigg]\mathsf{F}(\boldsymbol{\xi};x,\sigma,\eta)\,{\rm d}\eta,\end{multline}
with $J$ defined in (\ref{Jgreen}) (except $h$ replaces $\theta_1$), $\varepsilon$ bounded above zero, and $\sigma\in\mathcal{E}_\rho$ bounded away from zero and one. Equation (\ref{Fprop}) follows from the null-state PDE (\ref{hpde}) with $j=\iota$ which, in terms of $\varepsilon$ and $\sigma=\delta/\varepsilon$, becomes
\begin{multline}\label{nullstateeps}\Bigg[\frac{\kappa}{4}\partial_\varepsilon^2+\frac{\partial_\varepsilon}{\varepsilon}-\frac{h}{\varepsilon^2}\Bigg]\mathsf{F}(\boldsymbol{\xi};x,\sigma,\eta)\\
=\Bigg[\frac{\sigma\partial_{\sigma}}{\varepsilon^2(1-\sigma)^2}+\frac{\theta_1}{\varepsilon^2(1-\sigma)^2}+\frac{\partial_x}{\varepsilon}+\sum_{j=1}^{M-3}\left(\frac{\theta_1}{(\xi_j-x-\varepsilon)^2}-\frac{\partial_j}{\xi_j-x-\varepsilon}\right)\Bigg]\mathsf{F}(\boldsymbol{\xi};x,\sigma,\varepsilon).\end{multline}
We find (\ref{Fprop}) after we invert the differential operator on the left side of (\ref{nullstateeps}), exactly as we did in the proof of lemma \red{3} of \cite{florkleb}.  (We note that $\varepsilon$ in (\ref{Fprop}, \ref{nullstateeps}) plays the role of $\delta$ in that proof.)  Thus, after setting $b=\varepsilon e^{-4t/\kappa}$ in (\ref{Fprop}), taking the supremum of (\ref{Fprop}) over $\sigma\in\mathcal{E}_\rho$, and noting that $J(\varepsilon,b)$ vanishes as $b\downarrow\varepsilon$, we find
\be\label{uniform}\sup_{\mathcal{E}_\rho}|\mathsf{F}(\boldsymbol{\xi};x,\sigma,\varepsilon e^{4t/\kappa})-\mathsf{F}(\boldsymbol{\xi};x,\sigma,\varepsilon)|\xrightarrow[\,\,\,t\downarrow0\,\,\,]{}0.\ee
Hence, (\ref{term2}) vanishes as $t\downarrow0$.  But because (\ref{term2}) is an upper bound of the magnitude of the definite integral over $\mathcal{E_\rho}$ in (\ref{2ndterm}), we conclude that this latter definite integral vanishes as $t\downarrow0$ too.
\item\label{theseconditem} Finally, we show that the definite integral over $(0,1)\setminus\mathcal{E}_\rho$ in (\ref{2ndterm}) vanishes as $t\downarrow0$.  Recalling that $K^{(\alpha,\beta)}$ is positive-valued, the magnitude of this definite integral is bounded above by
\be\label{term3}\sup_{(0,1)\setminus\mathcal{E}_\rho}|\mathsf{F}(\boldsymbol{\xi};x,\sigma,\varepsilon e^{4t/\kappa})-\mathsf{F}(\boldsymbol{\xi};x,\sigma,\varepsilon)|\int_{(0,1)\setminus\mathcal{E}_\rho}K^{(\alpha,\beta)}(\rho,\sigma,t)\left(\frac{\rho}{\sigma}\right)^{\Delta^+(\theta_1)}\left(\frac{1-\rho}{1-\sigma}\right)^{\Delta^+(h)}\,w^{(\alpha,\beta)}(\sigma)\,{\rm d}\sigma.\ee
After estimating the supremum, inserting the change of variables $\cos^2\theta/2=\sigma$ and $\cos^2\phi/2=\rho$, applying the estimate (\ref{bounds}) for some fixed $T>0$ (small enough so $x_\iota<x_{\iota+1}$ if the latter coordinate exists), and using the inequality $\Lambda^{(\alpha,\beta)}(\theta,\phi,t)\leq\Lambda^{(\alpha,\beta)}(\theta,\phi,0)$ for $t>0$ (\ref{Lambda}), we find that (\ref{term3}) is bounded above by
\begin{multline}\label{supsint}2C\sqrt{c_2}\Bigg(\sup_{\substack{\sigma\in(0,1) \\ t\in(0,T)}}|\mathsf{F}(\boldsymbol{\xi};x,\sigma,\varepsilon e^{4t/\kappa})|+\sup_{\sigma\in(0,1)}|\mathsf{F}(\boldsymbol{\xi};x,\sigma,\varepsilon)|\Bigg)\\
\times\int_0^\pi\frac{e^{-(\theta-\phi)^2/c_2t}}{\sqrt{\pi c_2 t}}\left(\frac{\sin\theta/2}{\sin\phi/2}\right)^{4/\kappa-1/2}\left(\frac{\cos\theta/2}{\cos\phi/2}\right)^{4/\kappa-1/2}\chi_{\{\cos^2\theta/2\not\in\mathcal{E}_\rho\}}(\theta)\,{\rm d}\theta,\end{multline}
where $\chi_S(x)$ is the indicator function on the set $S\subset\mathbb{R}$.  Because $\mathsf{F}(\boldsymbol{\xi};x,\sigma,\varepsilon e^{4t/\kappa})$, with $\boldsymbol{\xi}$, $x$, and $\varepsilon$ fixed, extends to a continuous function on $(\sigma,t)\in[0,1]\times[0,T]$ (thanks to the uniformness of the limits $\sigma\downarrow0$ and $\sigma\uparrow1$, as stated in lemma \ref{secondlimitlem}), the supremums in (\ref{supsint}) are finite.  Furthermore, as $t\rightarrow0$, the definite integral in (\ref{supsint}) converges to $\chi_{\{\cos^2\theta/2\not\in\mathcal{E}_\rho\}}(\theta=\phi)$ (see theorem \red{7.3} of \cite{folland}), and this equals zero because $\cos^2\phi/2=\rho\in\mathcal{E}_\rho$.  Hence, (\ref{term3}) vanishes as $t\downarrow0$.  Finally, because (\ref{term3}) is an upper bound of the magnitude of the definite integral over $(0,1)\setminus\mathcal{E_\rho}$ in (\ref{2ndterm}), we conclude that this latter definite integral vanishes as $t\downarrow0$ too.
\end{enumerate}
From items \ref{thefirstitem} and \ref{theseconditem} above, it follows that the second definite integral in (\ref{leftside2}) vanishes as $t\downarrow0$.  Therefore, (\ref{leftside2}) goes to $\mathsf{F}(\boldsymbol{\xi};x,\rho,\varepsilon)$ as $t\downarrow0$,
so we find the integral equation
\be\label{simplestGreenID}\mathsf{F}(\boldsymbol{\xi};x,\rho,\varepsilon)=\int_0^1\int_{\varepsilon}^b \mathscr{N}[\mathsf{F}](\boldsymbol{\xi};x,\sigma,\eta)G(\rho,\varepsilon;\sigma,\eta)\,{\rm d}\eta\,{\rm d}\sigma-\int_0^1\frac{1}{b\sigma(1-\sigma)}\mathsf{F}(\boldsymbol{\xi};x,\sigma,b)G(\rho,\varepsilon;\sigma,b)\,{\rm d}\sigma\ee
after sending $\o\downarrow0$ in (\ref{simplerGreenID}).  This has the form (\ref{preGreenID}) that we sought.

Now, we use the integral equation (\ref{simplestGreenID}) to find the smallest value that $p_0$ may attain in the estimate (\ref{initialest}), the nub of this lemma.  To this end, we replace $\mathcal{K}$ by any bounded open set $ U_0\subset\subset\pi_{\iota-1,\iota}(\Omega_0^M)$ and $\delta$ by $\varepsilon\rho$ in (\ref{initialest}), finding
\be\label{U0}\sup_{ U_0}|\mathsf{F}(\boldsymbol{\xi};x,\rho,\varepsilon)|=O(\rho^{\Delta^+(\theta_1)}\varepsilon^{-q_0}(1-\rho)^{\Delta^+(h)}),\quad -q_0:=-p_0+\Delta^+(\theta_1)+\Delta^+(h).\ee
Next, we choose open sets $U_0,$ $U_1,\ldots, U_m$, with $m:=\lceil q_0+\lambda_0\rceil$ (\ref{lambda}), such that $\mathcal{K}\subset\subset U_m\subset\subset U_{m-1}\subset\subset\ldots\subset\subset U_0\subset\subset\pi_{\iota-1,\iota}(\Omega_0^M)$.  From (\ref{simplestGreenID}), we have
\begin{multline}\label{PstarGnext}\sup_{ U_n}|\mathsf{F}(\boldsymbol{\xi};x,\rho,\varepsilon)|\leq\int_\varepsilon^b\int_0^1\sup_{ U_n}|\eta\sigma(1-\sigma)\mathscr{N}[\mathsf{F}](\boldsymbol{\xi};x,\sigma,\eta)|\frac{|G(\rho,\varepsilon;\sigma,\eta)|}{\eta\sigma(1-\sigma)}\,{\rm d}\eta\,{\rm d}\sigma\\
+\int_0^1\sup_{ U_n}|\mathsf{F}(\boldsymbol{\xi};x,\sigma,b)|\frac{|G(\rho,\varepsilon;\sigma,b)|}{b\sigma(1-\sigma)}\,{\rm d}\sigma\end{multline}
for all $n\in\{0,1,\ldots,m\}$.  Next, we estimate the second definite integral on the right side of (\ref{PstarGnext}).  It follows from (\ref{U0}) with $U_n\subset U_0$ replacing $U_0$ and (\ref{Greenfuncalt}) that for some constant $C_0$ and all $n\in\{0,1,\ldots,m\}$, 
\be\label{2nddefintest}\int_0^1\sup_{ U_n}|\mathsf{F}(\boldsymbol{\xi};x,\sigma,b)|\frac{|G(\rho,\varepsilon;\sigma,b)|}{b\sigma(1-\sigma)}\,{\rm d}\sigma\leq C_0\rho^{\Delta^+(\theta_1)}\varepsilon^{\lambda_0}(1-\rho)^{\Delta^+(h)}\int_0^1K^{(\alpha,\beta)}(\rho,\sigma,-\log(\varepsilon/b)^{\kappa/4})\,w^{(\alpha,\beta)}(\sigma)\,{\rm d}\sigma, \ee
where we define $w^{(\alpha,\beta)}(\sigma)$ in (\ref{wtfnct}).  Because $K^{(\alpha,\beta)}(\rho,\sigma,t)$ is bounded above by a positive constant for all $\rho,\sigma\in[0,1]$ and $t>0$ (\ref{bounds}), the definite integral on the right side of (\ref{2nddefintest}) remains bounded as $\varepsilon\downarrow0$, and (\ref{PstarGnext}) becomes
\be\label{PstarGnext2}\sup_{ U_n}|\mathsf{F}(\boldsymbol{\xi};x,\rho,\varepsilon)|\leq\int_\varepsilon^b\int_0^1\sup_{ U_n}|\eta\sigma(1-\sigma)\mathscr{N}[\mathsf{F}](\boldsymbol{\xi};x,\sigma,\eta)|\frac{|G(\rho,\varepsilon;\sigma,\eta)|}{\eta\sigma(1-\sigma)}\,{\rm d}\eta\,{\rm d}\sigma+O(\rho^{\Delta^+(\theta_1)}\varepsilon^{\lambda_0}(1-\rho)^{\Delta^+(h)}).\ee
Next, we estimate the definite integral on the right side of (\ref{PstarGnext2}) with $n=1$.  After inserting the explicit formula (\ref{Mformula}, \ref{Nformula}) for $\mathscr{N}$ and using (\ref{Greenfuncalt}) and estimates (\ref{newSchauder}) with $ U=U_1$ and $ V=U_0$ and (\ref{U0}), we find that for some $C_1>0$,
\begin{multline}\label{1stdefintest1}\sup_{ U_1}|\mathsf{F}(\boldsymbol{\xi};x,\rho,\varepsilon)|\leq O(\rho^{\Delta^+(\theta_1)}\varepsilon^{\lambda_0}(1-\rho)^{\Delta^+(h)})\\
+C_1\rho^{\Delta^+(\theta_1)}(1-\rho)^{\Delta^+(h)}\int_\varepsilon^b\int_0^1\eta^{-q_0}\left(\frac{\varepsilon}{\eta}\right)^{\lambda_0}K^{(\alpha,\beta)}(\rho,\sigma,-\log(\varepsilon/\eta)^{\kappa/4})\,w^{(\alpha,\beta)}(\sigma)\,{\rm d}\sigma\,{\rm d}\eta.\end{multline}

Next, we determine the behavior of the definite integral in (\ref{1stdefintest1}) as $\varepsilon\downarrow0$ or $\rho\downarrow0$ or $\rho\uparrow1$.  We introduce the substitution $\eta=\varepsilon e^{4t/\kappa}$, chose a $T\in(0,\log(b/\varepsilon)^{\kappa/4})$, and insert the estimates (\ref{bounds}) into this definite integral to find
\begin{multline}\label{1stdefintest2}\int_\varepsilon^b\int_0^1\eta^{-q_0}\left(\frac{\varepsilon}{\eta}\right)^{\lambda_0}K^{(\alpha,\beta)}(\rho,\sigma,(-\log(\varepsilon/\eta)^{\kappa/4})\,w^{(\alpha,\beta)}(\sigma)\,{\rm d}\eta\,{\rm d}\sigma\\
\begin{aligned}&\leq\frac{4}{\kappa}\varepsilon^{-q_0+1}\int_0^Te^{(-q_0-\lambda_0+1)4t/\kappa}\int_0^1K^{(\alpha,\beta)}(\rho,\sigma,t)\,w^{(\alpha,\beta)}(\sigma)\,{\rm d}\sigma\,{\rm d}t\\
&+\frac{4}{\kappa}C\varepsilon^{-q_0+1}\int_T^{\log(b/\varepsilon)^{\kappa/4}}e^{(-q_0-\lambda_0+1)4t/\kappa}\int_0^1w^{(\alpha,\beta)}(\sigma)\,{\rm d}\sigma\,{\rm d}t.\end{aligned}\end{multline}
To determine the behavior of the right side of (\ref{1stdefintest2}) as $\rho\downarrow0$ or $\rho\uparrow1$, we note that the second definite integral on this right side does not depend on $\rho$, but the first definite integral does.  Now within the latter, the definite integral 
\be\label{sigmaint}\int_0^1K^{(\alpha,\beta)}(\rho,\sigma,t)\,w^{(\alpha,\beta)}(\sigma)\,{\rm d}\sigma\ee
is a continuous function of $(\rho,t)\in[0,1]\times[a,T]$ for any $a>0$.  Furthermore, (\ref{sigmaint}) converges to one uniformly in $\rho\in[0,1]$ as $t\downarrow0$ thanks to (\ref{idlim}), so it extends to a continuous function of $(\rho,t)\in[0,1]\times[0,T]$.  Therefore, the right side of (\ref{1stdefintest2}) equals some function of $\rho$, bounded as $\rho\downarrow0$ or $\rho\uparrow1$, multiplied by $\varepsilon^{-q_0+1}$.

Next, we determine the behavior of the right side of (\ref{1stdefintest2}) as $\varepsilon\downarrow0$.  If we choose $p_0$ big enough in (\ref{initialest}) so $m>2$, then the power $-q_0-\lambda_0+1$ is negative.  It therefore follows that the first term on the right side of (\ref{1stdefintest2}) dominates, so that the entire right side is $O(\varepsilon^{-q_0+1})$ as $\varepsilon\downarrow0$ uniformly in $\rho\in[0,1]$.  Furthermore, after inserting this estimate  into (\ref{1stdefintest1}), we find
\be\label{U1}\sup_{ U_1}|\mathsf{F}(\boldsymbol{\xi};x,\rho,\varepsilon)|=O(\rho^{\Delta^+(\theta_1)}\varepsilon^{-q_0+1}(1-\rho)^{\Delta^+(h)}).\ee
We note that the power of $\varepsilon$ in (\ref{U1}) has increased by one from its original value in (\ref{U0}).  Now, we insert this estimate back into (\ref{PstarGnext2}) with $n=2$ and $-q_0$ replaced by $-q_0+1$, and we repeat this analysis another $m-2$ times to find the estimate (\ref{U0}) with $ U_0$ replaced by $U_{m-1}$ and $-q_0$ replaced by $-q_0+m-1$.  Finally, we insert this estimate back into (\ref{PstarGnext2}) with $n=m$, and after using (\ref{1stdefintest2}) with $-q_0$ replaced by $-q_0+m-1$, we find
\begin{multline}\label{1stdefintestlast}\sup_{ U_m}|\mathsf{F}(\boldsymbol{\xi};x,\rho,\varepsilon)|\leq O(\rho^{\Delta^+(\theta_1)}\varepsilon^{\lambda_0}(1-\rho)^{\Delta^+(h)})+O(\rho^{\Delta^+(\theta_1)}\varepsilon^{-q_0+m}(1-\rho)^{\Delta^+(h)})\\
+\frac{4}{\kappa}C_mC\rho^{\Delta^+(\theta_1)}\varepsilon^{-q_0+m}(1-\rho)^{\Delta^+(h)}\int_T^{-\log(\varepsilon/b)^{\kappa/4}}e^{(-q_0+m-\lambda_0)4t/\kappa}\,{\rm d}t\int_0^1w^{(\alpha,\beta)}(\sigma)\,{\rm d}\sigma,\end{multline}
where $C_m$ is some constant.  Now with $-q_0+m-\lambda_0>0$ for the first time in this sequence of $m$ estimations, the integrand, which is $O(\varepsilon^{q_0-m+\lambda_0})$, grows without bound as $t\rightarrow\infty$.  Hence, both of the first and third terms on the right side of (\ref{1stdefintestlast}) dominate as $\varepsilon\downarrow0$ because $-q_0+m > \lambda_0$, so with $\mathcal{K}\subset\subset U_m$, we have
\be\label{K}\sup_{\mathcal{K}}|\mathsf{F}(\boldsymbol{\xi};x,\rho,\varepsilon)|=O(\rho^{\Delta^+(\theta_1)}\varepsilon^{\lambda_0}(1-\rho)^{\Delta^+(h)}).\ee
Thus, we have $q_0\geq-\lambda_0$ in (\ref{U0}).  Now, from $p_0=q_0+\Delta^+(\theta_1)+\Delta^+(h)$ (\ref{U0}) and the expression for the zeroth eigenvalue (\ref{lambda}) with the identity $\Delta^+(\theta_1)=2/\kappa$ (\ref{kpzeq}),
\be\lambda_0=\Delta^+(h)+\Delta^+(\theta_1)+\frac{\kappa}{2}\Delta^+(h)\Delta^+(\theta_1)=2\Delta^+(h)+\Delta^+(\theta_1),\ee
we find that $p_0\geq-\Delta^+(h)$.  Finally, after inserting $p_0=-\Delta^+(h)$ in (\ref{initialest}), we find (\ref{closeest}) as desired.
\end{proof}

\section{Proof of lemma \ref{alltwoleglem}}\label{theproof}

In this section, we present the proof of lemma \ref{alltwoleglem}, the purpose of this article.  This proof incorporates the results of lemmas \ref{kpzlem}--\ref{farlem} and \ref{closelem}, and it closely follows the reasoning outlined in section \ref{Methodology} and at the beginning of section \ref{twointervals}.

\alltwoleglem*
\begin{proof} We assume that $F_1:=F$ is not zero and prove the lemma by contradiction.  Now, $F_1$ has two basic properties that we use in this proof:
\begin{enumerate}[I.]
\item\label{F1item1} As an element of $\mathcal{S}_N\setminus\{0\}$, $F_1$ satisfies conditions \ref{cond1} and \ref{cond2} stated in lemma \ref{firstlimitlem} with $M=\iota=2N$ and $h=\theta_1$.  
\item\label{F1item2} The following limits are guaranteed to exist by lemma \ref{firstlimitlem} and also vanish by hypothesis, i.e.,
\be\label{thelimits}\lim_{x_j\rightarrow x_{j-1}}(x_j-x_{j-1})^{-\Delta^-(\theta_1)}F_1(\boldsymbol{x})=0,\quad  j\in\{3,4,\ldots,2N\},\quad\boldsymbol{x}\in\Omega_0^{2N}.\ee
\end{enumerate}
Next, we gather some facts concerning the alternative limit
\be\label{F2lim} (F_2\circ\pi_{2N})(\boldsymbol{x})\,\,\,:=\lim_{x_{2N}\rightarrow x_{2N-1}}(x_{2N}-x_{2N-1})^{-\Delta^+(\theta_1)}F_1(\boldsymbol{x}),\quad\boldsymbol{x}\in\Omega_0^{2N}.\ee
\begin{enumerate}
\item According to lemma \ref{secondlimitlem}, the limit (\ref{F2lim}) exists and is not zero.
\item According to lemma \ref{pdelem}, the limit (\ref{F2lim}) satisfies conditions \ref{cond1} and \ref{cond2} stated in lemma \ref{firstlimitlem} with $M=\iota=2N-1$ and $h=h^+=\Delta^+(\theta_1)+\theta_1+\theta_1=\theta_2$ (lemma \ref{kpzlem}).  
\item Therefore, according to lemma \ref{firstlimitlem}, the limits
\be\label{prethelimits}\lim_{x_j\rightarrow x_{j-1}}(x_j-x_{j-1})^{-\Delta^-(d)}(F_2\circ\pi_{2N})(\boldsymbol{x}),\quad d=\begin{cases}\theta_1, & j\in\{3,4,\ldots,2N-2\} \\ \theta_2, & j=2N-1\end{cases},\quad\boldsymbol{x}\in\Omega_0^{2N}\ee
exist.  Now we argue that these limits equal zero.
\item First, we argue that the limits (\ref{prethelimits}) with $j\in\{3,4,\ldots,2N-2\}$ equal zero.  Now, thanks to properties \ref{F1item1} and \ref{F1item2}, $F_1$ satisfies conditions \ref{nextcond1}--\ref{nextcond3} of lemma \ref{secondlimitlem}, with $M=\iota=2N$, $h=\theta_1$, and $i\in\{\iota,j\}$, for all $j\leq2N-2$.  Therefore, we may use the estimate (\ref{farest}) of lemma \ref{farlem} to find that whenever $x_j-x_{j-1}$ and $x_{2N}-x_{2N-1}$ are less than some sufficiently small $b>0$, 
\be (x_j-x_{j-1})^{-\Delta^-(\theta_1)}(x_{2N}-x_{2N-1})^{-\Delta^+(\theta_1)}|F_1(\boldsymbol{x})|\leq C(x_j-x_{j-1})^{\Delta^+(\theta_1)-\Delta^-(\theta_1)}\ee
for some constant $C$.  Then, after sending $x_{2N}\rightarrow x_{2N-1}$, we find
\be (x_j-x_{j-1})^{-\Delta^-(\theta_1)}|(F_2\circ\pi_{2N})(\boldsymbol{x})|\leq C(x_j-x_{j-1})^{\Delta^+(\theta_1)-\Delta^-(\theta_1)}.\ee
Because $\Delta^+(\theta_1)-\Delta^-(\theta_1) > 0$, after sending $x_j\rightarrow x_{j-1}$, we find that the limits (\ref{prethelimits}) with $j\in\{3,4,\ldots,2N-2\}$ equal zero.
\item Last, we argue that the limit (\ref{prethelimits}) is zero for $j=2N-1$.  Again, thanks to properties \ref{F1item1} and \ref{F1item2}, $F_1$ satisfies conditions \ref{nextcond1}--\ref{nextcond3} of lemma \ref{secondlimitlem}, with $M=\iota=2N$, $h=\theta_1$, and $i\in\{\iota-1,\iota\}$.  Therefore, we may use the estimate (\ref{closeest}) of lemma \ref{closelem} to find that whenever $x_{2N}-x_{2N-2}$ is less than some sufficiently small $b>0$,
\begin{multline} (x_{2N-1}-x_{2N-2})^{-\Delta^-(\theta_2)}(x_{2N}-x_{2N-1})^{-\Delta^+(\theta_1)}|F_1(\boldsymbol{x})|\\
\leq C(x_{2N-1}-x_{2N-2})^{\Delta^+(\theta_1)-\Delta^-(\theta_2)}(x_{2N}-x_{2N-2})^{\Delta^+(\theta_1)}\end{multline}
for some constant $C$.  Then, after sending $x_{2N}\rightarrow x_{2N-1}$, we find
\be(x_{2N-1}-x_{2N-2})^{-\Delta^-(\theta_2)}|(F_2\circ\pi_{2N})(\boldsymbol{x})|\leq C(x_{2N-1}-x_{2N-2})^{2\Delta^+(\theta_1)-\Delta^-(\theta_2)}.\ee
Because $2\Delta^+(\theta_1)-\Delta^-(\theta_2)>0$, after sending $x_{2N-1}\rightarrow x_{2N-2}$, we find that limit (\ref{prethelimits}) with $j=2N-1$ equals zero.
\end{enumerate}
We summarize what we have established about the limit $F_2:\Omega_0^{2N-1}\rightarrow\mathbb{R}$ (\ref{F2lim}):
\begin{enumerate}[I.]
\item\label{F2item1} $F_2$ is nonzero and satisfies conditions \ref{cond1} and \ref{cond2} stated in lemma \ref{firstlimitlem} with $M=\iota=2N-1$ and $h=\theta_2$.  Stated in terms of CFT, we interpret these properties to mean that $F_2$ is a correlation function of $2N-2$ one-leg boundary operators at $x_1,$ $x_2,\ldots,x_{2N-2}$ and a single two-leg boundary operator at $x_{2N-1}$.
\item\label{F2item2} The following limits vanish:
\be\label{newtwoleg}\lim_{x_j\rightarrow x_{j-1}}(x_j-x_{j-1})^{-\Delta^-(d)}F_2(\boldsymbol{x})=0,\quad d=\begin{cases}\theta_1, & j\in\{3,4,\ldots,2N-2\} \\ \theta_2, & j=2N-1\end{cases},\quad\boldsymbol{x}\in\Omega_0^{2N-1}.\ee
\end{enumerate}

Properties \ref{F2item1} and \ref{F2item2} of $F_2$ are respectively analogous with properties \ref{F1item1} and \ref{F1item2} of $F_1$.  Therefore, we naturally expect that we may pass from $F_2$ to some function $F_3$ with properties analogous to those of $F_2$, then from $F_3$ to some function $F_4$ with properties analogous to those of $F_3$, and so on, until we pass from $F_{2N-2}$ to $F_{2N-1}$, as section \ref{Methodology} describes.  Each passage from, say, $F_s$ to $F_{s+1}$ simplifies the picture by removing another point among $x_1,$ $x_2,\ldots,x_{2N}$ from the system.  If we reach it, then the ultimate function $F_{2N-1}$ depends on only the two coordinates $x_1$ and $x_2$, and understanding this function may be considerably easier than understanding the original function $F_1$.

Now we construct the collection of mentioned functions $S=\{F_3,F_4,\ldots,F_{2N-1}\}$.  In particular, we show only the $(s+1)$th step of the construction in which we build $F_{s+2}$ from $F_{s+1}$, so by proceeding sequentially down the list of indices $s\in\{1,2,\ldots,2N-1\}$ starting with $s=1$, we construct all of the functions in $S$.  Now, $F_s:\Omega_0^{2N-s+1}\rightarrow\mathbb{R}$ has the following two properties: 
\begin{enumerate}[I.]
\item\label{Fsitem1} $F_s$ is nonzero and satisfies conditions \ref{cond1} and \ref{cond2} stated in lemma \ref{firstlimitlem} with $M=\iota=2N-s+1$ and $h=\theta_s$.  Stated in terms of CFT, we interpret these properties to mean that $F_s$ is a correlation function of $2N-s$ one-leg boundary operators at $x_1,$ $x_2,\ldots,x_{2N-s}$ and a single $s$-leg boundary operator at $x_{2N-s+1}$.
\item\label{Fsitem2} The following limits vanish:
\be\label{snewtwoleg}\lim_{x_j\rightarrow x_{j-1}}(x_j-x_{j-1})^{-\Delta^-(d)}F_s(\boldsymbol{x})=0,\quad d=\begin{cases}\theta_1, & j\in\{3,4,\ldots,2N-s\} \\ \theta_s, & j=2N-s+1\end{cases},\quad\boldsymbol{x}\in\Omega_0^{2N-s+1}.\ee
\end{enumerate}
Next, we gather some facts concerning the alternative limit
\be\label{sF2lim} (F_{s+1}\circ\pi_{2N-s+1})(\boldsymbol{x})\,\,\,:=\lim_{\substack{x_{2N-s+1}\\ \rightarrow x_{2N-s}}}(x_{2N-s+1}-x_{2N-s})^{-\Delta^+(\theta_s)}F_s(\boldsymbol{x}),\quad\boldsymbol{x}\in\Omega_0^{2N-s+1}.\ee
\begin{enumerate}
\item According to lemma \ref{secondlimitlem}, the limit (\ref{sF2lim}) exists and is not zero.
\item According to lemma \ref{pdelem}, the limit (\ref{sF2lim}) satisfies conditions \ref{cond1} and \ref{cond2} stated in lemma \ref{firstlimitlem} with $M=\iota=2N-s$ and $h=h^+=\Delta^+(\theta_s)+\theta_1+\theta_s=\theta_{s+1}$ (lemma \ref{kpzlem}).  
\item Therefore, according to lemma \ref{firstlimitlem}, the limits
\be\label{sthelimits}\lim_{x_j\rightarrow x_{j-1}}(x_j-x_{j-1})^{-\Delta^-(d)}(F_{s+1}\circ\pi_{2N-s+1})(\boldsymbol{x}),\quad d=\begin{cases}\theta_1, & j\in\{3,4,\ldots,2N-s-1\} \\ \theta_{s+1}, & j=2N-s\end{cases},\quad\boldsymbol{x}\in\Omega_0^{2N-s+1}.\ee
exist.  Now we argue that these limits equal zero.
\item First, we argue that the limits (\ref{sthelimits}) with $j\in\{3,4,\ldots,2N-s-1\}$ equal zero.  Now, thanks to properties \ref{Fsitem1} and \ref{Fsitem2} of $F_s$, $F_s$ satisfies conditions \ref{nextcond1}--\ref{nextcond3} of lemma \ref{secondlimitlem}, with $M=\iota=2N-s+1$, $h=\theta_s$, and $i\in\{\iota,j\}$, for all $j\leq2N-s-1$.  Therefore, we may use the estimate (\ref{farest}) of lemma \ref{farlem} to find that whenever $x_j-x_{j-1}$ and $x_{2N-s+1}-x_{2N-s}$ are less than some sufficiently small $b>0$, 
\be (x_j-x_{j-1})^{-\Delta^-(\theta_1)}(x_{2N-s+1}-x_{2N-s})^{-\Delta^+(\theta_s)}|F_s(\boldsymbol{x})|\leq C(x_j-x_{j-1})^{\Delta^+(\theta_1)-\Delta^-(\theta_1)}\ee
for some constant $C$.  Then, after sending $x_{2N-s+1}\rightarrow x_{2N-s}$, we find
\be (x_j-x_{j-1})^{-\Delta^-(\theta_1)}|(F_{s+1}\circ\pi_{2N-s+1})(\boldsymbol{x})|\leq C(x_j-x_{j-1})^{\Delta^+(\theta_1)-\Delta^-(\theta_1)}.\ee
Because $\Delta^+(\theta_1)-\Delta^-(\theta_1) > 0$, after sending $x_j\rightarrow x_{j-1}$, we find that the limits (\ref{sthelimits}) with $j\in\{3,4,\ldots,2N-s-1\}$ equal zero.
\item Finally, we argue that the limit (\ref{sthelimits}) is zero for $j=2N-s$.  Again, thanks to properties \ref{Fsitem1} and \ref{Fsitem2}, $F_s$ satisfies conditions \ref{nextcond1}--\ref{nextcond3} of lemma \ref{secondlimitlem}, with $M=\iota=2N-s+1$, $h=\theta_s$, and $i\in\{\iota-1,\iota\}$.  Therefore, we may use the estimate (\ref{closeest}) of lemma \ref{closelem} to find that whenever $x_{2N-s+1}-x_{2N-s-1}$ is less than some sufficiently small $b>0$,
\begin{multline} (x_{2N-s}-x_{2N-s-1})^{-\Delta^-(\theta_{s+1})}(x_{2N-s+1}-x_{2N-s})^{-\Delta^+(\theta_s)}|F_s(\boldsymbol{x})|\\
\leq C(x_{2N-s}-x_{2N-s-1})^{\Delta^+(\theta_1)-\Delta^-(\theta_{s+1})}(x_{2N-s+1}-x_{2N-s-1})^{\Delta^+(\theta_s)}\end{multline}
for some constant $C$.  Then, after sending $x_{2N-s+1}\rightarrow x_{2N-s}$, we find
\be(x_{2N-s}-x_{2N-s-1})^{-\Delta^-(\theta_{s+1})}|(F_{s+1}\circ\pi_{2N-s+1})(\boldsymbol{x})|\leq C(x_{2N-s}-x_{2N-s-1})^{\Delta^+(\theta_1)+\Delta^+(\theta_s)-\Delta^-(\theta_{s+1})}.\ee
Finally, because $\Delta^+(\theta_1)+\Delta^+(\theta_s)-\Delta^-(\theta_{s+1})>0$, after sending $x_{2N-s}\rightarrow x_{2N-s-1}$, we find that the limit (\ref{sthelimits}) with $j=2N-s$ equals zero.
\end{enumerate} 
We may summarize these properties of the limit $F_{s+1}:\Omega_0^{2N-s}\rightarrow\mathbb{R}$ (\ref{sF2lim}) by replacing $s$ with $s+1$ in the properties \ref{Fsitem1} and \ref{Fsitem2} for $F_s$ above.  Thus, we may construct $F_{s+2}$ from $F_{s+1}$, and so on.

Now we consider $s=2N-1$.  Here, property \ref{Fsitem1} implies that $F_{2N-1}$, a function of $(x_1,x_2)\in\Omega_0^2$, is both not zero and solves the system of conformal Ward identities (\ref{wardidh}) with $M=\iota=2$ and $h=\theta_{2N-1}\neq\theta_1$.  However, it is easy to show that zero is the only solution to this system.  Thus, the premise that $F_1:=F$ is not zero leads to a contradiction.
\end{proof}

\section{Summary}

In this article and \cite{florkleb,florkleb3,florkleb4}, we study a solution space $\mathcal{S}_N$ for a system of $2N$ null-state PDEs (\ref{nullstate}) and three conformal Ward identities (\ref{wardid}) governing conformal field theory (CFT) correlation functions of $2N$ one-leg boundary operators ($\phi_{1,2}$ or $\phi_{2,1}$).  In \cite{florkleb}, we prove that $\dim\mathcal{S}_N\leq C_N$, with $C_N$ the $N$th Catalan number, assuming lemma \red{14} of \cite{florkleb} (restated as lemma \ref{alltwoleglem} here) is true.  This lemma asserts that if all intervals $(x_2,x_3),$ $(x_3,x_4),\ldots,(x_{2N-2},x_{2N-1})$, and $(x_{2N-1},x_{2N})$ are ``two-leg intervals" of a element $F(\boldsymbol{x})$ of $\mathcal{S}_N$ ($x_i$ is the $i$th coordinate of $\boldsymbol{x}$, and $x_i<x_j$ if $i<j$), then $F(\boldsymbol{x})$ is zero.  (In terms of CFT, if $(x_{i-1},x_i)$ is a two-leg interval of $F$, then only the two-leg boundary operator ($\phi_{1,3}$ or $\phi_{3,1}$) appears in the operator product expansion (OPE) of the pair of one-leg boundary operators at $x_{i-1}$ and $x_i$.)  

The purpose of this article is to prove lemma  \ref{alltwoleglem}, and in non-rigorous CFT terms, the proof goes as follows.  Supposing that a solution $F$ is not zero, if all of the above intervals are two-leg intervals of $F$, then successively shrinking neighboring intervals produces successively higher multi-leg boundary operators, until we have one interval, a one-leg boundary operator at its left endpoint, and a $(2N-1)$-leg boundary operator at its right endpoint.  Conformal covariance implies that the two-point correlation function of these two primary operators necessarily vanishes.  On the other hand, we rigorously prove that if $F$ is not zero, then this two-point function cannot vanish, a contradiction.

The proof of lemma \ref{alltwoleglem} also justifies some facts about $M$-point CFT correlation functions $F$ containing $M-1$ one-leg boundary operators and one  primary boundary operator of conformal weight $h$.  Lemma \ref{firstlimitlem} states that an appropriate limit (\ref{hlim-}) of $F(\boldsymbol{x})$ as $x_i\rightarrow x_{i-1}$ is approached uniformly in the locations of the other coordinates of $\boldsymbol{x}$ (bounded away from each other).  In CFT, this limit corresponds to the leading term of the conformal family with smaller weight in the OPE of the two boundary primary operators at $x_{i-1}$ and $x_i$ respectively.  One of these operators is a one-leg boundary operator.  Lemma \ref{secondlimitlem} and \ref{pdelem} show that if this limit is zero, then a different limit (\ref{hlim+}) of $F(\boldsymbol{x})$ is approached with the same uniformness, and it corresponds to the leading term of the conformal family with larger conformal weight in the mentioned OPE and satisfies null-state PDEs and conformal Ward identities.  Finally, lemmas \ref{farlem}, \ref{firstcloselem}, and \ref{closelem} give uniform estimates on the behavior of $F$ as the lengths of two of its intervals go to zero simultaneously.

In this article, we use a certain technique for obtaining estimates of $F$ as the lengths of one or two of its intervals vanish.  In lemmas \ref{firstlimitlem} and \ref{secondlimitlem}, only one interval's length vanishes.  Here, we use a Green function with two variables to express the null-state PDE centered on one of the endpoints as an integral equation.  This equation contains derivatives of $F$ with respect to variables not involved with the Green function.  We use the other null-state PDEs and conformal Ward identities to construct an elliptic PDE for $F$, from which Schauder interior estimates that bound these derivatives by $F$ itself follow. Then repeated integration of the integral equation with these estimates improves bounds on the growth of $F$ until we reach an optimal bound.  In lemma \ref{farlem},  two non-adjacent intervals' lengths simultaneously vanish.  Here, we execute the method of the case with one vanishing interval length twice.  Finally, in lemmas \ref{firstcloselem} and \ref{closelem}, two adjacent intervals' lengths simultaneously vanish.  Here, we use a Green function with four variables to express the null-state PDE centered on one of the endpoints as an integral equation and bound the growth of $F$ as these two interval lengths simultaneously vanish.  The Green function of this last case factors into power functions multiplying the Jacobi heat kernel \cite{nowsj1}.  We use recent estimates of this kernel \cite{nowsj2} to complete our proof of lemma \ref{closelem}.  

\section{Acknowledgements}

We thank J.\ Rauch and V.\ Elling for helpful conversations concerning lemma \ref{alltwoleglem} and C.\ Townley Flores for proofreading the manuscript.  This work was supported by National Science Foundation Grant No.\ PHY-0855335 (SMF).

\newpage

\end{document}